\theoremstyle{definition}
\newtheorem{example}{Example}
\newtheorem{axiom}{Axiom} 
\newtheorem{assumption}{Assumption}
\theoremstyle{plain}
\newtheorem{theorem}{Theorem}
\newtheorem{lemma}{Lemma}
\newtheorem{proposition}{Proposition}
\newtheorem{corollary}{Corollary}
\theoremstyle{remark}
\newtheorem{remark}{Remark}
\def\laweq{\buildrel \mathrm{d} \over =}
\def\ac{\buildrel \mathrm{ac} \over \sim}
\theoremstyle{definition}
\def\N{\mathbb{N}}
\def\p{\mathbb{P}}
\def\E{\mathbb{E}}
\def\R{\mathbb{R}}
\def\A{\mathcal{A}}
\def\M{\mathcal{M}}
\def\C{\mathcal{C}}
\def\X{\mathcal{X}}
\def\X{\mathcal{X}}
\def\d{\mathrm{d}}
\DeclareMathOperator*{\esssup}{ess\text{-}sup}
\DeclareMathOperator*{\essinf}{ess\text{-}inf}
\def\Q{\mathcal{Q}}
\newcommand{\ex}{\mathrm{Ex}}
\def\id{\mathds{1}}
\title{Disappointment concordance  and duet expectiles}
\author{Fabio Bellini\thanks{Department of Statistics and Quantitative Methods, University of Milano-Bicocca, Italy. \Letter~{\scriptsize\url{fabio.bellini@unimib.it}}}
\and Tiantian Mao\thanks{Department of Finance and Statistics, University of Science and Technology of China, China.  \Letter~{\scriptsize\url{tmao@ustc.edu.cn}}}
	\and
	Ruodu Wang\thanks{Department of Statistics and Actuarial Science, University of Waterloo,  Canada. \Letter~{\scriptsize\url{wang@uwaterloo.ca}}}
	\and
	Qinyu Wu\thanks{Department of Statistics and Actuarial Science, University of Waterloo,  Canada. \Letter~{\scriptsize\url{q35wu@uwaterloo.ca}}}
}
\date{\today}
\pgfplotsset{compat=1.18}
\begin{document}
	\maketitle
	\begin{abstract}  
We introduce an axiom of disappointment-concordance (disco) aversion for a preference relation over acts in an Anscombe-Aumann setting. This axiom means that the decision maker, facing the sum of two acts, dislikes the situation where both acts realize simultaneously as disappointments. 
Our main result is that, under strict monotonicity and continuity, the axiom of disco aversion characterizes preference relations represented by a new class of functionals belonging to the Gilboa-Schmeidler family, which we call the duet expectiled utilities.
When the outcome space is the real line, a duet expectiled utility 
becomes a duet expectile, which involves two endogenous probability measures. It further  becomes a usual expectile, a statistical quantity popular in regression and risk measures, when these two probability measures coincide. We discuss properties of duet expectiles and connections with fundamental concepts including probabilistic sophistication, risk aversion, and uncertainty aversion.

\textbf{Keywords}: Disappointment aversion, probabilistic sophistication, risk aversion, expectiles, ambiguity.
\end{abstract}

	\noindent\rule{\textwidth}{0.5pt}
	

\section{Introduction}

Fix a measurable space $(\Omega,\mathcal F)$. Acts are defined as real-valued bounded measurable functions on $(\Omega,\mathcal F)$, representing monetary payoffs, and the space of acts is denoted by $\X$. 
A decision maker has a preference relation over $\mathcal X$, 
which is a weak order $\succsim$  (the associated $\succ$ and $\sim$ are defined as usual). 
Constant acts are identified with their values. 
While we work  with real outcomes in the main part of the paper, the case of more general outcomes as in the \cite{AA63} framework will be considered in Section \ref{sec:AA} with no additional challenge.\footnote{In order to clarify the exposition, we separate the case of real outcomes (which, as we will see, is sufficient for axiomatizing the notion of duet expectile) to the case of general outcomes, which will involve a Bernoullian utility as in \cite{CGMMS11}.}

We begin by introducing the notion of disappointment event of an act $X \in \X$, defined by  
$$D_X =\{\omega\in \Omega: X(\omega)\prec X\}.$$
The disappointment event of 
$X$ is the set of states of the world in which the decision maker is worse off after the realization of 
$X$ compared to the ex-ante situation, and this can be seen as disappointing.
Clearly, the disappointment set varies with $X$ and is subjective since it is defined in terms of preferences. 
If a unique certainty equivalent 
$c_X$ of $X$ exists and $\succsim$ is increasing, then $$D_X=\{\omega\in \Omega:X(\omega) \prec c_X\}=\{\omega\in \Omega:X(\omega) < c_X\},$$
and hence the disappointment event is simply accounted relatively to the certainty equivalent.
 {We borrow the term \emph{disappointment} from \cite{G91},
who introduced the notion of elation-disappointment decomposition of a lottery, given by its decomposition in a mixture of two lotteries, the first with prizes preferred to the original lottery (elation part) and the second with prizes not preferred to the original lottery (disappointment part). Although based on the same conceptual principle, a slight difference to our definition is that Gul formulated the disappointment set as a subset of outcomes, while our disappointment event is a set of states of the world. We will discuss throughout the paper the connection with Gul's theory in detail. 
}

We move to the introduction of one of the main notions of the paper. We say that two acts are   disappointment-concordant (shortly, disco)  if their disappointment sets coincide, that is,
$$
X,Y \in \mathcal X \text { are  disco  if }  D_X=D_Y.
$$
 The interpretation is straightforward: Disco acts are pairs of acts whose outcomes are disappointing in the same states of the world. Notice that as a consequence, the outcomes are also elating in the same states of the world, that is on the complement of the disappointment event.\footnote{We could equivalently define the notion of disco acts starting from the notion of elation event, and also, as it will become clear in the following, both sets could be equivalently defined by using non-strict preferences.} It may be  interesting to compare the notion of disco acts with the notion of comonotonic acts, which is central in the axiomatization of Choquet-expected utility given in \cite{S89}. For real-valued acts, comonotonicity is an objective property, which can be defined without any reference to preferences. On the contrary, the notion of disco acts is based on disappointment that is subjective by definition.

We can now present the main novel axiom of the paper, which we call disco aversion. 
 \begin{equation}
 \label{eq:CAintro}
      \mbox{$X$ and $Y$ are disco and  $Y' \sim Y $} \implies X+Y' \succsim X+Y. 
 \end{equation}
{Disco aversion prescribes that in a sum of acts, the case in which $X$ and $Y$ are disco is least favoured among all $Y'$ indifferent to $Y$. The basic idea is that when combining acts in a sum, as for example in the construction of a portfolio of financial payoffs, the decision maker dislikes the situation in which disappointments may happen together, and instead prefers any situation in which a disappointment on $X$ can be compensated by an elation on $Y'$ or vice versa. From a financial point of view,} disco aversion reflects the natural principle that pooling random payoffs together may reduce the total risk associated with these payoffs by an effect of diversification, pioneered by \cite{M52}, but not all forms of pooling are equally desirable.
Pooling  payoffs that incur in losses simultaneously amplifies the severity of the loss scenario,
making the adverse situation even more dire than without pooling.
An example of such a situation is  the excessive use of leverage in a financial market, the opposite of hedging. 
Such a pooling situation is considered dangerous in risk management, and recognized by financial regulators in their regulatory documents (e.g., \cite{BASEL19}). 
The axiom of disco aversion describes a decision principle against pooling risks whose loss events occur together, given other things equal. 
Although $Y'\sim Y$, 
the decision maker with an existing payoff $X$ prefers $Y'$ over $Y$, because of the dependence between $X$ and $Y$.
Therefore, disco aversion can be viewed as an attitude towards dependence. Attitudes towards dependence are closely related to attitudes towards risk, as recently studied by \cite{MMWW23}; however, dependence concepts studied there are defined by their joint distributions (thus objective), whereas our concept is specific to the preference relation itself (thus subjective).

 {Consistently with this discussion, it will turn out in Theorem \ref{th:riskaverse1} that under mild additional axioms disco aversion implies preference for mixing  (often called simply convexity), i.e. 
 \begin{align*}
 \label{eq:UAintro}
      \mbox{$X \sim Y$} \implies \alpha X +(1-\alpha) Y \succsim X,
 \end{align*}
that is the fundamental axiom for max-min expected utility given in \cite{GS89}. Indeed, it will turn out that our preferences belong to the Gilboa-Schmeidler class and their representation as worst-case expected utilities have explicit formulas given in Subsection \ref{subsec:GS}.
}

 {Our notion of disco aversion can be contrasted with the notion of disappointment aversion in \cite{G91}. Gul's theory aims at providing a minimal extension of the expected utility theory that is able to accommodate the Allais paradox. In brief, its main axiom is a weakening of the classic independence axiom, that is required only for lotteries that have a common  elation-disappointment decomposition, and the resulting numerical representation of preferences is of the implicit utility type and has an additional disappointment aversion parameter. In contrast with our setting, Gul's theory is not directly based on a disappointment aversion axiom, but instead the attitude toward disappointment emerges as a consequence, in the sense that in the numerical representation of Gul's preferences expected utilities of disappointment and elation prizes are asymmetrically weighted. We refer to \cite{CDO20} for an interesting alternative explicit representation of Gul's preferences as infima or maxima of suitable certainty equivalents. In Sections \ref{subsec:Gul} and \ref{sec:AA}, we will show that Gul's preferences are special cases of preferences in our setting.}

It turns out that under mild additional conditions disco aversion is sufficient to pin down a new form of representing functional. Our main results, Theorems \ref{th-main} and \ref{th:main-general}, yield that a preference relation  satisfying disco aversion \eqref{eq:CAintro}, strict monotonicity and continuity (precise definitions are postponed to Section \ref{sec:axiom}) admits a numerical representation of the following type: 
 \begin{equation}
 \label{eq:REPintro}
   X \succsim Y \iff \ex_\alpha^{P,Q}(X) \geq \ex_\alpha^{P,Q}(Y)
 \end{equation}
where $\alpha \in (0,1/2]$, $P$ and $Q$ are two probability measures on $(\Omega, \mathcal{F})$, and 
$\ex_\alpha^{P,Q}(X)$ is the unique $y\in \R$ such that
 \begin{equation}
 \label{eq:DUETintro}
\alpha \E^P[(X-y)_+]=  (1-\alpha)\E^{Q}[(y-X)_+] .
 \end{equation}
Further, the constant $\alpha$ and the probability measures $P$ and $Q$ are uniquely determined by the preference relation on $\X$.  
 The functional $\ex_\alpha^{P,Q}$ is a generalization of the \emph{expectiles} introduced by \cite{NP87} and axiomatized by \cite{Z16} and \cite{DBBZ16} in the risk measure literature; see Section \ref{sec:solo} for more details. 
Since two probability measures are involved, we call the quantity $\ex_\alpha^{P,Q}(X)$ a \emph{duet expectile}; the usual (solo) expectiles arise in the case $P=Q$. 

{Remarkably, in contrast with the aforementioned axiomatizations of expectiles, here the probability measures $P$ and $Q$ are endogenously determined by the preferences. 
In the case of an infinite $\Omega$ we will assume in Theorem \ref{th:main-general} the existence of a reference measure $\p$ but its only role is the identification of the class of null sets that in turn are necessary for the definition of strict monotonicity. 
}

 The three axioms of disco aversion, strict monotonicity and continuity thus imply a novel type of probabilistic sophistication, in which the numerical representation of preferences depends on two probability measures $P$ and $Q$. For each act $X$, the probability measure $P$ is applied to elation states while the probability measure $Q$ is applied to disappointment states. The relative weight of the two resulting expectations is then controlled by the agent's endogenous parameter $\alpha$. Also here, it is suggestive to compare with the structure of Gul's preferences, where a different weight is given to expected utilities of elation and disappointment prizes. Gul's preferences are directly defined on lotteries, so by construction a unique probability measure is exogenously given. On the other hand, Gul's preferences indeed are based on a monetary utility, while in our framework for the moment the representing functional is, in a rough sense, a two-probability generalization of the usual expectation. We will incorporate utility functions into our theory in Section \ref{sec:AA}, by defining the preferences on a larger outcome set of compound acts, as in the Anscombe-Aumann (AA) framework.  

Before moving to the AA framework, a natural question arising: If there is a simple, preference-based additional condition sufficient for $Q=P$, that is for the 'collapsing' of duet expectiles on the usual expectiles. In Theorem \ref{th:p-ex} we show that under an additional axiom of event independence, used already in \cite{dF31} to axiomatize subjective probability, this is indeed the case, and thus we provide as a byproduct a novel axiomatization of the solo expectiles. 

In Theorem \ref{th:riskaverse1} we show  that preferences represented by duet expectiles satisfy weak risk aversion with respect to both probability measures $P$ and $Q$. It turns out that strong risk aversion does not hold  either under $P$ or under $Q$, but it holds in a joint sense, as shown in Proposition \ref{prop-jointRA}.

 As mentioned above, in Section \ref{sec:AA} we generalize our setting by considering acts $f \colon \Omega \to \mathcal{C}$, where $\mathcal{C}$ is a convex set of consequences, in the classical AA setting represented by lotteries on a set of prizes. Using the general techniques introduced by \cite{CGMMS11}, under the additional axioms of monotonicity and constant-independence, we find that preference over acts satisfies
\begin{equation*}
f\succsim g\iff\ex_{\alpha}^{P,Q}(u\left(  f\right)  )\geq
\ex_{\alpha}^{P,Q}(u\left(  g\right)  ),
\end{equation*}
where $u \colon \mathcal{C} \to \R$ is a cardinally unique affine utility function. 

 The paper is structured as follows. In Section \ref{sec:axiom} we introduce the main axioms, and Section \ref{sec:main} presents our main characterization results. In Section \ref{sec:solo} we discuss sufficient conditions for solo expectiles. In Section \ref{sec:discuss} we discuss several related issues, i.e., probabilistic sophistication with multiple probabilities, various notions of risk aversion and the interpretation of Gul's preferences as (solo) expectiled utilities.  In Section \ref{sec:AA} we extend the result to an AA framework and discuss duet expectiled utilities in the framework og \cite{GS89}, explicitly determining the set of relevant priors. 
 Section \ref{sec:proofth1} outlines the proofs of our main results, and highlights the technical novelty and challenges. 
 Section \ref{sec:concl} concludes the paper.   The appendices contain full proofs, additional results, and counter-examples.

\section{Setting and axioms}\label{sec:axiom}
Let $(\Omega,\mathcal F)$ be a  measurable space,
and $\mathcal X$ be a {convex} set of real measurable functions from $\Omega$ to $\R$, called acts or random variables, equipped with a norm $\Vert \cdot \Vert$.
That is, we consider the outcome space $\R$. 
The case of a more general outcome space as in the Anscombe--Aumann framework will be studied in Section \ref{sec:AA}.

Specifically, we will consider two cases:
Either $\Omega$ is finite with $n$ elements, where we assume that $n\ge 4$, in which case $\X$ can be identified with $\R^n$ with the maximum norm,
or $\Omega$ is infinite, in which case $\X=L^\infty(\Omega,\mathcal F,\p)$ is a set of all bounded acts under some fixed  reference atomless probability $\p$ endowed with the $L^\infty$-norm as usual.\footnote{A probability space is atomless if there exists a continuously distributed random variable (such as uniformly distributed ones) on this space, which is a minor assumption.} 
The assumption $n\ge 4$ in the finite case is necessary for some proofs\footnote{Specifically, the main characterization result (Theorem \ref{th-main}) does not hold for $n=3$, and Proposition \ref{prop-cv} in Appendix \ref{sec:properties} fail when $n=2$. All these will be explained with counter-examples in Appendix \ref{app:A}.} and harmless for the interpretation of our results.

A \emph{preference relation} is a weak order $\succsim$ on $\X$,\footnote{That is, for $X,Y,Z\in \X$, (a)
either $X\succsim Y$ or $Y\succsim X$; (b) $X\succsim Y$ and $Y\succsim Z$
imply $X\succsim Z$.} with asymmetric part $\succ$  and symmetric part $\sim$.
Constant acts in $\X$ are identified with real numbers when applying $\succsim$.
The order $\ge$ on $\R$ is naturally extended to $\X$ in the point-wise sense. 
Events like $\{\omega\in\Omega: X(\omega)<c\}$ will be written as  
$\{X<c\}$ for simplicity.
We start by introducing the following natural and simple axiom in the case that $\Omega$ is finite.

\renewcommand\theaxiom{SM}
\begin{axiom}[Strict monotonicity]
\label{ax:SM}
If  $X\ge Y$ and $X\neq Y$, then $X\succ Y$.
\end{axiom} 

For the case of an atomless probability space $(\Omega,\mathcal F,\p)$, Axiom \ref{ax:SM} has to be slightly modified. The role of $\p$ is simply to determine the null sets of $(\Omega,\mathcal F)$, and $\p$-almost surely equal acts are identical by definition of $L^\infty(\Omega, \mathcal{F}, \p)$. As a consequence, 
in this setting Axiom \ref{ax:SM} becomes 
$$
\mbox{If  $X\ge Y$ and $\p(X>Y)>0$ then $X\succ Y$.}
$$ 
Axiom \ref{ax:SM} is essential for the proof of all the main characterization results. We will give a counter-example in Appendix \ref{app:SMmain} to illustrate that the representation in Theorem \ref{th-main} does not hold under the weaker axiom of (non-strict) monotonicity, that is $X\ge Y$ implies $X\succsim Y$.

Another standard and natural axiom is continuity.  In this axiom, convergence is with respect to the maximum norm when $\Omega$ is finite, and with respect to the 
$L^\infty$-norm when $(\Omega,\mathcal F,\p)$ is an atomless probability space.

\renewcommand\theaxiom{C}
\begin{axiom}[Continuity]
\label{ax:C}
If $X_n \succsim Y_n$ for each $n\in \N$ and $X_n \to X$ and $Y_n \to Y$, then $X \succsim Y$.
 \end{axiom}  

Axioms \ref{ax:SM} and \ref{ax:C} have a few immediate and well-known consequences 
summarized in the following lemma.  






 
\begin{lemma}\label{lem:S}
For a preference relation $ \succsim $ on $\X$, 
Axioms \ref{ax:SM} and \ref{ax:C} imply
\begin{enumerate}[(i)]
\item[(i)] for each $X\in \X$, there exists a unique $c_X\in \R$ satisfying $X \sim c_X$;
\item[(ii)]   $X \succsim Y$ if and only if $ c_X\ge c_Y$.
\end{enumerate} 
\end{lemma}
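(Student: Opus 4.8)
The plan is to obtain the certainty equivalent in part~(i) by an infimum construction combined with Axiom~\ref{ax:C}, and then to read off uniqueness and part~(ii) directly from Axiom~\ref{ax:SM} together with the weak-order properties of $\succsim$.

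First I would record the immediate consequence of Axiom~\ref{ax:SM} that $X\ge Y$ implies $X\succsim Y$ (if $X=Y$ this is reflexivity of $\sim$; otherwise, i.e.\ $X\neq Y$, resp.\ $\p(X>Y)>0$, it is Axiom~\ref{ax:SM} verbatim). In particular, $a\succsim b$ for reals $a\ge b$ and $a\succ b$ for reals $a>b$. Since every $X\in\X$ is bounded, fix constants $m\le M$ with $m\le X\le M$ (a.s.\ in the atomless case), so that $M\succsim X\succsim m$.

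For existence, set $A=\{c\in\R : c\succsim X\}$. Then $M\in A$, while any $c<m$ satisfies $X\succ c$ and hence $c\notin A$; thus $A$ is nonempty and bounded below, and it is an up-set by transitivity. Put $c_X:=\inf A$, which is finite. Choosing constants $c_n\in A$ with $c_n\to c_X$ and applying Axiom~\ref{ax:C} to $c_n\succsim X$ (with the constant sequence $X$ on the right) gives $c_X\succsim X$. For the reverse comparison, every $c<c_X$ lies outside $A$, so by completeness of $\succsim$ one has $X\succ c$; taking $c_n:=c_X-1/n\to c_X$ and applying Axiom~\ref{ax:C} to $X\succsim c_n$ (with the constant sequence $X$ on the left) gives $X\succsim c_X$. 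Hence $X\sim c_X$. Uniqueness is then immediate: if $c\sim X\sim c'$ with, say, $c>c'$, then $c\succ c'$ by Axiom~\ref{ax:SM}, contradicting $c\sim c'$ obtained by transitivity and symmetry.

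Part~(ii) follows in the same vein: if $c_X\ge c_Y$ then $c_X\succsim c_Y$, so $X\sim c_X\succsim c_Y\sim Y$; conversely, $c_X<c_Y$ gives $X\sim c_X\prec c_Y\sim Y$, i.e.\ $Y\succ X$, so $X\succsim Y$ forces $c_X\ge c_Y$. I expect no real obstacle here; the only point requiring a little care is the two-sided use of Axiom~\ref{ax:C} at the single point $c_X$, namely that the approximating constants can be chosen inside $A$ (from above, by the infimum property) and outside $A$ (from below, by the infimum property together with completeness of $\succsim$), and — in the atomless case — that constant-sequence convergence is trivially in the $L^\infty$-norm and that the almost-sure form of Axiom~\ref{ax:SM} still strictly separates distinct constants.
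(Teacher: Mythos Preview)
Your proof is correct and follows essentially the same approach as the paper's: both use boundedness of $X$ to trap it between constants, then invoke Axiom~\ref{ax:C} and completeness of $\succsim$ to obtain a point of indifference, with uniqueness and part~(ii) read off from Axiom~\ref{ax:SM}. The only difference is expository---the paper asserts existence of $c_X$ in one line via an intermediate-value appeal, while you unpack this explicitly with the infimum construction and two-sided continuity argument, which is a standard and perfectly valid way to make that step precise.
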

The quantity $c_X$ 
is the certainty equivalent of $X$. Recalling the definition of disappointment event given in the Introduction, Lemma \ref{lem:S} shows that, under Axioms \ref{ax:SM} and \ref{ax:C}, it follows that
\begin{align*}
D_X=\{\omega\in\Omega: X(\omega)\prec X\}=\{X<c_X\},
\end{align*}
so the disappointment event of $X$ can be expressed in terms of its certainty equivalent. 

As anticipated in the Introduction, the main new axiom of the paper is \emph{disco aversion}, formally stated below.
Recall that two acts $X$ and $Y$ are 
\emph{disappointment-concordant},  or \emph{disco} for short, if 
$D_X=D_Y$. Under Axioms \ref{ax:SM} and \ref{ax:C} it follows immediately from Lemma \ref{lem:S} that $X$ and $Y$ are disco if and only if $\{X< c_X\}=\{Y<c_Y\}$.

   
\renewcommand\theaxiom{DA}
\begin{axiom}[Disco aversion] 
	\label{ax:DA} 
	If  $X$ and $Y$ are disco and $Y' \sim Y $, then $X+Y' \succsim X+Y$.
\end{axiom} 


The interpretation of disco acts and the motivation for disco aversion have been discussed in the Introduction.
Remarkably, disco aversion is defined for a decision maker who is  not necessarily probabilistically sophisticated in the sense of \cite{MS92}; nevertheless, disco aversion intuitively has some connections with the classic notion of risk aversion (\cite{RS70}). This connection will be revealed  by our main results and discussed in Section \ref{sec:discuss}.

We end the section by collecting further notation used in the rest of the paper. 
We denote by
$\mathcal M_{1}=\mathcal M_{1}(\Omega,\mathcal F)$ the set of all probability measures on $(\Omega,\mathcal F)$. For $A\in\mathcal F$, $\id_{A}$ is the indicator of event $A$.
For $P,Q\in\mathcal M_1$ and $\lambda_1,\lambda_2\ge 0$, $\lambda_1 P\le \lambda_2 Q$ means that $\lambda_1 P(S)\le \lambda_2 Q(S)$ for all $S\in\mathcal F$. The notation $P\ac Q$ means that $P$ and $Q$ are mutually absolutely continuous.
For $P\in\mathcal M_1$ and $X\in\X$, we write $\E^P[X]$ for the expected value of $X$ under $P$.
We write $[n]=\{1,\dots,n\}$ for  $n\in \N$,
 $x_+=\max\{x,0\}$ and 
 $x_-=\max\{-x,0\}$.
 
\section{Representation results}
\label{sec:main}

 \subsection{The case of a finite space}
 
We will first focus on the case of a finite sample space, 
and the case of an atomless space will be studied in Section \ref{sec:general}. 
Let $\Omega=[n]$ with $n\ge 4$, $\mathcal F$ be the power set of $\Omega$,  and  $\X$ be the set of all real random variables on $(\Omega,\mathcal F)$.  
The choice of the norm $\Vert \cdot \Vert$  on $\X$ does not matter in this setting, and as anticipated for simplicity we will take it as the maximum norm.
For a measure $P \in \M$, $P>0$ means that $P(i)>0$ for each $i=1, \dots, n$. 
  
\begin{theorem}\label{th-main}
Let $\Omega= \{1, \dots, n\} $ with $n\ge 4$.
For a   preference  relation $ \succsim$ on $\X$, Axioms \ref{ax:SM}, \ref{ax:C} and \ref{ax:DA} hold if and only if there exist $P,Q\in\mathcal M_1$ and $\alpha \in (0,1/2]$ with $0<\alpha P\le (1-\alpha)Q$
such that  
\begin{align*}
X \succsim Y~ \Longleftrightarrow ~ \ex_\alpha^{P,Q}(X) \ge \ex_\alpha^{P,Q}(Y), 
\end{align*}
where $\ex_\alpha^{P,Q}(X)$ is the unique number $y\in \R$ such that
\begin{align}
\label{eq:def-ex}
	  \alpha \E^P[(X-y)_+]=  (1-\alpha)\E^{Q}[(y-X)_+] .
\end{align}
Further, $P,Q$ and $\alpha$ are uniquely determined.
\end{theorem}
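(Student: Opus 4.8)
\emph{Sufficiency.} Suppose $\succsim$ is represented by $\ex_\alpha^{P,Q}$ with $\alpha\in(0,1/2]$ and $0<\alpha P\le(1-\alpha)Q$. First I would check that $\ex_\alpha^{P,Q}(X)$ is well defined: the map $y\mapsto \alpha\E^P[(X-y)_+]-(1-\alpha)\E^Q[(y-X)_+]$ is continuous, non-increasing, and strictly decreasing on $[\min_\Omega X,\max_\Omega X]$ with a sign change there, so it has a unique zero; rearranging \eqref{eq:def-ex} shows moreover that $\ex_\alpha^{P,Q}(X)=\E^{\mu_X}[X]$, where $\mu_X$ is the probability measure proportional to $\alpha P\,\id_{\{X\ge\ex_\alpha^{P,Q}(X)\}}+(1-\alpha)Q\,\id_{\{X<\ex_\alpha^{P,Q}(X)\}}$. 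Since $0<\alpha P\le(1-\alpha)Q$ forces $\mu_X$ to be strictly positive at every state, Axiom \ref{ax:SM} follows; Axiom \ref{ax:C} is routine from the implicit description of $\ex_\alpha^{P,Q}$. For Axiom \ref{ax:DA} the key auxiliary fact (which I would isolate as a lemma, essentially the convexity/dual-representation property of duet expectiles) is that $\ex_\alpha^{P,Q}(X)=\min_{\nu\in\mathcal M}\E^\nu[X]$ for a fixed convex set $\mathcal M\subseteq\mathcal M_1$ containing every $\mu_X$; here $\alpha\le 1/2$ and $\alpha P\le(1-\alpha)Q$ are exactly what make such an $\mathcal M$ exist. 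Granting this, if $X$ and $Y$ are disco they share the same disappointment event, hence $\mu_X=\mu_Y=:\mu$, and this single $\mu$ attains the minimum in the representation of both $\ex_\alpha^{P,Q}(X)$ and $\ex_\alpha^{P,Q}(Y)$; therefore $\ex_\alpha^{P,Q}(X+Y)=\E^\mu[X+Y]=\ex_\alpha^{P,Q}(X)+\ex_\alpha^{P,Q}(Y)$, whereas for any $Y'\sim Y$ (so $\ex_\alpha^{P,Q}(Y')=\ex_\alpha^{P,Q}(Y)$) superadditivity of a minimum over $\mathcal M$ gives $\ex_\alpha^{P,Q}(X+Y')\ge\ex_\alpha^{P,Q}(X)+\ex_\alpha^{P,Q}(Y')=\ex_\alpha^{P,Q}(X+Y)$, i.e. $X+Y'\succsim X+Y$.

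\emph{Necessity.} Assume Axioms \ref{ax:SM}, \ref{ax:C} and \ref{ax:DA}. By Lemma \ref{lem:S} the certainty equivalent $X\mapsto c_X$ is a continuous, strictly monotone representation of $\succsim$ with $c_{t\id}=t$ and $D_X=\{X<c_X\}$; note $D_X=\emptyset$ exactly for constant acts and $D_X=\Omega$ never occurs. The plan is to prove, for each event $A$ with $\emptyset\neq A\subsetneq\Omega$, that $c$ restricted to the ``disco class'' $\X_A:=\{X\in\X:D_X=A\}$ is affine: there is a probability vector $\mu_A$ on $\Omega$ with $c_X=\E^{\mu_A}[X]$ for every $X\in\X_A$, and $\mu_A$ is automatically strictly positive by Axiom \ref{ax:SM}. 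This is where disco aversion does the real work: given disco acts $X,Y\in\X_A$, Axiom \ref{ax:DA} applied in both roles — taking as the indifferent competitor either the other act, or an appropriate constant, or the zero act — pins $c_{X+Z}$ down to a single common value over all $Z\sim Y$ that are disco with $X$, and iterating shows $c$ is midpoint-additive along level sets of $c$ inside $\X_A$; Axiom \ref{ax:C} then upgrades this to affinity on each level set, and a scaling argument glues the levels. The honest difficulty here is that $\X_A$ need not be convex a priori and that neither constant-additivity nor positive homogeneity of $c$ is available at the outset — both must be bootstrapped from \ref{ax:DA}, \ref{ax:SM}, \ref{ax:C} — and this is precisely the place where the hypothesis $n\ge 4$ enters (the construction collapses for $n=3$).

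Once the family $\{\mu_A\}$ is in hand, the rest is a consistency/bookkeeping argument. Two neighbouring disco classes $\X_A$ and $\X_{A\cup\{i\}}$ share a common face inside a level set of $c$, on which the affine functions $\E^{\mu_A}[\cdot]$ and $\E^{\mu_{A\cup\{i\}}}[\cdot]$ agree; two affine functions agreeing on such a face can differ only through their $i$-th coordinate, so $\mu_A$ and $\mu_{A\cup\{i\}}$ have proportional restrictions to $\Omega\setminus\{i\}$. Chaining these relations over all events yields a probability measure $P$ on $\Omega$ recording the ``elation-side'' weight ratios, a probability measure $Q$ recording the ``disappointment-side'' ratios, and a single positive constant bridging the two sides, so that $\mu_A$ is proportional to $\alpha P\,\id_{A^c}+(1-\alpha)Q\,\id_A$ for a common triple; the constraints $\alpha\in(0,1/2]$ and $\alpha P\le(1-\alpha)Q$ drop out of Axiom \ref{ax:SM}, which forces a state's weight to weakly increase when it turns from elating to disappointing. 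Finally, for arbitrary $X$ put $A=D_X$ and $y=c_X$; substituting $\mu_A$ into $c_X=\E^{\mu_A}[X]$ and using $\{X<y\}=A$ rearranges precisely into $\alpha\E^P[(X-y)_+]=(1-\alpha)\E^Q[(y-X)_+]$, so $c_X=\ex_\alpha^{P,Q}(X)$, which is the asserted representation; uniqueness of $(\alpha,P,Q)$ is immediate because the whole reconstruction from the measures $\mu_A$ (themselves determined by $\succsim$) is canonical. I expect the affinity step of the second paragraph to be, by a wide margin, the main obstacle.
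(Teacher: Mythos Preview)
Your sufficiency argument is essentially the paper's: disco additivity plus superadditivity (from the dual representation, which exists exactly when $\alpha P\le(1-\alpha)Q$) gives Axiom \ref{ax:DA}. Fine.

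For necessity, your architecture matches the paper's --- linear representation $c_X=\E^{\mu_A}[X]$ on each disco class, then glue --- but you have the emphasis inverted and two attributions wrong.

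First, the paper does \emph{not} treat translation invariance and positive homogeneity as things to be wrung out of the affinity step; it proves them first, globally, directly from \ref{ax:DA}+\ref{ax:SM}+\ref{ax:C} (Lemma \ref{lm-property}). This is short and is the engine of the whole proof: once $U$ is translation invariant, positively homogeneous, and \emph{disco additive} ($U(X+Y)=U(X)+U(Y)$ whenever $D_X=D_Y$), each normalized class $\{X:D_X=A,\,c_X=0\}$ is a convex cone on which $U$ is additive and monotone, and linearity is then the standard Riesz-type extension (this is Lemma \ref{lm:EonS}). Your ``midpoint additivity on level sets, then glue levels by scaling'' is doing the same work in a harder order; prove Lemma \ref{lm-property} first.

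Second, you put $n\ge 4$ in the affinity step. It is not there: linearity on each disco class goes through for any $n\ge 2$. The dimension constraint enters only in the \emph{gluing} (Lemma \ref{lm-inP}). One normalizes the per-class measures at state $1$ and shows that the resulting families are consistent up to a single scalar $\lambda$; the key identity $P(i)/\widetilde P(i)=\widetilde Q(j)/Q(j)$ for $i\ne j$, $i,j\ge 2$ forces a common $\lambda$ only when there are at least three indices $\ge 2$, i.e., $n\ge 4$. The paper's $n=3$ counter-example is a failure of gluing, not of per-class affinity.

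Third, the constraint $\alpha P\le(1-\alpha)Q$ does not ``drop out of Axiom \ref{ax:SM}''. Axiom \ref{ax:SM} yields only $P,Q>0$ (Proposition \ref{prop:technical}(ii)). The ordering is equivalent to concavity of $\ex^{P,Q}$ (Proposition \ref{prop:technical}(iii)), and concavity comes from superadditivity and positive homogeneity in Lemma \ref{lm-property} --- that is, from \ref{ax:DA}. Your picture of weights increasing when a state becomes disappointing is correct, but the mechanism is disco aversion, not strict monotonicity.
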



A first remarkable point of Theorem \ref{th-main} is that from a rather parsimonious set of axioms it derives a novel form of probabilistic sophistication, based on two probability measures instead of a single one. 
To interpret the function $\ex_\alpha^{P,Q}$,
the probabilities $P$ and $Q$ represent how the decision maker subjectively evaluates the expected  gain relative to the certainty equivalent in  the elation event and the expected loss in the disappointment event, respectively. In other words, the decision maker can have different subjective probabilities for disappointment and elation events.  
Therefore, the quantity
$\ex_\alpha^{P,Q}(X)$ is a balancing point of disappointment, measured by its expected relative loss weighted by $1-\alpha$,  
and elation,
measured by its expected relative gain weighted by  $\alpha$. For this reason, $\alpha$ is called a disappointment weight parameter.  The condition $\alpha\in (0,1/2]$ reflects the fact that disappointment is weighted on average more than elation.
The condition $0<\alpha P\le (1-\alpha)Q$ further suggests that any state $\omega$ is subjectively weighted more when it belongs to a disappointment event, compared to when it belongs to an elation event. 

 {The quantity $\ex_\alpha^{P,Q}$ can be equivalently defined for any $\alpha \in (0,1)$ and $P,Q \in \M_1$ as
\begin{align}\label{eq-genex}
	\ex^{P,Q}_\alpha(X)=\inf\left\{y\in \R: \alpha\E^P[(X-y)_+]= (1-\alpha) \E^{Q}[(y-X)_+] \right \},~~~X\in \X.
\end{align} 
Indeed, the existence of at least one $y$ satisfying the equality in \eqref{eq-genex} 
follows from the continuity of 
$y\mapsto \E^P[(X-y)_+]$ and 
$y\mapsto \E^Q[(y-X)_+]$ and their limiting behaviour as $y\to \infty$ and $y\to -\infty$. 
The uniqueness of such a $y$ may fail for general $P$ and $Q$, 
so for definiteness the initial infimum has to be added. Nevertheless, under the axioms in  Theorem \ref{th-main}, which lead to the additional condition $0<\alpha P\le (1-\alpha)Q$, $y$ is indeed unique, so $\ex_\alpha^{P,Q}$ can be more simply defined as in \eqref{eq:def-ex}. 
If $Q=P$, the duet expectile $\ex_\alpha^{P,P}$ coincides with the usual expectile introduced by \cite{NP87}, denoted by $\ex_\alpha^P$. It will be called ``solo expectile" in the rest of the paper and discussed further in  Section \ref{sec:solo}.

Interestingly, as it will be shown  in Theorem \ref{th:riskaverse1} of Section \ref{sec:discuss}, the condition $\alpha P \leq (1-\alpha) Q$ turns out to be equivalent to a weak form of risk aversion with respect to ${P}$ and ${Q}$. If we flip the preference in Axiom \ref{ax:DA} by requiring that 
$$
\mbox{if  $X$ and $Y$ are disco and $Y' \sim Y $, then $X+Y' \precsim X+Y$},
$$
\noindent
that is if we replace the axiom of disco aversion with an axiom of \emph{disco seeking},
by the same proof of Theorem \ref{th-main} we get a characterization of 
$\ex_\alpha^{P,Q}$ with $\alpha P\ge (1-\alpha) Q>0$ and $\alpha \in [1/2,1)$.

 We give an overview of the proof  of Theorem \ref{th-main} in Section \ref{sec:proofth1} and the details are reported in Section \ref{subsubsec:Th1}. As far as we know, the proof techniques are very different from existing ones in the literature and involve new mathematical treatment of consistency across multiple probability measures.

\subsection{The case of a standard probability space}
\label{sec:general}

Next, we focus on the case in which $(\Omega,\mathcal F,\p)$ is a standard probability space, meaning that there exists a uniform random variable $V \colon \Omega \to (0,1)$ such that $\sigma(V)=\mathcal F$, where $\sigma(V)$ is the $\sigma$-algebra generated by $V$. Standard probability spaces are special cases of atomless probability spaces. The assumption of a standard probability space might be not necessary for the validity of Theorem \ref{th:main-general}, but it is required within the context of our current proof techniques.


To derive the main characterization result in a standard probability space, we introduce an additional continuity axiom as follows, which holds trivially if $\Omega$ is finite.

\renewcommand\theaxiom{WMC}
\begin{axiom}[Weak monotone continuity]
\label{ax:MC}
If $m\in\R$ and $\{A_n\}_{n\in\N}\subseteq \mathcal F$ with $A_1\supseteq A_2\supseteq\dots$ and $\bigcap_{n\in\N}A_n=\varnothing$, 
then there exists $n_0\in\N$ such that $m\id_{A_{n_0}}+\id_{A_{n_0}^c}\succ 0$.
 \end{axiom}  


Axiom \ref{ax:MC} is a weaker version of monotone continuity, where the latter was introduced by \cite{A70} to obtain a Subjective Expected Utility representation that utilizes a $\sigma$-additive subjective probability measure. Specifically, let $m$ and $\{A_n\}_{n\in\N}$ be defined in Axiom \ref{ax:MC}. Monotone continuity implies that if $X,Y\in\X$ and $X\succ Y$, then there exists $n_0\in\N$ such that $m\id_{A_{n_0}}+X\id_{A_{n_0}^c}\succ Y$. It is not hard to see that monotone continuity, when letting $X=1-Y=1$ and considering a preference relation satisfying $1\succ 0$, naturally leads to Axiom \ref{ax:MC}. Monotone continuity has been instrumental in restricting the set of priors in multiple priors models to $\sigma$-additive measures, as evidenced in the works of \cite{CMMT05} for $\alpha$-maximin expected utility preferences, \cite{MMR06} for variational preferences, and \cite{S11} for multiplier preferences.
Below we present our main characterization result in this subsection.

\begin{theorem}\label{th:main-general}
Let $(\Omega,\mathcal F,\p)$ be a standard probability space. 
For a preference relation $ \succsim$ on $\X$, 
Axioms \ref{ax:SM}, \ref{ax:C}, \ref{ax:MC} and \ref{ax:DA} hold if and only if there exist $P,Q\in\mathcal M_1$ and $\alpha\in(0,1/2]$ with $\alpha P\le (1-\alpha)Q$ and $P\ac Q\ac \p$
such that $\succsim$ is represented by $\ex_{\alpha}^{P,Q}$ defined as in \eqref{eq:def-ex}.
Further, $P,Q$ and $\alpha$ are uniquely determined.
\end{theorem}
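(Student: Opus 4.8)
The plan is to bootstrap the infinite-space result from the finite-space Theorem~\ref{th-main} together with the extra axiom \ref{ax:MC}, rather than redoing the whole axiomatization. First I would note that the "if" direction is a routine verification: given $P,Q,\alpha$ with $\alpha P\le(1-\alpha)Q$ and $P\ac Q\ac\p$, the functional $\ex_\alpha^{P,Q}$ is well-defined (uniqueness of $y$ in \eqref{eq:def-ex} follows from $\alpha P\le(1-\alpha)Q$, which makes $y\mapsto \alpha\E^P[(X-y)_+]-(1-\alpha)\E^Q[(y-X)_+]$ strictly decreasing wherever it matters), it is strictly monotone because $P\ac\p$, it is norm-continuous, it satisfies \ref{ax:MC} because $P,Q$ are $\sigma$-additive and dominated by $\p$, and it satisfies \ref{ax:DA} by the same computation that underlies the finite case (on a disco pair $X,Y$ the sign of $X-c_X$ and $Y-c_Y$ agree, so the expected-shortfall identity degrades under any $Y'\sim Y$ with a different dependence structure). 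I would state this as a lemma and refer to the finite-case verification, since nothing new happens.

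For the "only if" direction, the core idea is to restrict $\succsim$ to finite sub-$\sigma$-algebras and patch. Fix any finite measurable partition $\Pi=\{A_1,\dots,A_n\}$ of $\Omega$ with $n\ge 4$ and all $\p(A_i)>0$; the acts measurable with respect to $\sigma(\Pi)$ form a copy of $\R^n$, and $\succsim$ restricted there satisfies \ref{ax:SM}, \ref{ax:C}, \ref{ax:DA} (disco-ness is inherited because certainty equivalents are preference-defined and $\sigma(\Pi)$-measurable acts have $\sigma(\Pi)$-measurable disappointment events). Theorem~\ref{th-main} then yields $P_\Pi,Q_\Pi$ on $\sigma(\Pi)$ and $\alpha_\Pi\in(0,1/2]$, uniquely determined. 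The uniqueness is the key leverage: for a refinement $\Pi'\succeq\Pi$, the restriction to $\sigma(\Pi)$ of the representation on $\sigma(\Pi')$ must again represent $\succsim$ on $\sigma(\Pi)$ and must be of duet-expectile form (one checks $\ex_{\alpha}^{P',Q'}$ restricted to a sub-algebra is $\ex_{\alpha}^{P'|,Q'|}$ after renormalizing $P',Q'$ to the sub-algebra — here the condition $\alpha P\le(1-\alpha)Q$ is what guarantees the restricted functional is still a genuine duet expectile with the same $\alpha$), so by uniqueness $\alpha_{\Pi'}=\alpha_\Pi=:\alpha$ for all $\Pi$, and $P_{\Pi'},Q_{\Pi'}$ restrict to $P_\Pi,Q_\Pi$. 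Thus $\{P_\Pi\}$ and $\{Q_\Pi\}$ are consistent families of finitely additive set functions, defining finitely additive $P,Q$ on the algebra $\bigcup_\Pi\sigma(\Pi)$, with $\alpha P\le(1-\alpha)Q$ preserved in the limit and $P,Q$ mutually absolutely continuous with each other; strict monotonicity on each $\sigma(\Pi)$ forces $P_\Pi,Q_\Pi$ to charge every atom of $\Pi$, which at the algebra level gives $P\ac Q\ac\p$ in the finitely additive sense.

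The hard part will be two things. First, promoting the finitely additive $P,Q$ to countably additive measures on $\mathcal F$ and extending the representation from simple acts to all of $L^\infty$: this is exactly where \ref{ax:MC} enters, playing the role Arrow's monotone continuity plays in the classical arguments — given a vanishing sequence $A_n\downarrow\varnothing$, Axiom~\ref{ax:MC} applied to the two-valued acts $m\id_{A_{n_0}}+\id_{A_{n_0}^c}\succ 0$ forces, via the duet-expectile representation on $\sigma(A_{n_0})$ and letting $m\to-\infty$, that $P(A_{n_0})\to0$ and $Q(A_{n_0})\to0$; Carathéodory/Hahn extension then gives $\sigma$-additive $P,Q$ on $\mathcal F$, and a density (in $L^\infty$-norm) argument with simple acts plus Axiom~\ref{ax:C} extends $X\succsim Y\iff \ex_\alpha^{P,Q}(X)\ge\ex_\alpha^{P,Q}(Y)$ to all of $\X$. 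Second, establishing $P\ac\p$ (not merely absolute continuity in one direction on the algebra): $P\ll\p$ follows because $\p$-null sets are null in $L^\infty$ hence indistinguishable under $\succsim$, forcing the disappointment-event calculus to ignore them; $\p\ll P$ follows from \ref{ax:SM}, since a set with $\p(S)>0$ but $P(S)=Q(S)=0$ would let us perturb an act on $S$ without changing $\ex_\alpha^{P,Q}$, contradicting $X+\epsilon\id_S\succ X$. I expect the consistency/uniqueness bookkeeping across the net of partitions and the $\sigma$-additivity extraction from \ref{ax:MC} to be the genuinely delicate steps; everything else is verification.
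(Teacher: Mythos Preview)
Your proposal is correct and follows a genuinely different route from the paper's proof. Both arguments bootstrap from Theorem~\ref{th-main} on finite sub-$\sigma$-algebras, but the mechanisms for passing to the limit differ substantially. The paper fixes dyadic filtrations $\{\mathcal F_k^V\}_{k\in\N}$ generated by uniform random variables $V$, obtains $(P_k^V,Q_k^V)$ on each level, and then shows the density martingales $\d P_k^V/\d\p$, $\d Q_k^V/\d\p$ are uniformly integrable---this is where Axiom~\ref{ax:MC} enters, via a contradiction argument using Ville's maximal inequality---so that the martingale convergence theorem produces $P^V,Q^V$; a further step (using the standard-space hypothesis to find $V_0$ with $\sigma(V_0)=\mathcal F$) reconciles the measures across different $V$, and finally an $L^1$-Lipschitz estimate (Proposition~\ref{prop:technical}(i)) plus a sandwich argument extends the representation to all of $\X$.

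Your route---working over the full net of finite partitions, gluing via uniqueness to a finitely additive $(P,Q)$ on $\mathcal F$, and then extracting $\sigma$-additivity directly from \ref{ax:MC} by reading off $(1-\alpha)Q(A_n)<\alpha P(A_n^c)/(-m)$ and letting $m\to-\infty$---is more elementary: no martingale theory, no Ville inequality, and the extension to $L^\infty$ via norm-density of simple functions is shorter than the paper's $L^1$ sandwich. A further dividend is that your argument does not appear to use the \emph{standard} probability space assumption anywhere (only atomlessness, to refine partitions), which the paper explicitly flags as possibly superfluous but needed for its own proof technique. One cosmetic point: the Carath\'eodory/Hahn step is not actually needed, since every measurable set already sits in some partition with $\ge 4$ atoms of positive $\p$-mass, so your finitely additive $P,Q$ are defined on all of $\mathcal F$ from the outset; what \ref{ax:MC} buys you is purely the continuity at $\varnothing$.
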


\section{The solo expectiles}
\label{sec:solo}

Solo expectiles are a popular class of risk functions introduced in \cite{NP87} as an asymmetric generalization of the mean.  
 As we already mentioned, for $P\in \mathcal M_1$, the solo expectile at level $\alpha \in (0,1)$ denoted by $\ex_\alpha^P(X)$ is the unique $y\in \R$ such that
\begin{equation*}
\alpha \E^P[(X-y)_+]=  (1-\alpha)\E^{P}[(y-X)_+].   
\end{equation*} 


The duet expectile obtained in Theorem \ref{th:main-general} involves two measures. 
If we further assume  probabilistic sophistication under $\p$ (that is, all identically distributed acts are equally preferable), 
then the characterization in Theorem   \ref{th:main-general}
leads to a solo expectile  with respect to the objective probability measure $\p$. 
Probabilistic sophistication 
is usually formulated through subjective probability measures (\cite{MS92}), and it will be further discussed in Section \ref{sec:probab}.


\begin{proposition}
\label{prop:classic}
Let $(\Omega,\mathcal F,\p)$ be a standard probability space. 
For a  preference relation $ \succsim$ on $\X$, probabilistic sophistication and Axioms \ref{ax:SM}, \ref{ax:C} and \ref{ax:DA} hold if and only if there exists  $\alpha \in(0,1/2]$ 
such that $\succsim$ is represented by $\ex_\alpha^{\p}$.

\end{proposition}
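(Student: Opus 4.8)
The plan is to leverage Theorem~\ref{th:main-general}, which already gives us the duet representation $\ex_\alpha^{P,Q}$ under Axioms~\ref{ax:SM}, \ref{ax:C}, \ref{ax:DA} (plus \ref{ax:MC}), and then to show that adding probabilistic sophistication forces $P=Q=\p$. First I would note that probabilistic sophistication with respect to $\p$, by its usual formulation, does imply Axiom~\ref{ax:MC}: since $\succsim$ restricted to distributions under $\p$ is continuous and monotone (from Axioms~\ref{ax:SM}, \ref{ax:C}), and $m\id_{A_n}+\id_{A_n^c}$ has a $\p$-law converging to the law of the constant $1$, the sequence $m\id_{A_n}+\id_{A_n^c}$ is eventually strictly preferred to $0$. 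Hence Theorem~\ref{th:main-general} applies and $\succsim$ is represented by $\ex_\alpha^{P,Q}$ with $\alpha\in(0,1/2]$, $\alpha P\le(1-\alpha)Q$, and $P\ac Q\ac\p$.

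The core step is then: if $\ex_\alpha^{P,Q}$ is probabilistically sophisticated relative to $\p$, then $P=Q=\p$. I would argue this by testing on two-valued acts. For an event $A\in\mathcal F$ with $0<\p(A)<1$, the acts $\id_A$ and $\id_B$ for any $B$ with $\p(B)=\p(A)$ must have equal $\ex_\alpha^{P,Q}$-value. Compute $\ex_\alpha^{P,Q}(\id_A)$ explicitly: writing $p=P(A)$, $q=Q(A)$, the defining equation \eqref{eq:def-ex} with $X=\id_A$ becomes $\alpha p(1-y)=(1-\alpha)(1-q)y$ for $y\in(0,1)$, giving $y=\alpha p/(\alpha p+(1-\alpha)(1-q))$. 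Probabilistic sophistication forces this quantity to depend on $A$ only through $\p(A)$; so $P(A)/(\alpha P(A)+(1-\alpha)(1-Q(A)))$, hence also the map $(P(A),Q(A))$ up to this ratio, is a function of $\p(A)$ alone. Playing the symmetric test with $X=-\id_A$ (whose disappointment/elation roles are swapped) yields a second relation involving $P(A^c)=1-p$ and $Q(A^c)=1-q$. Combining the two relations across all events of equal $\p$-measure, together with additivity of $P$, $Q$ and the constraint $\alpha P\le(1-\alpha)Q$, pins down $P(A)=Q(A)$ and then $P(A)=\p(A)$ for all $A$; the monotone-class/atomlessness of $\p$ lets one go from two-valued acts to the conclusion $P=Q=\p$. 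Once $P=Q=\p$, the duet expectile collapses to the solo expectile $\ex_\alpha^\p$ by definition, and the converse direction is immediate since $\ex_\alpha^\p$ is law-invariant under $\p$ and manifestly satisfies Axioms~\ref{ax:SM}, \ref{ax:C} (and \ref{ax:DA}, already known from the main theorem).

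The main obstacle I expect is the rigorous argument that the two ratio-relations obtained from $\id_A$ and $-\id_A$ (or more robustly, from a richer family of two- and three-valued test acts) genuinely force $P=Q$ rather than merely constraining them; a priori $P$ and $Q$ could conspire so that only certain combinations are law-determined. The clean way around this is to exploit that $\ex_\alpha^{P,Q}$ restricted to indicator acts must be a strictly increasing function of $\p(A)$ (by Axiom~\ref{ax:SM} applied within an equivalence class where $\id_A\ge\id_B$ fails but a suitable comparison holds), and to differentiate/compare the explicit formula: strict monotonicity plus the functional equation $\alpha P(A)/(\alpha P(A)+(1-\alpha)(1-Q(A)))=\phi(\p(A))$ for a fixed $\phi$, evaluated on disjoint unions, forces $P$ and $Q$ each to be affine images of $\p$, hence (being probability measures agreeing at $\emptyset$ and $\Omega$) equal to $\p$. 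Alternatively, one can invoke the known fact (from the solo-expectile literature, e.g.\ \cite{Z16,DBBZ16}) that a law-invariant duet expectile must be a solo expectile, but I would prefer the self-contained two-valued-act computation above. The rest of the proof is routine bookkeeping.
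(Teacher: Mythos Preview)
Your overall strategy---reduce to the duet representation, then use probabilistic sophistication to force $P=Q=\p$---is natural, but Step 1 contains a real gap. You claim that Axiom~\ref{ax:MC} follows because ``$\succsim$ restricted to distributions under $\p$ is continuous'' and the law of $m\id_{A_n}+\id_{A_n^c}$ converges to $\delta_1$. But Axiom~\ref{ax:C} is $L^\infty$-norm continuity, and $\|m\id_{A_n}+\id_{A_n^c}-1\|_\infty=|m-1|$ for every $n$; no rearrangement of these acts converges to $1$ in $L^\infty$. So you cannot conclude \ref{ax:MC} from \ref{ax:C} plus law-invariance this way. What you would actually need is $U(-\id_A)\to 0$ as $\p(A)\to 0$, which amounts to a Lebesgue-type continuity of the induced law-invariant coherent utility; this is not automatic from the stated axioms and would require a separate nontrivial argument (e.g.\ via uniform integrability of the dual densities), which your sketch does not supply. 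Without \ref{ax:MC} you cannot invoke Theorem~\ref{th:main-general}, so the reduction in Step~2 is unjustified.

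The paper avoids this obstacle by \emph{not} going through Theorem~\ref{th:main-general} at all. It applies Theorem~\ref{th-main} on each finite dyadic sub-$\sigma$-algebra $\mathcal F_k=\sigma(V^{-1}(\Psi_k))$ to get $U=\ex^{P_k,Q_k}$ there, and then uses probabilistic sophistication on the equal-measure cells $A_i$ (all with $\p(A_i)=2^{-k}$): from $U(\id_{A_i})$ and $U(-\id_{A_i})$ being constant in $i$ one solves two linear relations to obtain $\widetilde P_k(A_i)=\widetilde Q_k(A_i)=2^{-k}$, hence $\widetilde P_k=\widetilde Q_k=\p|_{\mathcal F_k}$. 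Uniqueness (Lemma~\ref{prop-uniquedef}) fixes a single $\alpha$, so $U=\ex_\alpha^{\p}$ on $\bigcup_k L^\infty(\Omega,\mathcal F_k,\p)$; the extension to all of $\X$ then uses the $L^1$-continuity of the solo expectile plus a second appeal to probabilistic sophistication ($U(X)=U(q_X(V))$). This route never needs \ref{ax:MC}, and in fact works on any atomless space, not just a standard one. Your Step~3 indicator computation is close in spirit to the paper's cell argument and could be made rigorous, but as written it is downstream of the gap in Step~1.
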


Next, we show how it is possible to pin down the solo expectile among duet expectiles without a prior requiring probabilistic sophistication, by means of the following additional axiom.

\renewcommand\theaxiom{EI}
\begin{axiom}[Event independence] 
	\label{ax:EI}   
 For disjoint $A,B,C\in \mathcal F$, $\id_A  \succsim \id_B \Longrightarrow \id_{A\cup C}  \succsim \id_{B\cup C}$.
\end{axiom}  

Axiom \ref{ax:EI} is the key property used by \cite{dF31} to rationalize subjective probability. 
This axiom means that if an event $A$ is seen as more likely (or favorable) than another event $B$, then adding a disjoint event $C$ to both does not change the preference between the two events. 
Axiom \ref{ax:EI} alone does not imply probabilistic sophistication, but together with the other axioms in our framework, it is sufficient for pinning down preferences based on the solo expectiles. 

\begin{theorem}\label{th:p-ex}
Let $(\Omega,\mathcal F,\p)$ be a standard probability space. 
For a  preference  relation $ \succsim$ on $\X$, Axioms \ref{ax:SM}, \ref{ax:C}, \ref{ax:MC}, \ref{ax:DA} and \ref{ax:EI} hold if and only if there exist $P\in\mathcal M_1$ and $\alpha \in(0,1/2]$ with 
$P\ac \p$
such that $\succsim$ is represented by $\ex_\alpha^{P}$. 
\end{theorem}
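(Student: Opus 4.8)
The plan is to read off the representation from Theorem~\ref{th:main-general} and then use Axiom~\ref{ax:EI} to force the two endogenous measures to agree, $P=Q$, which collapses the duet expectile $\ex^{P,Q}_\alpha$ onto the solo expectile $\ex^P_\alpha$.

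For the ``if'' direction I would set $Q=P$: then $\alpha P\le(1-\alpha)Q$ is just $\alpha\le 1/2$, and $P\ac Q\ac\p$ is just $P\ac\p$, so Theorem~\ref{th:main-general} already delivers Axioms~\ref{ax:SM}, \ref{ax:C}, \ref{ax:MC} and \ref{ax:DA}. Axiom~\ref{ax:EI} then follows from the elementary computation $\ex^P_\alpha(\id_A)=\alpha P(A)/\big(\alpha P(A)+(1-\alpha)(1-P(A))\big)$, which one checks is strictly increasing in $P(A)$; hence $\id_A\succsim\id_B\iff P(A)\ge P(B)$, a relation preserved under adjoining a disjoint $C$ because $P$ is additive.

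For the ``only if'' direction, Theorem~\ref{th:main-general} supplies $P,Q\in\M_1$ and $\alpha\in(0,1/2]$ with $\alpha P\le(1-\alpha)Q$ and $P\ac Q\ac\p$ representing $\succsim$ by $\ex^{P,Q}_\alpha$, and the task is to prove $P=Q$. The key observation is that for an event $A$ with $0<P(A)<1$ (equivalently $0<Q(A)<1$, since $P\ac Q$), solving \eqref{eq:def-ex} on $(0,1)$ gives $\ex^{P,Q}_\alpha(\id_A)=\alpha P(A)/\big(\alpha P(A)+(1-\alpha)(1-Q(A))\big)$, so that, writing $\psi(A):=P(A)/(1-Q(A))$, one has $\ex^{P,Q}_\alpha(\id_A)\ge\ex^{P,Q}_\alpha(\id_B)\iff\psi(A)\ge\psi(B)$ (the cases $P(A)\in\{0,1\}$, where $\ex^{P,Q}_\alpha(\id_A)\in\{0,1\}$, being handled separately). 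Now suppose $P\ne Q$ and write $z=\d P/\d\p$, $w=\d Q/\d\p$, both $\p$-a.s.\ positive with mean $1$; since $\p(z\ne w)>0$, both $E_+=\{z>w\}$ and $E_-=\{z<w\}$ have positive $\p$-measure. Using atomlessness I would take an increasing family $\{B_t\}$ of subsets of $E_-$ and apply the intermediate value theorem to the continuous, strictly increasing map $t\mapsto\psi(B_t)$, which runs from $0$ to $P(E_-)/(1-Q(E_-))$; having fixed $A\subseteq E_+$ with $\p(A)>0$ small enough that $s:=\psi(A)$ lies below this last value, pick $B\subseteq E_-$ with $\psi(B)=s$. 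Then $\id_A\sim\id_B$, while $P(A)>Q(A)$ and $P(B)<Q(B)$ force $Q(A)<s/(1+s)<Q(B)$, so $Q(A)\ne Q(B)$. Finally pick any $C\subseteq(A\cup B)^c$ with $\p(C)>0$, hence $P(C)>0$ and, by disjointness, $1-Q(A)-Q(C)>0$ and $1-Q(B)-Q(C)>0$; a short algebraic identity then yields
\begin{align*}
\frac{P(B)+P(C)}{1-Q(B)-Q(C)}-\frac{P(A)+P(C)}{1-Q(A)-Q(C)}
=\frac{\big(Q(B)-Q(A)\big)\big(sQ(C)+P(C)\big)}{\big(1-Q(A)-Q(C)\big)\big(1-Q(B)-Q(C)\big)}>0 ,
\end{align*}
so $\psi(A\cup C)<\psi(B\cup C)$, i.e.\ $\id_{A\cup C}\prec\id_{B\cup C}$, contradicting Axiom~\ref{ax:EI} applied to $\id_A\succsim\id_B$. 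Hence $P=Q$, and then $P\ac Q\ac\p$ gives $P\ac\p$, so $\succsim$ is represented by $\ex^P_\alpha$ with $\alpha\in(0,1/2]$.

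The routine parts are the closed form of $\ex^{P,Q}_\alpha$ on indicators and the boundary bookkeeping for $\p$-null or $\p$-conull events. I expect the main obstacle to be the construction in the ``only if'' direction: converting the bare inequality $P\ne Q$ into an explicit disjoint triple $A,B,C$ for which two indifferent indicators become strictly ordered after adjoining $C$ --- this needs the sign of $z-w$ to produce $\id_A\sim\id_B$ with $Q(A)\ne Q(B)$, together with the sign computation displayed above.
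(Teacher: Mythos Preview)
Your proposal is correct and follows the same strategy as the paper: invoke Theorem~\ref{th:main-general} to obtain the duet representation, then show that Axiom~\ref{ax:EI} forces $P=Q$ by constructing disjoint $A,B,C$ with the Radon--Nikodym derivatives of $P$ and $Q$ separated on $A$ versus $B$, producing a violation on indicators via the closed form $\ex^{P,Q}_\alpha(\id_A)=\alpha P(A)/\big(\alpha P(A)+(1-\alpha)(1-Q(A))\big)$. The paper packages the key step as Proposition~\ref{prop-id} and uses a small-$\epsilon$ construction with $C=(A\cup B)^c$ to obtain two opposite strict inequalities simultaneously, whereas you arrange exact indifference $\id_A\sim\id_B$ via the intermediate value theorem and then break it with an arbitrary disjoint $C$ through your displayed algebraic identity; these are minor tactical variants of the same argument.
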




Being a special case of the duet expectiles,
solo expectiles are widely studied in the statistical literature and the risk measure literature. 
First, solo
expectiles are an asymmetric generalization of the mean, that corresponds to the case $\alpha=1/2$, which was the original motivation of \cite{NP87}.
Second, they belong to the  class of so-called coherent risk measures (\cite{ADEH99}). For further properties of solo expectiles we refer e.g., to \cite{BKMR14,BDB17}. 
Third, solo expecitles are a class of elicitable functionals. 
Elicitability of a statistical functional means that it can be defined as the minimizer of a suitable expected loss,
and it is useful in statistics and machine learning; see \cite{G11} and \cite{FK21}.    
Solo expectiles  admit a well-known axiomatic characterization established by \cite{Z16}  
as the only coherent risk measures that are elicitable. As a consequence of elicitability, solo expectiles satisfy also
the following property, called convex level sets (CxLS) by \cite{DBBZ16}:
$$
\ex_\alpha^P(X) = \ex_\alpha^P(Y)  \implies \ex_\alpha^P(X \id_A + Y \id_{A^c}) = \ex_\alpha^P(X),
$$
whenever $X$, $Y$ and $\id_A$ are independent. 
The CxLS property expresses convexity of the level sets with respect to mixtures and is strictly related to the betweenness axiom introduced in \cite{C89} and \cite{D86} as a possible weakening of the independence axiom of the Von Neumann and Morgenstern expected utility theory. As it has been shown in \cite{DBBZ16} the CxLS property also axiomatizes solo expectiles among 
coherent risk measures under probabilistic sophistication.

Our Proposition \ref{prop:classic} shows that assuming probabilistic sophistication and other standard properties, solo expectiles can be characterized through 
the property
\begin{equation}\label{eq:aversion-rm}
U(X+Y')\ge U(X+Y),
\end{equation}
where $U$ is the unique certainty equivalent of $\succsim$, and $X,Y$ are disco satisfying $U(Y')=U( Y)$. 
As shown by \cite{BCCT21},
$\ex_\alpha^P (X+Y)=\ex_\alpha^P (X)+\ex_\alpha^P (Y)$
if $X$ and $Y$ are disco, and hence \eqref{eq:aversion-rm} corresponds to superadditivity. 
A related result holds for the standard risk measure in financial regulation, the Expected Shortfall  (ES, also known as CVaR), in the sense that the class of ES is also characterized by \eqref{eq:aversion-rm} where $X,Y$ are \emph{concentrated risks} satisfying $U(Y')=U( Y)$; see \cite{WZ21} and \cite{HWWW21} for precise definitions and results.  
The notion of concentrated risks is objective (specified by their joint distribution), whereas our concept of disco aversion is subjective as it is specific to the preference relation itself.

\section{Discussions}
\label{sec:discuss}

 
\subsection{Probabilistic sophistication with multiple probabilities}\label{sec:probab}
The definition of duet expectiles with probability measures as in  \eqref{eq-genex}
leads to the simple fact that a duet expectile is a scenario-based risk functional in the sense of \cite{WZ21a}.
For a collection of probability measures $\mathcal Q\subseteq \M_1$, we write $X \laweq_{\mathcal Q} Y$ to indicate that $X$ and $Y$ are identically distributed under each $P\in \mathcal Q$. 
A \emph{$\mathcal Q$-based}
risk functional is a mapping $\rho \colon \X\to \R$   such that $\rho(X)=\rho(Y)$ whenever $X \laweq_{\mathcal Q} Y$; we will also use this term for the corresponding preference relation.
When  $\Q=\{P\}$, such a functional is called law-based (under $P$) in the literature.
It is clear that $\ex_\alpha^{P,Q}$ is a $(P, Q)$-based risk functional.
Therefore, Theorems \ref{th-main} and \ref{th:main-general} are the first results in the literature that give rise to an axiomatic characterization of a  $\mathcal Q$-based risk functional for a non-singleton $\mathcal Q$. Moreover, keeping in mind that the role played by $\mathbb{P}$ is purely instrumental to the formulation of Axiom \ref{ax:SM}, we have that $ P$ and $ Q$ are endogenous; i.e., they correspond to subjective probabilities of the decision maker.

The above observation inspires  a generalized notion of  probabilistic sophistication.
For $n\in \N$, 
we say that a preference $ \succsim$ is \emph{$n$-probabilistically sophisticated} if 
there exists a set $\mathcal Q$ of cardinality $n$ such that $X \sim Y$ whenever $X\laweq _{\mathcal Q} Y$ (thus, it is $\mathcal Q$-based).
 Note that the case of $n=1$ is the notion of probabilistic sophistication defined by \cite{MS92}. 
 Intuitively, $n$-probabilistic sophistication means that the decision maker has a collection of $n$ subjective probabilities that jointly determine  her choices over acts. 
With this terminology, from Theorems \ref{th-main} and \ref{th:main-general} it follows that Axioms \ref{ax:SM}, \ref{ax:C},
 \ref{ax:MC} (on a standard probability space), and  \ref{ax:DA}
together 
imply $2$-probabilistic sophistication. 
 
Classic models of subjective probability, such as those of \cite{dF31} and \cite{S54}, 
yield preferences satisfying $1$-probabilistic sophistication, i.e., the existence of a probability that governs decisions. 
Many popular models of preferences under ambiguity, such as the max-min expected utility model of \cite{GS89},
 the  smooth ambiguity model of \cite{KMM05}, and
the variational preferences of \cite{MMR06}, involve a set of probabilities, called the ambiguity set. These preferences are $n$-probabilistically sophisticated if the ambiguity set has  $n$ elements. 
The axioms for these models do not identify the cardinality of the ambiguity set, in sharp contrast to Theorem \ref{th-main}. As far as we are aware, Theorem \ref{th-main} is the first result that yields $n$-probabilistic sophistication for some fixed $n\ge 2$.

In the context of investment  decisions in a complete financial market, $2$-probabilistic sophistication also appears naturally.
In such a setting, an expected-utility decision maker's preferences over acts, representing here terminal asset payoffs, are determined by 
a (possibly subjective) ``real-world" probability $P$, and a ``risk-neutral" probability   $Q$ induced by the market prices. 
Using the above formulation, the decision maker is indifferent between two assets that have the same distribution under $P$ (implying the same expected utility) and the same distribution under $Q$ (implying the same price); thus $2$-probabilistic sophistication holds. 
 
\subsection{Risk aversion and convexity}
\label{sec:52}

We first recall the notion of strong risk aversion of \cite{RS70}.
For a probability $P$, 
a $P$-based preference or its numerical representation is \emph{strongly risk averse}  under  $P$ if it is monotone with respect to second order stochastic dominance under $P$; that is, for $X,Y\in\mathcal X$, $X\ge_{\rm ssd}^P Y$ implies $X\succsim Y$, where $X\ge_{\rm ssd}^P Y$ means that $\E^P[u(X)]\ge \E^P[u(Y)]$ for all increasing and concave utility functions $u \colon \R \to \R$ such that both expectation exist.

The preference relation represented by $\ex_\alpha^{P}$ for $\alpha \in (0,1/2]$ is strongly risk averse.\footnote{See e.g., Theorem 5 of \cite{BKM18}.}
Assume  Axioms \ref{ax:SM} and \ref{ax:C}  for $\succsim$ that is law-based under $P$.
By Proposition \ref{prop:classic},  disco aversion implies risk aversion $(\alpha \le 1/2)$,
and conversely, disco seeking implies risk seeking $(\alpha \ge 1/2)$.\footnote{Moreover, $\ex_\alpha^{P}$ is convex if and only if $\alpha \in [1/2,1)$, and it is concave if and only if $\alpha \in (0,1/2]$. In Section \ref{sec:properties} we will see that a similar statement holds for $\ex_{\alpha}^{P,Q}$.}
Therefore, the concept of disco aversion is intimately connected to risk aversion for risk preferences. 

Since the duet expectile preferences are $2$-probabilistically sophisticated, the traditional notions of risk aversion are not applicable. Nevertheless, in Theorem \ref{th:riskaverse1} below we show that duet expectile preferences are uncertainty averse and weakly risk averse in a natural sense.
 
We say that a preference relation $ \succsim$ is \emph{weakly risk averse} (resp.~\emph{weakly risk seeking}) under a probability measure $P$ if $\E^P[X] \succsim X$  (resp.~$ \E^P[X]\precsim X$) for all $X\in \X$.  

We say that a preference relation $\succsim$ is \emph{uncertainty averse} (resp.~\emph{uncertainty seeking}) or simply
\emph{convex} (resp.~\emph{concave}) if $X\sim Y$ implies $\lambda X+(1-\lambda)Y\succsim X$ (resp.~$X\sim Y$ implies $X\succsim\lambda X+(1-\lambda)Y$) for all $\lambda\in[0,1]$. 
In the context of decision making under ambiguity, this property has been introduced by  \cite{GS89} and is fundamental in other decision models such as that of  \cite{MMR06};
see also the related discussions of  \cite{E99} and \cite{GM02} on ambiguity aversion defined differently.
Note that convexity of a preference relation is equivalent to the quasi-concavity of its numerical representation, which is again equivalent to concavity under translation invariance (see, e.g., \cite{CMMM11}).


Based on Theorem \ref{th-main} or Theorem \ref{th:main-general}, the following result shows that 
Axioms \ref{ax:SM}, \ref{ax:C}, \ref{ax:MC} (on standard probability space) and \ref{ax:DA} imply both weak risk aversion under  $ P$ and $ Q$
and convexity.

\begin{theorem}
\label{th:riskaverse1}
Suppose that $\succsim$ is represented by $\ex_{\alpha}^{P,Q}$ for some $P,Q\in \M_1$ and $\alpha\in(0,1)$, and Axiom \ref{ax:SM} holds.
The following are equivalent:
\begin{enumerate}[(i)]
\item[(i)] $\succsim$ 
is weakly risk averse  under   $ P$;
\item[(ii)] $\succsim$ 
is weakly risk averse  under   $ Q$;
\item[(iii)] $\succsim$ is convex;
\item[(iv)] $\alpha\in(0,1/2]$ and $\alpha P\le (1-\alpha)Q$.
\end{enumerate}  
\end{theorem}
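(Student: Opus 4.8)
The plan is to prove the cycle of implications $(\mathrm{iv})\Rightarrow(\mathrm{iii})\Rightarrow(\mathrm{i})\Rightarrow(\mathrm{iv})$ and, symmetrically, $(\mathrm{iv})\Rightarrow(\mathrm{ii})\Rightarrow(\mathrm{iv})$, exploiting the explicit defining relation \eqref{eq:def-ex} throughout. Note first that $\ex_\alpha^{P,Q}$ is translation invariant (adding a constant $c$ to $X$ shifts $y$ by $c$, since both sides of \eqref{eq:def-ex} are unchanged in form) and positively homogeneous, so without loss of generality we may normalize $\E^P[X]=0$ or $\E^Q[X]=0$ in the relevant steps, and convexity of $\succsim$ is equivalent to concavity of the functional $\ex_\alpha^{P,Q}$ by the standard remark cited after the statement.

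For $(\mathrm{iv})\Rightarrow(\mathrm{iii})$ I would show directly that $\alpha\le 1/2$ together with $\alpha P\le(1-\alpha)Q$ forces $\ex_\alpha^{P,Q}$ to be concave. The cleanest route is the dual/variational representation: from \eqref{eq:def-ex}, writing $y=\ex_\alpha^{P,Q}(X)$, one has the identity $y=\E^{P}[X]+\tfrac{1}{\alpha}\E^{P}[(y-X)_+]-\tfrac{1-\alpha}{\alpha}\E^{Q}[(y-X)_+]\cdot(\text{something})$ — more usefully, I would recall (or rederive along the lines of the solo-expectile literature, e.g.\ the gradient representation behind the coherence of $\ex_\alpha^P$ for $\alpha\le 1/2$) that under $0<\alpha P\le(1-\alpha)Q$ the duet expectile admits a representation as $\ex_\alpha^{P,Q}(X)=\inf_{Z\in\mathcal Z}\E[ZX]$ for a suitable convex set $\mathcal Z$ of (Radon–Nikodym) densities built from $P$ and $Q$, the constraint $\alpha P\le(1-\alpha)Q$ being exactly what makes these densities nonnegative and normalized. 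An infimum of linear functionals is concave, giving convexity of $\succsim$. (This is presumably the content of "Proposition~\ref{prop-cv} / Subsection~\ref{subsec:GS}" referenced earlier, so I would cite the explicit Gilboa–Schmeidler form rather than rederive it.)

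For $(\mathrm{iii})\Rightarrow(\mathrm{i})$: if $\succsim$ is convex, then for any $X$ with $\E^P[X]=m$ consider that $\ex_\alpha^{P,Q}$ is translation invariant and, by concavity plus positive homogeneity, superadditive; combined with $\ex_\alpha^{P,Q}(c)=c$ for constants, one gets $\ex_\alpha^{P,Q}(X)=\ex_\alpha^{P,Q}(X-m)+m$ and it suffices to show $\ex_\alpha^{P,Q}(X-m)\le 0$, i.e.\ $y:=\ex_\alpha^{P,Q}(W)\le 0$ whenever $\E^P[W]=0$. Plug $y$ into \eqref{eq:def-ex}: $\alpha\E^P[(W-y)_+]=(1-\alpha)\E^Q[(y-W)_+]$; using $\E^P[W]=0$ write $\E^P[(W-y)_+]=y_-+\E^P[(y-W)_+]-y\cdot\mathbf 1\{\dots\}$... the honest computation is $\E^P[(W-y)_+]-\E^P[(y-W)_+]=\E^P[W-y]=-y$, so $\alpha\E^P[(y-W)_+]-\alpha y=(1-\alpha)\E^Q[(y-W)_+]$. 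If $y>0$ the left side is $<\alpha\E^P[(y-W)_+]$ while the right side is $\ge\alpha\E^Q[(y-W)_+]$, and I would need $\alpha\E^P[(y-W)_+]\le(1-\alpha)\E^Q[(y-W)_+]$ (which is exactly $(\mathrm{iv})$) to close it — but here I only have $(\mathrm{iii})$. So instead I would argue contrapositively: use $(\mathrm{iii})\Rightarrow(\mathrm{iv})$ via the other edges, i.e.\ prove $(\mathrm{i})\Rightarrow(\mathrm{iv})$ and $(\mathrm{iii})\Rightarrow(\mathrm{i})$ by the weak-risk-aversion computation only needing one direction, reorganizing the cycle as $(\mathrm{iv})\Rightarrow(\mathrm{iii})\Rightarrow(\mathrm{ii})\ \&\ (\mathrm{i})\Rightarrow(\mathrm{iv})$. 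The key computational lemma is: for $W$ with $\E^P[W]=0$, $\ex_\alpha^{P,Q}(W)\le 0$ $\iff$ $\alpha\E^P[(-W)_+]\le(1-\alpha)\E^Q[(-W)_+]$ (set $y=0$ in \eqref{eq:def-ex} and use monotonicity of the two sides in $y$ together with uniqueness/the infimum definition \eqref{eq-genex}); the "only if" handles $(\mathrm{i})\Rightarrow$ a pointwise inequality between the measures, which one then promotes to $\alpha P\le(1-\alpha)Q$ by choosing $W=\mathbf 1_{A^c}-\mathbf 1_A\cdot\frac{P(A^c)}{P(A)}$ so that $\E^P[W]=0$ and $(-W)_+$ is supported on $A$, ranging over $A\in\mathcal F$; and $\alpha\le 1/2$ follows by taking $A=\Omega$ or by a symmetric argument with $P=Q$ limiting case. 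Convexity $(\mathrm{iii})$ similarly forces, via its superadditivity consequence applied to $W$ and $-W$, the same inequality. I would run the mirror-image argument with $\E^Q[W]=0$ to get $(\mathrm{ii})\Leftrightarrow(\mathrm{iv})$, noting the decomposition of $\E^Q[(W-y)_+]-\E^Q[(y-W)_+]=\E^Q[W]$ is the analogous identity.

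The main obstacle I anticipate is the bookkeeping in going from a single normalized inequality $\alpha\E^P[(-W)_+]\le(1-\alpha)\E^Q[(-W)_+]$ (valid for all mean-zero $W$) back to the set-function inequality $\alpha P(S)\le(1-\alpha)Q(S)$ for \emph{all} $S\in\mathcal F$: one must choose the test acts $W$ carefully (they must have $P$-mean zero, be two-valued on $S$ and $S^c$, and in the atomless case one must check the relevant $W\in L^\infty$ and that approximations are legitimate under Axiom~\ref{ax:C}), and one must separately extract $\alpha\le 1/2$ and rule out degenerate cases where $P(S)=0$. A secondary subtlety is ensuring that under merely $(\mathrm{i})$ (weak risk aversion, not yet convexity) the functional $\ex_\alpha^{P,Q}$ is still well-defined by the unique root of \eqref{eq:def-ex} rather than only by the infimum \eqref{eq-genex}; here I would use that $(\mathrm{i})$ already yields the monotonicity/uniqueness, or simply work with \eqref{eq-genex} and the monotone dependence of both sides on $y$ throughout, which makes "$\ex_\alpha^{P,Q}(W)\le 0$" equivalent to the sign of $\alpha\E^P[(-W)_+]-(1-\alpha)\E^Q[(-W)_+]$ without needing uniqueness a priori.
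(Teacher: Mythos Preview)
Your equivalences (i)~$\Leftrightarrow$~(iv) and (ii)~$\Leftrightarrow$~(iv), via the observation that for $W$ with $\E^P[W]=0$ one has $\ex_\alpha^{P,Q}(W)\le 0$ iff $\alpha\E^P[W_-]\le(1-\alpha)\E^Q[W_-]$, tested on the two-valued acts $W=\id_{A^c}-(P(A^c)/P(A))\id_A$, are correct and close in spirit to the paper. The paper instead records the identity $\alpha(\E^P[X]-x)=\E^{(1-\alpha)Q-\alpha P}[(x-X)_+]$ at $x=\ex_\alpha^{P,Q}(X)$ and tests with $X=\id_A$ directly; your centered two-point tests and the paper's indicator tests are essentially reparametrizations of one another. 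Your (iv)~$\Rightarrow$~(iii) via the Gilboa--Schmeidler representation is also fine (the paper obtains this direction from Proposition~\ref{prop-cv}, with the explicit prior set given separately in Proposition~\ref{prop:dual}).

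The genuine gap is (iii)~$\Rightarrow$~(iv). After correctly noticing that your first attempt at (iii)~$\Rightarrow$~(i) already presupposes (iv), you fall back on ``superadditivity applied to $W$ and $-W$'', but this does not deliver $\alpha P\le(1-\alpha)Q$. Concretely: concavity gives $\ex(W)+\ex(-W)\le 0$; choosing a two-valued $W=\id_A-c\,\id_{A^c}$ with $c>0$ so that $\ex(W)=0$ and reading off $\ex(-W)\le 0$ yields only the product inequality
\[
\alpha^2 P(A)P(A^c)\le(1-\alpha)^2 Q(A)Q(A^c),
\]
which is strictly weaker than $\alpha P(A)\le(1-\alpha)Q(A)$. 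In fact the paper shows (Appendix~\ref{sec:example}) that on a two-atom space this product inequality \emph{is} the characterization of concavity, so no two-point argument can do better. The paper closes this link with Proposition~\ref{prop-cv}: assuming $\alpha P(A)>(1-\alpha)Q(A)$ for some $A$, one uses Axiom~\ref{ax:SM} (equivalently $P\ac Q\ac\p$) to refine into a three-set partition $\{B_1,B_2,B_3\}$ and exhibits explicit three-valued $X,Y$ in the acceptance set $\{\ex^{P,Q}\le 0\}$ whose sum $X+Y$ lies outside it, contradicting convexity of that set. You need a genuinely three-atom construction of this kind (or an equivalent device) to complete the cycle.
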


\begin{remark}
In Theorem \ref{th:riskaverse1},  weak  risk aversion and uncertainty aversion can be replaced by weak risk seeking and uncertainty seeking, respectively,
if $\alpha\in(0,1/2]$ and $\alpha P\le (1-\alpha)Q$ are replaced by $\alpha\in[1/2,1)$ and $\alpha P\ge(1-\alpha) Q$, following the same proof.
Moreover, the implications from (iv)  to (i), (ii) and (iii) hold on a general space without Axiom \ref{ax:SM}, while Axiom \ref{ax:SM} is necessary for the converse—inferring (iv) from any of (i), (ii), or (iii) (see Appendix \ref{app:SMprop} for the detailed explanation).
\end{remark}



In contrast to weak risk aversion in Theorem \ref{th:riskaverse1}, the  preference $\succsim$ represented by duet expectile $\ex_\alpha^{P,Q}$ does not necessarily satisfy strong risk aversion under either $ P$ or $ Q$.  In fact, strong risk aversion   under $P$   would imply that $\succsim$ is law-based under $ P$, which $\ex_\alpha^{P,Q}$ does not satisfy  unless $P=Q$   (see Proposition \ref{prop:classic}). 


Although strong risk aversion does not hold under either $P$ or $Q$, a special form of joint strong risk aversion holds. 
We say that a preference relation  $\succsim $ is \emph{strongly risk averse jointly under $(P,R)\in \mathcal M_1\times \mathcal M_1$} 
if $X\ge_{\rm ssd}^P Y$
and 
$X\ge_{\rm ssd}^R Y$
imply $X \succsim Y$. 

\begin{proposition}\label{prop-jointRA}
Suppose that $\succsim$ is represented by $\ex_\alpha^{P,Q}$ for some $P,Q\in \M_1$ and $\alpha\in(0,1/2)$  satisfying $\alpha P\le (1-\alpha)Q$ and $P\ne Q$. 
Then, $\succsim $ is strongly risk averse jointly under $( P,(Q-\alpha(P+Q))/(1-2\alpha))$.
\end{proposition}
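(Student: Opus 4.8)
The plan is to linearize the $Q$-side of the defining equation \eqref{eq:def-ex} by writing $Q$ as an affine combination of $P$ and a third probability measure, which turns $\ex_\alpha^{P,Q}$ into the root of a single strictly monotone one-dimensional equation; I would then compare the roots for $X$ and $Y$ using the two stop-loss inequalities contained in second-order stochastic dominance. Concretely, set $R:=((1-\alpha)Q-\alpha P)/(1-2\alpha)$, the measure appearing in the statement. Since $\alpha\in(0,1/2)$ we have $1-2\alpha>0$, and since $\alpha P\le(1-\alpha)Q$ the set function $R$ is non-negative; as $R(\Omega)=((1-\alpha)-\alpha)/(1-2\alpha)=1$, it follows that $R\in\M_1$. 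By construction $(1-\alpha)Q=\alpha P+(1-2\alpha)R$ as measures, hence $(1-\alpha)\E^Q[Z]=\alpha\E^P[Z]+(1-2\alpha)\E^R[Z]$ for every bounded $Z$.

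Next I would substitute this into \eqref{eq:def-ex}: $y=\ex_\alpha^{P,Q}(X)$ solves $\alpha\E^P[(X-y)_+]=\alpha\E^P[(y-X)_+]+(1-2\alpha)\E^R[(y-X)_+]$, and since $(X-y)_+-(y-X)_+=X-y$ pointwise, the $P$-terms combine and this becomes: $\ex_\alpha^{P,Q}(X)$ is the unique $y\in\R$ with
\[
g_X(y):=\alpha\,(\E^P[X]-y)-(1-2\alpha)\,\E^R[(y-X)_+]=0 .
\]
The map $g_X$ is continuous and strictly decreasing (the term $-\alpha y$ strictly decreases, and $y\mapsto -(1-2\alpha)\E^R[(y-X)_+]$ is non-increasing because $1-2\alpha>0$), with $g_X(y)\to+\infty$ as $y\to-\infty$ and $g_X(y)\to-\infty$ as $y\to+\infty$; this re-confirms uniqueness of the root and is the key tool for the comparison.

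Then I would use the stop-loss description of SSD for bounded acts: for a probability $\mu$, $X\ge_{\rm ssd}^\mu Y$ implies $\E^\mu[(t-X)_+]\le\E^\mu[(t-Y)_+]$ for all $t\in\R$ (apply the definition to the increasing concave functions $u(\cdot)=-(t-\cdot)_+$), and, taking $u$ linear, $\E^\mu[X]\ge\E^\mu[Y]$. Now assume $X\ge_{\rm ssd}^P Y$ and $X\ge_{\rm ssd}^R Y$, and write $y_Y:=\ex_\alpha^{P,Q}(Y)$, so $g_Y(y_Y)=0$. Subtracting,
\[
g_X(y_Y)=g_X(y_Y)-g_Y(y_Y)=\alpha\,(\E^P[X]-\E^P[Y])+(1-2\alpha)\,(\E^R[(y_Y-Y)_+]-\E^R[(y_Y-X)_+]),
\]
and both bracketed quantities are non-negative: $\E^P[X]\ge\E^P[Y]$ by $P$-SSD, and $\E^R[(y_Y-X)_+]\le\E^R[(y_Y-Y)_+]$ by $R$-SSD, while $\alpha>0$ and $1-2\alpha>0$. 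Hence $g_X(y_Y)\ge 0=g_X(\ex_\alpha^{P,Q}(X))$, and strict monotonicity of $g_X$ gives $y_Y\le\ex_\alpha^{P,Q}(X)$, that is $\ex_\alpha^{P,Q}(X)\ge\ex_\alpha^{P,Q}(Y)$, i.e. $X\succsim Y$; this is exactly joint strong risk aversion under $(P,R)$.

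The only step that needs an idea rather than routine computation is spotting the affine identity $(1-\alpha)Q=\alpha P+(1-2\alpha)R$; once $Q$ is eliminated in favour of $P$ and $R$, equation \eqref{eq:def-ex} reduces to a transparent monotone-root problem and everything else is the elementary one-dimensional comparison above. (If $P=Q$ one would get $R=P$, recovering the known strong risk aversion of the solo expectile $\ex_\alpha^P$; the hypothesis $P\ne Q$ only ensures the statement is genuinely new.)
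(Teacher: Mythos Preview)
Your proof is correct and follows essentially the same route as the paper: both rewrite the defining equation via the identity $(1-\alpha)Q=\alpha P+(1-2\alpha)R$ (the paper works with the unnormalized $T=(1-\alpha)Q-\alpha P$), reducing the expectile to the root of $\alpha(\E^P[X]-y)=(1-2\alpha)\E^R[(y-X)_+]$, and then compare roots using that $P$-SSD controls the mean term while $R$-SSD controls the stop-loss term. The only cosmetic difference is that the paper phrases the comparison via the infimum/acceptance-set formulation rather than evaluating $g_X$ at $y_Y$, but this is the same monotone-root argument.
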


Jointly strongly risk averse preference is a robust preference in the sense of \cite{DR10}. 
For a set of probability measures $\mathcal Q$,
\cite{DR10} introduced
a type of risk aversion characterized by the condition that $X\ge_{\rm ssd}^P Y$ for all $P\in\mathcal Q$ implies $X\succsim Y$, which is named as $\mathcal Q$-risk aversion in our context.
\cite{DR10} focused on stochastic optimization problems where the constraint follows the principle of $\mathcal Q$-risk aversion. Clearly, $\mathcal Q$-risk aversion is equivalent to joint strong risk aversion when $\mathcal Q$ is composed of two elements.
Therefore, 
Proposition \ref{prop-jointRA} illustrates that Theorems \ref{th-main} and \ref{th:main-general} together imply $\{P,R\}$-risk aversion, where $P,R$ are two endogenous measures.

A natural question arises as to whether $\ex_\alpha^{P,Q}$ satisfies joint strong risk aversion under $(P,Q)$. We find that it does not, and a counter-example is provided in Appendix \ref{app:jointSRA}.

 \subsection{Gul's preferences as expectiled utilities}
\label{subsec:Gul}
We have already pointed out a number of similarities and differences with respect to the theory of disappointment averse preferences in \cite{G91}. Gul's preferences are defined on lotteries that are finitely supported probability measures on a compact set of monetary prizes. We refer to \cite{CDO20} for a more general treatment and for additional results and interpretations. The numerical representation of preferences obtained by Gul has the following form: 
$$
U(p) = \gamma(\lambda) \sum_{x} u(x)q(x) +  ( 1-\gamma(\lambda) ) \sum_{x} u(x)r(x),
$$
where $p$ is a lottery whose elation-disappointment decomposition is
$
(\lambda,q,r),
$\footnote{Suppose that $\lambda\in[0,1]$, and $p,q,r$ are lotteries.  $(\lambda,q,r)$ is an elation-disappointment decomposition of $p$ if $q\in B(p)$, $r\in W(p)$ and $\lambda q+(1-\lambda)r=p$, where $B(p)=\{s: x~{\rm in~the~support~of~}s~{\rm implies}~x\succsim p\}$ and $W(p)=\{s: x~{\rm in~the~support~of~}s~{\rm implies}~p\succsim x\}$.}
$u : \R \to \R$ is a cardinally unique continuous monetary utility, and $\gamma \colon [0,1] \to [0,1]$ is given by
$$
\gamma(\lambda) = \frac{\lambda}{1+\beta(1-\lambda)},
$$
with $\beta \in (-1, \infty)$ is a disappointment aversion coefficient. To our knowledge, it has not been  recognized in the literature that Gul's preferences can be equivalently formulated as expectiled utilities. Indeed, a useful way of representing expectiles is the following formula:
\begin{equation}
\ex_\alpha^P(X)  = \frac{\alpha \E^P\left[X \id_{\{X\ge \ex_\alpha^P(X)\}}\right]+
(1-\alpha) \E^P\left[X \id_{\{X\le \ex_\alpha^P(X)\}}\right]}
{\alpha P(X \ge \ex_\alpha^P(X)) + (1-\alpha) P(X \le \ex_\alpha^P(X))} 
\end{equation}
(see e.g. \cite{BKMR14}),
and with straightforward calculations it follows that Gul's representing functional can be written as
$$
U(p) = \ex_\alpha^P(u(X)),
$$
where $\alpha = 1/(2 + \beta)$ and $X$ is any random variable having distribution $p$ under $P$. If $\beta=0$ then $\alpha=1/2$, consistently having expected utility as a special case of expectiled utility.  
In the next section we show how to derive duet expectiled utilities in a more general AA framework, thus to include Gul's preferences into our model formulated in the AA framework.  

\section{The Anscombe--Aumann framework}
\label{sec:AA}

\subsection{Setup} 

We first describe a standard extension of the Anscombe and Aumann setup as in \cite{CGMMS11}.
Extending the setting in Section \ref{sec:axiom}, we continue to 
work with the set $\Omega$ of  {states of the world} and the $\sigma$-algebra $\mathcal F$
of subsets of $\Omega$. Moreover, we consider a convex set $\C$ of
\emph{consequences} or \emph{outcomes}. We denote by $B_{0}\left(  \C\right)  =B_{0}\left(
\Omega,\mathcal F,\C\right)  $, the set of all the \emph{(simple) acts}: finite-valued
functions $f:\Omega\rightarrow \C$ which are $\mathcal F$-measurable.  
 {We endow $B_0(\C)$ with the maximum norm}.
Analogously, we
denote by $B_{0}\left(  \mathbb{R}\right)  $ the set of all real-valued
$\mathcal F$-measurable simple functions, so that $u\left(  f\right)  \in
B_{0}\left(  \mathbb{R}\right)  $ for all $u:\C\rightarrow\mathbb{R}$.
Given any $x\in \C$, we denote by $x\in B_{0}\left(  \C\right)  $ also the
constant act such that $x(\omega)=x$ for all $\omega\in \Omega$. That is, we  identify $\C$ with the subset of the constant acts in
$B_{0}\left(  \C\right)  $. 
When $\mathcal C=\R$, we are back to the setting of Section \ref{sec:axiom}, with the additional assumption that acts are finite-valued (which always holds true in finite spaces).

We model the decision maker's \emph{preference relation}  
by a weak order $\succsim$ on $B_{0}\left(  \C\right) $.  
In this setting, disappointment concordance with respect to $\succsim$ has a similar definition, that is, $f,g\in B_0(\mathcal C)$ are disappointment-concordant (disco for short) if $\{\omega\in \Omega: f(\omega)\prec f\}=\{\omega\in \Omega: g(\omega)\prec g\}$.

Let $\Gamma \subseteq \R$ be an interval. We say that
a functional $I \colon B_0(\Gamma) \to \R$ is 
\emph{monotone}, if $f \geq g$ implies $I(f) \geq I(g)$;
 \emph{continuous}, if it is continuous with respect to the maximum norm;
\emph{normalized}, if $I(c)=c$ for all $c \in \Gamma$.

\subsection{Axioms and representation} 

 We introduce several axioms as outlined in \cite{CGMMS11}.





\renewcommand\theaxiom{AA-M}
\begin{axiom}[Monotonicity]
\label{ax:AA-M}
If $f,g\in B_{0}\left(  \C\right)  $ and
$f(\omega)\succsim g(\omega)$ for all $\omega\in \Omega$, then $f\succsim g$.
\end{axiom} 

\renewcommand\theaxiom{AA-I}
\begin{axiom}[Independence]\label{ax:AA-I} 
If $x,y,z\in \C$ and $\lambda\in(0,1]$, then $x\succ y$ implies $\lambda x+(1-\lambda)z\succ \lambda y+(1-\lambda)z$.

\end{axiom} 

\renewcommand\theaxiom{AA-C}
\begin{axiom}[Continuity]
\label{ax:AA-C} 
If $f,g,h\in B_{0}\left(  \C\right)$ and $f\succ g\succ h$, then there are $\alpha,\beta\in(0,1)$ such that $\alpha f+(1-\alpha)h\succ g\succ \beta f+(1-\beta)h$.

\end{axiom}

\begin{remark}
Axiom \ref{ax:AA-I} and Axiom \ref{ax:AA-C} are called  risk independence and Archimedean properties in \cite{CGMMS11}. These  two axioms can  be   replaced by the following properties: 
\begin{enumerate}[(a)] 
\item[(a)] For $x, y, z \in \C$ and $x \sim y$, it holds that  $x/2 + z/2 \sim y/2 + z/2$; 
\item[(b)] For $f, g, h \in B_{0}(\C)$, the sets $\{\alpha \in [0, 1] : \alpha f + (1 - \alpha) g \succsim h\}$ and $\{\alpha \in [0, 1] : h \succsim \alpha f + (1 - \alpha) g\}$ are closed. 
\end{enumerate}
These two properties were originally introduced by \cite{HM53} for the mixture space theorem.
\end{remark}

\begin{lemma}[Proposition 1 of \cite{CGMMS11}]
\label{lm-CGMMS11}
 A preference relation $\succsim$\ on $B_{0}\left(
\C\right)  $ satisfies Axioms \ref{ax:AA-M}, \ref{ax:AA-I} and \ref{ax:AA-C} if and only if there exist an affine function
$u:\C\rightarrow\mathbb{R}$ and a monotone, continuous, and normalized function
$I:B_{0}\left(  u\left(  \C\right)  \right)  \rightarrow\mathbb{R}$ such that,
given any $f,g\in B_{0}\left(  \C\right)$,
$$
f\succsim g\iff I\left(  u\left(  f\right)  \right)  \geq I\left(  u\left(
g\right)  \right)  .
$$
In this case, $u$ is cardinally unique, and $I$ is unique given $u$.
\end{lemma}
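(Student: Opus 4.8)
The plan is to reduce the statement --- which is Proposition~1 of \cite{CGMMS11} --- to the classical mixture space theorem, so that most of the work is organizational. \emph{Necessity} is routine: given a representation with $u$ affine and $I$ monotone, normalized and continuous, Axiom~\ref{ax:AA-M} follows because the representation, restricted to constant acts, reduces by normalization to $u$, after which monotonicity of $I$ does the rest; Axiom~\ref{ax:AA-I} follows from affinity of $u$ on $\C$; and the Archimedean Axiom~\ref{ax:AA-C} follows since $\alpha\mapsto I(\alpha u(f)+(1-\alpha)u(h))$ is continuous in $\alpha$. I would therefore concentrate on \emph{sufficiency}.

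The first step is to restrict $\succsim$ to the constant acts, identified with $\C$ as a mixture space. On constants, Axiom~\ref{ax:AA-I} is exactly von Neumann--Morgenstern independence and Axiom~\ref{ax:AA-C}, read on constant acts (equivalently, properties (a) and (b) of the Remark), is the Archimedean continuity property; since $\succsim$ is a weak order, the mixture space theorem of \cite{HM53} yields an affine $u\colon\C\to\R$ representing $\succsim$ on $\C$, unique up to a positive affine transformation. Because $\C$ is convex and $u$ is affine, $u(\C)$ is an interval of $\R$ (possibly unbounded, possibly not closed), and in particular $B_0(u(\C))$ is the natural domain for the functional to be constructed.

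The second step is to build $I$ on $B_0(u(\C))$ by certainty equivalents. For $\varphi\in B_0(u(\C))$, choose a simple act $f$ with $u(f)=\varphi$ (picking, for each of the finitely many values of $\varphi$, a point of its $u$-preimage). By Axiom~\ref{ax:AA-M}, $f$ is sandwiched between its $\succsim$-worst and $\succsim$-best outcomes, regarded as constant acts $b$ and $a$; a connectedness argument along the segment from $b$ to $a$ --- the sets $\{\lambda\in[0,1]:\lambda a+(1-\lambda)b\succsim f\}$ and $\{\lambda\in[0,1]:f\succsim\lambda a+(1-\lambda)b\}$ are closed thanks to Axiom~\ref{ax:AA-C} and jointly cover $[0,1]$ --- produces a constant act indifferent to $f$, and I set $I(\varphi)$ to be its $u$-value. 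This is well defined: if $u(f)=u(g)$ then $f(\omega)\sim g(\omega)$ for every $\omega$, hence $f\sim g$ by applying Axiom~\ref{ax:AA-M} in both directions, so the two certainty equivalents share the same $u$-value. The representation $f\succsim g\iff I(u(f))\ge I(u(g))$ is then immediate from transitivity; normalization $I(c)=c$ is the case of a constant act; and monotonicity of $I$ follows because $\varphi\ge\psi$ can be realized by acts that are pointwise ordered, whence $f\succsim g$ by Axiom~\ref{ax:AA-M}.

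The step I expect to be the main obstacle is \emph{continuity} of $I$ in the sup-norm: monotonicity and normalization do not suffice on their own, since $I$ need not be translation invariant. The argument of \cite{CGMMS11} leverages the Archimedean Axiom~\ref{ax:AA-C} to show that if $\varphi_n\to\varphi$ then $I(\varphi_n)$ cannot remain bounded away from $I(\varphi)$: one squeezes the certainty equivalent of $f_n$ between suitable mixtures of constant acts and the certainty equivalent of $f$, producing a contradiction with the Archimedean property in the limit. Finally, \emph{uniqueness}: $u$ is cardinally unique by the Herstein--Milnor uniqueness clause, and, with $u$ fixed, any monotone normalized $I'$ representing $\succsim$ must satisfy $I'(u(f))=I'(u(c_f))=u(c_f)=I(u(f))$ for a certainty equivalent $c_f$ of $f$, so $I$ is determined by $u$.
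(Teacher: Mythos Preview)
The paper does not supply its own proof of this lemma: it is quoted verbatim as Proposition~1 of \cite{CGMMS11} and used as a black box, so there is no in-paper argument to compare against. Your sketch follows the standard route that \cite{CGMMS11} themselves take---Herstein--Milnor on constants to obtain the affine $u$, then certainty equivalents to define $I$, then verification of monotonicity, normalization, continuity, and uniqueness---and is structurally sound. The one place to tighten is the continuity argument: rather than an abstract ``squeeze,'' the cleanest route is to observe that for a simple act $f$ with best and worst outcomes $a,b\in\C$, the map $\lambda\mapsto I(u(\lambda a+(1-\lambda)b))=\lambda u(a)+(1-\lambda)u(b)$ is affine in $\lambda$, and that sup-norm perturbations of $u(f)$ within $B_0(u(\C))$ can be dominated above and below by such one-parameter families, after which monotonicity delivers continuity directly. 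With that refinement your proposal is a faithful reconstruction of the cited result.
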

Next, we formulate the axiom of disco aversion in the AA setting, which has a slightly different form compared to Axiom \ref{ax:DA}, because we replaced the sum by a convex combination to work with a convex set of acts.




\renewcommand\theaxiom{AA-DA}
\begin{axiom}[Disco aversion]
\label{ax:AA-DA} 
If $f,g,h\in B_{0}\left(  \C\right)
$ are such that $f$ and $g$ are disco, then  
$$
g\sim h~\Longrightarrow~\alpha f+\left(  1-\alpha\right)  h\succsim\alpha f+\left(
1-\alpha\right)  g, \mbox{~~~for all $\alpha\in\left(  0,1\right)  $.}
$$ 

\end{axiom} 

Finally, we impose the usual non-degeneracy axiom to exclude trivial preferences. 
\renewcommand\theaxiom{AA-ND}
\begin{axiom}[Non-degeneracy]
\label{ax:AA-ND} 
$f\succ g$ for some $f,g\in B_{0}\left(\C\right)$.
\end{axiom}



As in Section \ref{sec:axiom}, we assume that either $\Omega$ is finite with $n$ elements satisfying $n\ge 4$ or $\Omega$ is infinite with a fixed reference probability measure $\p$ such that $(\Omega,\mathcal F,\p)$ is a standard probability space. In the infinite case, $\p$-almost surely equal acts are seen as the identical. Below we present a stronger version of Axiom \ref{ax:AA-M}, which will be used in Theorems 
\ref{thm:diesel} and   \ref{thm:AASPS}.

\renewcommand\theaxiom{AA-SM}
\begin{axiom}[Strict monotonicity]
\label{ax:AA-SM}
If $f,g\in B_{0}\left(\C\right)$, then  $f(\omega)\succsim g(\omega)$ for all $\omega\in \Omega$ and $f(\omega)  \succ g (\omega) $ for some $\omega\in \Omega$ (in  a standard probability space, this becomes $\p(\omega\in \Omega: f (\omega) \succ g (\omega))>0$) imply $f\succ g$.
\end{axiom}


\begin{theorem}\label{thm:diesel}
Suppose that $\Omega$ is finite.
Let $\succsim$ be a preference relation on $B_{0}\left(
\C\right) $. The following conditions are equivalent:

\begin{enumerate}
\item[(i)] $\succsim$ satisfies Axioms \ref{ax:AA-SM}, \ref{ax:AA-I}, \ref{ax:AA-C}, \ref{ax:AA-DA} and \ref{ax:AA-ND}.

\item[(ii)] There exists a non-constant affine function $u:\C\rightarrow
\mathbb{R}$, two stricly positive probability measures $P$ and $Q$, and a
disappointment-weight $\alpha\in\left(0,1/2\right]$ with $0<\alpha P\leq(1-\alpha)Q$, such that given any $f,g\in B_{0}\left( \C\right)  $,%
\begin{equation}
f\succsim g\iff\ex_{\alpha}^{P,Q}(u\left(  f\right)  )\geq
\ex_{\alpha}^{P,Q}(u\left(  g\right)  ).
\end{equation}
In this case, $u$ is cardinally unique, while $P$, $Q$, and $\alpha$ are unique.
\end{enumerate}
\end{theorem}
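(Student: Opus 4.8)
The plan is to reduce Theorem~\ref{thm:diesel} to the real-outcome characterization of Theorem~\ref{th-main}, using the \cite{CGMMS11} representation recalled in Lemma~\ref{lm-CGMMS11} to split off the affine utility $u$. The direction (ii)$\Rightarrow$(i) is essentially routine: writing $I=\ex_\alpha^{P,Q}$ and using that $u$ is affine (so it carries constant acts and convex combinations into the corresponding numbers and random variables), Axioms~\ref{ax:AA-SM}, \ref{ax:AA-I}, \ref{ax:AA-C}, \ref{ax:AA-ND} fall out of the normalization, (strict) monotonicity, continuity and positive homogeneity of $\ex_\alpha^{P,Q}$ together with the intermediate value theorem and the non-constancy of $u$. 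The only step with content is Axiom~\ref{ax:AA-DA}: because $\ex_\alpha^{P,Q}$ is normalized, $I(u(f))$ is the certainty equivalent of $u(f)$ under the induced real preference, so $\{\omega:f(\omega)\prec f\}=\{\omega:u(f)(\omega)<I(u(f))\}$, and hence $f,g$ are disco if and only if $u(f),u(g)$ are; since scaling a real act by a positive constant leaves its disappointment event unchanged, $\alpha u(f)$ and $(1-\alpha)u(g)$ are then disco with $(1-\alpha)u(h)\sim(1-\alpha)u(g)$, and Axiom~\ref{ax:DA}—which holds for $\ex_\alpha^{P,Q}$ by Theorem~\ref{th-main}—gives $\ex_\alpha^{P,Q}(\alpha u(f)+(1-\alpha)u(h))\ge\ex_\alpha^{P,Q}(\alpha u(f)+(1-\alpha)u(g))$, i.e.\ $\alpha f+(1-\alpha)h\succsim\alpha f+(1-\alpha)g$.

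For the substantive direction (i)$\Rightarrow$(ii) I would first apply Lemma~\ref{lm-CGMMS11} (its hypotheses follow from Axioms~\ref{ax:AA-SM}, \ref{ax:AA-I}, \ref{ax:AA-C}, Axiom~\ref{ax:AA-SM} implying Axiom~\ref{ax:AA-M}) to obtain an affine $u:\C\to\R$ and a monotone, continuous, normalized $I:B_0(u(\C))\to\R$ with $f\succsim g\iff I(u(f))\ge I(u(g))$. By Axiom~\ref{ax:AA-ND}, $u$ is non-constant, so $\Gamma:=u(\C)$ is a non-degenerate interval, and after rescaling $u$ (cardinal uniqueness) we may assume $0$ is interior to $\Gamma$. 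Let $\succsim_I$ be the real preference on $B_0(\Gamma)$ represented by $I$. Exactly as above, normalization gives $\{f\prec f\}=\{u(f)<I(u(f))\}$, so disco-ness of acts transfers to disco-ness of their utility profiles; Axiom~\ref{ax:AA-SM} becomes strict monotonicity of $\succsim_I$, and Axiom~\ref{ax:AA-DA} becomes the following disco aversion valid on $B_0(\Gamma)$: for $X,Y,Z\in B_0(\Gamma)$ with $X,Y$ disco and $I(Z)=I(Y)$, one has $I(\alpha X+(1-\alpha)Z)\ge I(\alpha X+(1-\alpha)Y)$ for all $\alpha\in(0,1)$ (using that every element of $B_0(\Gamma)$ is $u(f)$ for a simple $f$). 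What remains is to show that a preference on $B_0(\Gamma)$ obeying strict monotonicity, continuity and this disco aversion is represented by the restriction of some $\ex_\alpha^{P,Q}$ with $\alpha\in(0,1/2]$ and $0<\alpha P\le(1-\alpha)Q$. Granting this, the representation in (ii) follows at once, cardinal uniqueness of $u$ comes from Lemma~\ref{lm-CGMMS11}, and uniqueness of $(\alpha,P,Q)$ from Theorem~\ref{th-main} together with the positive homogeneity of duet expectiles (which makes a duet expectile determined on all of $\X$ by its restriction to $B_0(\Gamma)$).

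To carry out that last step, I would extend $\succsim_I$ from $B_0(\Gamma)$ to $\X=\R^n$ by declaring $X\succsim^*Y$ whenever $tX\succsim_I tY$ for all sufficiently small $t>0$, check that $\succsim^*$ is a well-defined weak order satisfying Axioms~\ref{ax:SM}, \ref{ax:C} and \ref{ax:DA}, apply Theorem~\ref{th-main}, and restrict back to $B_0(\Gamma)$. I expect the well-definedness of $\succsim^*$ to be the main obstacle: it is equivalent to $I$ being positively homogeneous on a neighbourhood of $0$ (so that the sign of $I(tX)-I(tY)$ is eventually constant as $t\downarrow0$) and translation-invariant there. This is precisely where the convex-combination form of Axiom~\ref{ax:AA-DA}, rather than the additive Axiom~\ref{ax:DA}, is needed—convex combinations of acts in $B_0(\Gamma)$ remain in $B_0(\Gamma)$—and the derivation should mirror the homogeneity step inside the proof of Theorem~\ref{th-main}, also using the quasi-concavity of $I$ that already follows from Axiom~\ref{ax:AA-DA} with $f=g$ (as anticipated by Theorem~\ref{th:riskaverse1}). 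An equally good route, which I would flag as an alternative, is to skip the extension and rerun the argument of Theorem~\ref{th-main} directly for $I$ on $B_0(\Gamma)$: every step there stays inside $B_0(\Gamma)$ because $\Gamma$ has non-empty interior and the sums appearing in that proof can be replaced throughout by convex combinations—which is exactly why Axiom~\ref{ax:AA-DA} is phrased with convex combinations in the first place.
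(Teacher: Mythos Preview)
Your outline is correct and matches the paper's approach: apply Lemma~\ref{lm-CGMMS11}, transfer the AA axioms to a real preference $\succsim'$ on $B_0(\Gamma)$, establish positive homogeneity of $I$ so as to extend it to all of $\X=\R^n$, and then invoke Theorem~\ref{th-main}. The paper uses exactly the extension $\widetilde I(X)=I(\lambda X)/\lambda$ for suitable $\lambda>0$, which is your $\succsim^*$ once homogeneity is known.

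The one genuine gap is in how you propose to obtain positive homogeneity. You say the derivation ``should mirror the homogeneity step inside the proof of Theorem~\ref{th-main}'', but that step (Lemma~\ref{lm-property}\,(vi)) gets $U(2X)=2U(X)$ from disco additivity $U(X+X)=U(X)+U(X)$, which in turn rests on the \emph{additive} Axiom~\ref{ax:DA}. On $B_0(\Gamma)$ you only have the convex-combination form, and $X+X$ need not lie in $B_0(\Gamma)$, so that route is circular---you would be using the additive axiom to justify the extension that is supposed to deliver the additive axiom. A different argument is required, and the paper supplies one directly from the convex-combination version: since the constant $I(X)$ and the constant $0$ are trivially disco (empty disappointment event) and $X\sim' I(X)$, the axiom yields $\lambda X+(1-\lambda)\cdot 0\succsim'\lambda I(X)+(1-\lambda)\cdot 0$, i.e.\ $I(\lambda X)\ge\lambda I(X)$ for $\lambda\in(0,1)$. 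For the reverse inequality one takes $Y_\epsilon=\epsilon\id_{A^c}$ with $A=\{X<I(X)\}$, observes that $X$ and $Y_\epsilon$ are disco, applies the axiom to get $\lambda I(X)+(1-\lambda)Y_\epsilon\succsim'\lambda X+(1-\lambda)Y_\epsilon$, and lets $\epsilon\downarrow0$ using continuity. Neither quasi-concavity nor translation invariance is needed here; positive homogeneity alone makes the extension well-defined, and the rest of your verification goes through.
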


Similar to Section \ref{sec:general}, for the infinite case we introduce an  extra continuity axiom.

\renewcommand\theaxiom{AA-MC}
\begin{axiom}[Monotone continuity]
\label{ax:AA-MC}
For all $x,y,z\in\mathcal C$ such that $x\succ y$, if $\{A_n\}_{n\in\N}\subseteq \mathcal F$ satisfies $A_1\supseteq A_2\supseteq\dots$ and $\bigcap_{n\in\N}A_n=\varnothing$, 
then $z\id_{A_{n_0}}+x\id_{A_{n_0}^c}\succ y$ for some $n_0\in \N$.
\end{axiom}


\begin{theorem}\label{thm:AASPS}
Suppose that $(\Omega,\mathcal F,\p)$ is a standard probability space.
Let $\succsim$ be a preference relation on $B_{0}\left(
\C\right) $. The following conditions are equivalent:

\begin{enumerate}
\item[(i)] $\succsim$ satisfies Axioms \ref{ax:AA-SM}, \ref{ax:AA-I}, \ref{ax:AA-C}, \ref{ax:AA-DA}, \ref{ax:AA-ND} and \ref{ax:AA-MC}.

\item[(ii)] there exists a non-constant affine function $u:\C\rightarrow
\mathbb{R}$, two probability measures $P$ and $Q$, and a
disappointment-weight $\alpha\in\left(0,1/2\right] $ with $P\ac Q\ac \p$ and $\alpha P\leq(1-\alpha)Q$, such that given any $f,g\in B_{0}\left(  \C\right)  $,%
\begin{equation}
\label{numrepAA}
f\succsim g\iff\ex_{\alpha}^{P,Q}(u\left(  f\right)  )\geq
\ex_{\alpha}^{P,Q}(u\left(  g\right)  ).
\end{equation}
In this case, $u$ is cardinally unique, while $P$, $Q$, and $\alpha$ are unique.
\end{enumerate}
\end{theorem}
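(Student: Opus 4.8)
The plan is to prove the two implications of Theorem~\ref{thm:AASPS} separately and to route the difficult direction through Theorem~\ref{th:main-general}. For (ii)$\Rightarrow$(i), I would verify the six axioms for the preference $f\succsim g\iff\ex_{\alpha}^{P,Q}(u(f))\ge\ex_{\alpha}^{P,Q}(u(g))$. Non-degeneracy is immediate since $u$ is non-constant; Axioms~\ref{ax:AA-I} and~\ref{ax:AA-C} hold because $u$ is affine and $\ex_{\alpha}^{P,Q}$ is normalized and continuous, so the induced order on constant acts is exactly the one given by $u$ and the Archimedean property is inherited from the continuity of $y\mapsto\ex_{\alpha}^{P,Q}(y)$ along line segments; Axiom~\ref{ax:AA-SM} follows from the strict monotonicity of $\ex_{\alpha}^{P,Q}$, which rests on $P\ac Q\ac\p$ (so $\p$-positive sets are $P$- and $Q$-positive) together with $\alpha P\le(1-\alpha)Q$ and is already recorded among the properties of duet expectiles in Appendix~\ref{sec:properties}. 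The only axiom needing real work is~\ref{ax:AA-DA}: given disco $f,g$ and $g\sim h$, affinity of $u$ reduces the claim to $\ex_{\alpha}^{P,Q}(\alpha u(f)+(1-\alpha)u(h))\ge\ex_{\alpha}^{P,Q}(\alpha u(f)+(1-\alpha)u(g))$; since disco is preserved under positive scaling, the right-hand side equals $\alpha\ex_{\alpha}^{P,Q}(u(f))+(1-\alpha)\ex_{\alpha}^{P,Q}(u(g))$ by additivity of $\ex_{\alpha}^{P,Q}$ on disco pairs, while the left-hand side is at least $\alpha\ex_{\alpha}^{P,Q}(u(f))+(1-\alpha)\ex_{\alpha}^{P,Q}(u(h))$ by superadditivity (concavity of $\ex_{\alpha}^{P,Q}$ under $\alpha P\le(1-\alpha)Q$, plus positive homogeneity), so the inequality follows from $\ex_{\alpha}^{P,Q}(u(g))=\ex_{\alpha}^{P,Q}(u(h))$. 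All the structural properties of $\ex_{\alpha}^{P,Q}$ invoked here are available from the preceding sections.

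For (i)$\Rightarrow$(ii), first note that Axiom~\ref{ax:AA-SM} implies Axiom~\ref{ax:AA-M} by a standard perturbation argument using Axiom~\ref{ax:AA-C}, so Lemma~\ref{lm-CGMMS11} applies and produces an affine $u:\mathcal C\to\R$ and a monotone, continuous, normalized $I:B_0(u(\mathcal C))\to\R$ with $f\succsim g\iff I(u(f))\ge I(u(g))$. Axiom~\ref{ax:AA-ND} forces $u$ non-constant, hence $\Gamma:=u(\mathcal C)$ is a non-degenerate interval, and after a cardinal renormalization of $u$ I may assume $0\in\mathrm{int}\,\Gamma$. Because $I(u(f))$ depends on $f$ only through $u(f)$ and $u$ maps $B_0(\mathcal C)$ onto $B_0(\Gamma)$, the preference descends to a well-defined preference on $B_0(\Gamma)$ represented by $I$. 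I would then translate the remaining axioms into statements about $I$ on $B_0(\Gamma)$: Axiom~\ref{ax:AA-SM} becomes strict monotonicity, Axiom~\ref{ax:AA-MC} becomes weak monotone continuity, and Axiom~\ref{ax:AA-DA} becomes disco aversion for $I$ in its convex-combination form (convexity of $\mathcal C$ ensures $\alpha f+(1-\alpha)g$ is always a legitimate act, so no value of $\alpha\in(0,1)$ is lost, and disappointment events are read off from the certainty equivalents $I(u(f))$ exactly as in Lemma~\ref{lem:S}).

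The core of the argument is to show that the translated axioms pin $I$ down to a duet expectile on $B_0(\Gamma)$. I would first prove that $I$ is constant-linear on $B_0(\Gamma)$, i.e.\ positively homogeneous and translation invariant; this is the genuinely new work, paralleling the derivation in the proof of Theorem~\ref{th-main}, where strict monotonicity, continuity and disco aversion already force the representing functional to be a homogeneous niveloid. Given constant-linearity, one extends $I$ to all of $L^\infty(\Omega,\mathcal F,\p)$ by setting $I^{*}(X):=t^{-1}I(tX)$ for any $t>0$ small enough that $tX\in B_0(\Gamma)$ (well-defined by homogeneity) and then extending continuously; the preference $\succsim^{*}$ represented by $I^{*}$ agrees with the original one on $B_0(\Gamma)$ and satisfies Axioms~\ref{ax:SM},~\ref{ax:C},~\ref{ax:MC} and~\ref{ax:DA}, the sum form of~\ref{ax:DA} following from its convex-combination form via positive homogeneity. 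Theorem~\ref{th:main-general} then yields $P,Q\in\mathcal M_1$ and $\alpha\in(0,1/2]$ with $\alpha P\le(1-\alpha)Q$ and $P\ac Q\ac\p$ such that $I^{*}=\ex_{\alpha}^{P,Q}$, whence $I=\ex_{\alpha}^{P,Q}$ on $B_0(\Gamma)$ and the representation~\eqref{numrepAA} holds. Uniqueness of $P$, $Q$, $\alpha$ is inherited from Theorem~\ref{th:main-general} and is insensitive to the cardinal choice of $u$ because $\ex_{\alpha}^{P,Q}$ is itself constant-linear, while cardinal uniqueness of $u$ is the uniqueness clause of Lemma~\ref{lm-CGMMS11}.

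I expect the main obstacle to be precisely the constant-linearity step, and more broadly the passage from a functional defined only on the possibly bounded domain $B_0(\Gamma)$, with disco aversion available only in convex-combination form, to a functional on all of $L^\infty$ satisfying the sum-form hypotheses of Theorem~\ref{th:main-general}. If $\Gamma$ were unbounded this would be routine, but in general one must show that the ``local'' behaviour of $I$ near $0\in\mathrm{int}\,\Gamma$ already forces homogeneity and translation invariance---equivalently, re-run the heart of the Section~\ref{sec:proofth1} argument with bounded outcomes---and then carry out the bookkeeping that the extended preference $\succsim^{*}$ still obeys strict monotonicity and weak monotone continuity, neither of which transfers to limits for free.
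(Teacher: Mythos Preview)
Your plan is correct and matches the paper's route almost exactly: apply Lemma~\ref{lm-CGMMS11}, translate the AA axioms to properties of $I$ on $B_0(u(\mathcal C))$, establish positive homogeneity of $I$ from \ref{ax:AA-DA}, extend by scaling, and then invoke the real-outcome characterization. Two small deviations are worth flagging. First, the paper proves only \emph{positive homogeneity} of $I$ on $B_0(u(\mathcal C))$ before extending (via the trick $Y=0$ for one inequality and $Y_\epsilon=\epsilon\id_{A^c}$ with $A$ the disappointment event for the other); translation invariance is not derived directly but comes out of the characterization theorem afterwards through Lemma~\ref{lm-property}, so you can drop it from your ``constant-linearity'' step. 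Second, rather than extending continuously to all of $L^\infty$ and invoking Theorem~\ref{th:main-general}, the paper extends only to $B_0(\R)$ by $\widetilde I(X)=\lambda_X^{-1}I(\lambda_X X)$ and applies Corollary~\ref{cor:thm2}, which is precisely Theorem~\ref{th:main-general} restricted to simple functions; this sidesteps your acknowledged obstacle of transferring \ref{ax:SM} and \ref{ax:MC} through a limit. Finally, in the (ii)$\Rightarrow$(i) direction you should also record the one-line verification of \ref{ax:AA-MC}, which uses $\sigma$-additivity of $P,Q$ to get $\ex_\alpha^{P,Q}(u(z)\id_{A_n}+u(x)\id_{A_n^c})\to u(x)$.
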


If we go back to the setting of Section \ref{sec:axiom} and restrict the preference relation in \eqref{numrepAA} to deterministic lotteries, that is if we consider only acts such that $f(\omega) = \delta_{X(\omega)}$ with some finite-valued function $X:\Omega\to \R$, then the preference induced on $\X$ has the numerical representation 
$$
X\succsim Y\iff\ex_{\alpha}^{P,Q}(v\left(  X\right)  )\geq
\ex_{\alpha}^{P,Q}(v\left(  Y\right)  ),
$$
where
$
v(x)= u(\delta_x)\mbox{~for all $x\in\R$}.
$
Hence, if we require axioms in Theorem \ref{thm:AASPS} on an enlarged outcome set, we arrive at a preference relation that includes as a special case duet expectiled utilities, which in turn include as a special case Gul's expectiled utility, as discussed in Section \ref{subsec:Gul}.

\subsection{Duet expectiled utility as min-max utility preferences} \label{subsec:GS}
For a preference relation, 
it is straightforward to see that all axioms in statement (i) of Theorem \ref{thm:AASPS} imply axioms A.1--A.6 in \cite{GS89},  
the main point being that disco aversion implies uncertainty aversion,
which can be seen by taking $f=g$ in Axiom \ref{ax:AA-DA}; see also results in Section \ref{sec:52}. As a consequence, duet expectiled utilities have a representation as worst-case expected utilities.
Moreover, the set of relevant priors can be obtained explicitly by means of Proposition \ref{prop:dual} in Appendix \ref{sec:DR}. Precisely, when $\alpha P\le (1-\alpha)Q$, it holds that
$$
\ex_{\alpha}^{P,Q}(u(f)) = \inf_{R\in \mathcal R}\E^R[u(f)],
$$
where
$$\mathcal R =\left \{R\in \M_1, R \ll  \p,  (1-\alpha)\frac{\d R}{\d P}(\omega) \ge   \alpha\frac{\d R}{\d Q}(\omega') \mbox{~~$P\times Q$-almost surely}\right\},
$$
where $\d R/\d P$ and $\d R/\d Q$ denote Radon-Nikodym derivatives.
Therefore, the duet expectiled utility preferences belong to  the class of
min-max utility preferences of \cite{GS89}, and hence they are 
also special cases of the variational preferences of \cite{MMR06}.

\section{Overview of proofs of the main theorems}\label{sec:proofth1}


We first provide an overview of the proofs of our main results, Theorems \ref{th-main} and \ref{th:main-general}, which will help to prove other theorems in Sections \ref{sec:solo} and \ref{sec:AA}. 
The full proofs of these two theorems are presented in Appendix \ref{app:proofs}.
A brief sketch of the proofs of Theorems \ref{th:p-ex} and \ref{th:riskaverse1} will be explained later in the section.
Theorems \ref{thm:diesel} and \ref{thm:AASPS} 
are shown by first using Lemma \ref{lm-CGMMS11} to   convert the setting of real outcomes used in Theorems \ref{th-main}--\ref{th:riskaverse1}
to the AA framework, and then proving positive homogeneity that allows us to connect Axioms \ref{ax:DA} and   \ref{ax:AA-DA}. 
The detail proofs of Theorems \ref{thm:diesel} and \ref{thm:AASPS} are presented in Appendix \ref{sec:proof-AA}.

\subsection{Two lemmas and some properties of duet expectiles}\label{sec:7.1}
 
We first present two lemmas, 
Lemmas \ref{lm-property} and   \ref{prop-uniquedef}, which will appear in the proofs of all theorems.  Lemma \ref{lem:S} guarantees the existence of a unique certainty equivalent based on the stated axioms, which is denoted by $U$.

\begin{lemma}\label{lm-property}
Assume that 
$(\Omega,\mathcal F,\p)$ is finite with $\p>0$ or $(\Omega,\mathcal F,\p)$ is an atomless probability space. 
Axioms \ref{ax:SM}, \ref{ax:C} and \ref{ax:DA} imply the following properties of  $U$.
\begin{itemize}
	\item [(i)] 
Internality:  $U(X)\in(\essinf_{\p} X,\esssup_{\p} X)$ for all nondegenerate $X\in\mathcal X$; 

	\item[(ii)] Translation invariance: $U(X+m)=U(X)+m$ for all $X\in\X$ and $m\in\R$;
	\item[(iii)] Lipschitz continuity: $|U(X)-U(Y)|\le \|X-Y\|$ for all  $X,Y\in\mathcal X$;
	\item[(iv)]  Superadditivity: $U(X+Y)\ge U(X)+U(Y)$ for all $X,Y\in\mathcal X$;
	\item[(v)] Disco additivity: If $\{X<U(X)\}=\{Y<U(Y)\}$, then $U(X+Y)=U(X)+U(Y)$;	\item[(vi)] Positive homogeneity: $U(\lambda X)=\lambda U(X)$ for all $X\in\mathcal X$ and $\lambda> 0$.
\end{itemize}
\end{lemma}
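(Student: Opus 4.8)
The plan is to derive the six properties of $U$ essentially in the order listed, since each one leans on the earlier ones, and to extract translation invariance and positive homogeneity first because those give the algebraic structure that makes the remaining properties tractable.

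\textbf{Translation invariance and internality.} For (ii), fix $X$ and $m$; I would apply Axiom \ref{ax:DA} with $Y=c_X$ (a constant act), noting that a constant act has empty disappointment set, and any $Y$ with $U(Y)\ge$ or $\le$ all outcomes likewise. Actually the clean route: constants are disco with everything trivially only when $D_X=\varnothing$, so instead I would prove (ii) directly from Axiom \ref{ax:SM} and \ref{ax:C} by a standard argument — $X\sim c_X$ forces $X+m$ to have certainty equivalent $c_X+m$ because $\succsim$ restricted to $\{X+t: t\in\R\}$ is strictly increasing and continuous, and $c_X + m \ge Y$ iff $X + m \ge Y$-shifted; more carefully, one shows $X+m \sim c_X+m$ by sandwiching. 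For (i), internality, I would argue that if $U(X)\ge \esssup_\p X$ then $X \le U(X)$ with strict inequality on a non-null set, contradicting $X\sim U(X)$ via Axiom \ref{ax:SM}; symmetric argument on the other side.

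\textbf{Lipschitz continuity and superadditivity.} Property (iii) follows from (ii) plus monotonicity (implied by \ref{ax:SM}): $Y - \|X-Y\| \le X \le Y + \|X-Y\|$ pointwise gives $U(Y)-\|X-Y\|\le U(X)\le U(Y)+\|X-Y\|$. For the key property (v), disco additivity: if $X$ and $Y$ are disco, then so are $X$ and $Y + U(X) - U(Y)$ after using (ii) (shifting $Y$ by a constant does not move its disappointment set). Taking $Y' = Y + U(X)-U(Y) \sim X$... wait — I need $Y'\sim Y$, not $Y'\sim X$. Let me instead set it up as: $X$ and $Y$ disco, and I want $U(X+Y)=U(X)+U(Y)$. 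Apply Axiom \ref{ax:DA} with the roles so that $Y'$ ranges over acts indifferent to $Y$; the disco pair is $(X,Y)$, so $X + Y' \succsim X + Y$ for all $Y'\sim Y$. To get the reverse inequality (hence equality, giving both (iv) restricted to disco pairs and, combined, (v)), I would exploit that disco is a symmetric relation and that for a disco pair the certainty equivalent of the sum is ``maximally small'' — and simultaneously pick a clever $Y'$ (e.g. a constant, or a suitably rotated act) that is indifferent to $Y$ but comonotone-like with $X$ to push $U(X+Y')$ up to $U(X)+U(Y)$, using internality to locate that indifferent constant. For general superadditivity (iv), the standard trick is: given arbitrary $X,Y$, approximate or directly construct $Y'$ disco with $X$ and $Y'\sim Y$ (possible because one can shift outcomes of $Y$ on the disappointment and elation parts of $X$ while preserving its certainty equivalent — here is where having $n\ge 4$ or an atomless space matters), then $U(X+Y)\le U(X+Y')$ would go the wrong way... so instead one takes the disco pair to be witnessed differently. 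The cleanest: Axiom \ref{ax:DA} says the disco configuration is the \emph{least} preferred among all $Y'\sim Y$; among those $Y'$ is a constant $Y'=U(Y)$, for which $U(X+U(Y))=U(X)+U(Y)$ by (ii); hence if $X,Y$ are disco, $U(X)+U(Y) = U(X+U(Y)) \ge U(X+Y)$ — but this requires $U(Y) $ and $Y$ to be disco with... no. I will need to be careful: the disco pair in the hypothesis of \ref{ax:DA} is $(X,Y)$, and $Y'$ is merely indifferent to $Y$, so I get $U(X+Y')\ge U(X+Y)$ for \emph{every} $Y'\sim Y$, in particular $Y'=U(Y)$ giving $U(X)+U(Y)\ge U(X+Y)$ when $X,Y$ disco — that is \emph{sub}additivity on disco pairs. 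Combined with the reverse (obtained by swapping which act plays the ``$Y$'' role, or a separate diversification argument) this yields (v). General superadditivity (iv) then comes from: any $X,Y$ can be compared to a disco pair — given $X,Y$ arbitrary, there exists $\tilde Y \sim Y$ with $\tilde Y$ disco with $X$ (shift $Y$'s values on $D_X$ and $D_X^c$ appropriately, rescaling to keep $U$ fixed), and then $U(X+Y)$ versus $U(X+\tilde Y)=U(X)+U(Y)$ — but \ref{ax:DA} gives $U(X+\tilde Y)\le U(X+Y)$ only if $\tilde Y$ is the disco one and $Y$ the indifferent one, which it is, so $U(X)+U(Y)\le U(X+Y)$. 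That is (iv).

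\textbf{Positive homogeneity.} For (vi) with rational $\lambda$ first: $U(2X)\ge 2U(X)$ by (iv), and $2X$ is disco with $X$ (they have the same disappointment set since $2X(\omega) < U(2X) \iff X(\omega) < U(2X)/2$ and one checks $U(2X)/2 = U(X)$ using internality/the indifference), so disco additivity gives $U(2X)=U(X+X)=2U(X)$; iterate and interpolate to all positive rationals, then use Lipschitz continuity (iii) to pass to all $\lambda>0$.

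\textbf{Main obstacle.} The delicate point — and the step I expect to absorb most of the work — is the reverse inequality in disco additivity, i.e. getting $U(X+Y)\ge U(X)+U(Y)$ for a disco pair, which is what actually upgrades the one-sided bound from Axiom \ref{ax:DA} into an equality, and relatedly the construction (needing $n\ge 4$ or atomlessness) of an act $\tilde Y$ that is simultaneously indifferent to a given $Y$ and disco with a given $X$; ensuring such $\tilde Y$ exists with full freedom to also control comonotonicity is where the cardinality/atomless hypothesis is genuinely used, and it is the technical heart that the counter-examples for $n\le 3$ are flagged to illuminate.
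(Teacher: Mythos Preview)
Your proposal has a genuine gap in the proof of translation invariance (ii). You claim it follows ``directly from Axiom \ref{ax:SM} and \ref{ax:C} by a standard argument'', but this is false: strict monotonicity and continuity alone only give existence and uniqueness of a certainty equivalent $U$, not that $U(X+m)=U(X)+m$. Any strictly increasing continuous $U:\R^n\to\R$ with $U(c,\dots,c)=c$ satisfies \ref{ax:SM} and \ref{ax:C}, and most such $U$ are not translation invariant. Your first instinct to use Axiom \ref{ax:DA} was correct, and you abandoned it too quickly. The paper obtains $U(X+m)\ge U(X)+m$ by noting that the two \emph{constants} $U(X)$ and $m$ are disco (both have empty disappointment event), so \ref{ax:DA} with $Y'=X\sim U(X)$ gives $m+X\succsim m+U(X)$. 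For the reverse inequality the paper perturbs $m$ into $Y_\epsilon=m\id_A+(m+\epsilon)\id_{A^c}$ where $A=\{X<U(X)\}$; internality forces $D_{Y_\epsilon}=A$, so $X$ and $Y_\epsilon$ are disco, and \ref{ax:DA} with $X'=U(X)\sim X$ gives $U(X+Y_\epsilon)\le U(U(X)+Y_\epsilon)$; letting $\epsilon\downarrow0$ yields $U(X+m)\le U(X)+m$.

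Your plan for (iv) and (v) also inverts the clean order. You try to prove disco additivity (v) first, get stuck on the reverse inequality, and then hope to recover general superadditivity (iv) by constructing for arbitrary $X,Y$ an act $\tilde Y\sim Y$ that is disco with $X$ --- a construction you flag as the ``main obstacle'' and the place where $n\ge4$ is needed. The paper avoids this entirely by proving (iv) \emph{first}, via a limiting argument that parallels the one for (ii): given nondegenerate $Y$ with $A=\{Y<U(Y)\}$, set $X_\epsilon=\epsilon\id_{A^c}$, which by internality is disco with $Y$; applying \ref{ax:DA} to the pair $(Y,X_\epsilon-U(X_\epsilon))$ with the indifferent act $Z-U(Z)$ (for arbitrary $Z$) and sending $\epsilon\to0$ gives $U(Z+Y)\ge U(Z)+U(Y)$ directly. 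Then (v) is immediate: subadditivity on disco pairs comes from \ref{ax:DA} with $Y'=U(Y)$, and combining with (iv) gives equality. In particular, Lemma \ref{lm-property} does \emph{not} require $n\ge4$; that hypothesis enters only later, in the representation theorem.
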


Although nontrivial, the proof of Lemma \ref{lm-property} follows from some standard techniques in decision theory; see Appendix \ref{app:pf-lem} for details.

We denote by $\mathcal M=\mathcal M(\Omega,\mathcal F)$ 
the set of all non-zero $\sigma$-additive measures on $(\Omega,\mathcal F)$.
For $P\in\mathcal M$, let $\widetilde{P}\in\mathcal M_1$ be the normalized version of $P$, 
i.e., $\widetilde{P}=P/{P(\Omega)}$, and we write $\E^P[X]$ for the integral of $X$ with respect to $P$. 
For $P,Q\in\mathcal M_1$,
it is immediate to see that $\ex_\alpha^{P,Q} =\ex^{\alpha P, (1-\alpha) Q}$, where 
\begin{align*}
\ex^{R,H}(X):=\inf\left \{x\in \R: \E^{R}[(X-x)_+]\le \E^{H}[(x-X)_+] \right\},~~~R,H\in\mathcal M,~X\in \X.
\end{align*}
One the other hand, for $P,Q\in\mathcal M$, it holds that $\ex^{P,Q} =\ex_\alpha^{\widetilde P, \widetilde Q}$ with $\alpha=P(\Omega)/(P(\Omega)+Q(\Omega))\in (0,1)$. To simplify the presentation, we will also use the term $\ex^{P,Q}$ with $P,Q\in\mathcal M$ to represent duet expectiles.
To prove the necessity statement of Theorem \ref{th-main},
we need to verify two claims:
\begin{itemize}
\item[(a)] With the conditions in Lemma \ref{lm-property}, $U=\ex^{P,Q}$ for some $P,Q\in\mathcal M$ with $0<P\le Q$, where $\ex^{P,Q}(X)$ is the unique number $y\in\R$ such that 
$\E^P[(X-y)_+]=\E^Q[(y-X)_+]$.  This is the main step of the proof.

\item[(b)] The triplet $(\widetilde{P},\widetilde{Q},\alpha)$ with $\alpha=P(\Omega)/(P(\Omega)+Q(\Omega))$ is the unique one such that $\ex^{P,Q}(X)=\ex_{\alpha}^{\widetilde{P},\widetilde{Q}}(X)$ for all $X\in\mathcal  X$.  
This claim is shown in Lemma \ref{prop-uniquedef} below.
\end{itemize} 
The proof of Theorem \ref{th:main-general} also involves corresponding versions of claims (a) and (b).

\begin{lemma}\label{prop-uniquedef}
Assume that 
$\Omega$ is finite or $(\Omega,\mathcal F,\p)$ is an atomless probability space, and $\succsim$ is represented by $\ex_\alpha^{P, Q}$ for some $(\alpha,P,Q)\in  (0,1)\times\mathcal M_1\times \mathcal M_1$. 
Then $(\alpha,P,Q)$ is unique under Axiom \ref{ax:SM}. 
\end{lemma}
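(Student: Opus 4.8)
The plan is to reconstruct $(\alpha,P,Q)$ directly from the preference relation $\succsim$ (equivalently, from the certainty-equivalent functional $U=\ex_\alpha^{P,Q}$) using a small family of simple test acts. First I would exploit \emph{translation invariance} and \emph{positive homogeneity} of $\ex_\alpha^{P,Q}$ (which follow immediately from the defining equation \eqref{eq:def-ex}, or from Lemma \ref{lm-property} once the axioms are in force) to normalize: it suffices to pin down the value of $U$ on indicator-type acts. The key observation is that for an event $A\in\mathcal F$ and a constant $c$, the act $c\id_A$ has a disappointment event that depends on the sign of $c$ and on where $U(c\id_A)$ falls relative to $0$ and $c$; evaluating \eqref{eq:def-ex} on $X=\id_A$ gives $U(\id_A)=y$ where $\alpha P(A)(1-y)=(1-\alpha)Q(A^c)\,y$, hence
\[
U(\id_A)=\frac{\alpha P(A)}{\alpha P(A)+(1-\alpha)Q(A^c)}.
\]
Similarly, evaluating on $X=-\id_A$ (or on $\id_A-\id_B$ for disjoint $A,B$) yields a second, algebraically independent functional of $P(A)$, $Q(A)$, $\alpha$. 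Solving this system over all $A$ recovers the set functions $\alpha P(\cdot)$ and $(1-\alpha)Q(\cdot)$ up to the single scalar ambiguity, which is then removed because $P$ and $Q$ are probability measures (total mass $1$) and $\alpha\in(0,1)$; in the notation of the measure-valued parametrization $\ex^{R,H}$, one first recovers $R=\alpha P$ and $H=(1-\alpha)Q$ as genuine measures, then reads off $\alpha=R(\Omega)/(R(\Omega)+H(\Omega))$ and $P=R/R(\Omega)$, $Q=H/H(\Omega)$.

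The role of Axiom \ref{ax:SM} is exactly to make these test acts \emph{discriminating}: strict monotonicity guarantees $U(\id_A)$ is a strictly increasing, hence injective, function of the relevant data, and more importantly it rules out the degenerate possibility that $P(A)=0$ while the preference still ``sees'' $A$ (or vice versa) — under SM, $P\ac Q\ac\p$ in the atomless case, and $P,Q>0$ in the finite case, so no event is invisible and the reconstruction above is valid for every $A$. Without SM one could, e.g., rescale mass between a $P$-null, $Q$-null set in a way invisible to $\succsim$, breaking uniqueness; I would mention this (it is the content of the remark after Theorem \ref{th:riskaverse1} and the Appendix counterexamples) but not belabor it.

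The main obstacle I anticipate is bookkeeping rather than conceptual: one must verify that the chosen family of test acts (indicators, differences of indicators, and their positive scalings) is \emph{rich enough} to separate any two distinct triplets, i.e. if $\ex_{\alpha_1}^{P_1,Q_1}=\ex_{\alpha_2}^{P_2,Q_2}$ on all of $\X$ then the measures agree on all of $\mathcal F$ and $\alpha_1=\alpha_2$. This requires checking a few cases according to where the certainty equivalent of the test act lies (so that one knows which branch $(X-y)_+$ versus $(y-X)_+$ is active), and, in the atomless case, ensuring the argument only uses events of the prescribed measures — which is where the standard/atomless structure is used, via Lemma \ref{lm-property}(i) (internality) to keep $U(X)$ strictly between the essential infimum and supremum so that both positive parts in \eqref{eq:def-ex} are genuinely nonzero. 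Once the test-act family is fixed, the extraction of $(\alpha,P,Q)$ is a short computation, and uniqueness is immediate.
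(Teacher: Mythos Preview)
Your strategy—reconstruct $(\alpha,P,Q)$ by evaluating $U$ on simple test acts and solving the resulting algebraic system—is correct and is in the same spirit as the paper's proof.  One small caveat: your suggestion that $U(-\id_A)$ supplies a second independent equation is misleading, since by translation invariance $U(-\id_A)=U(\id_{A^c})-1$, so the family $\{U(-\id_A)\}_A$ adds nothing beyond $\{U(\id_B)\}_B$.  You therefore have to rely on your alternative, acts of the form $\id_A-\id_B$ with $A,B$ disjoint, and this is precisely what the paper does.

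The paper's execution is cleaner than your outline in one respect that dissolves the ``bookkeeping'' obstacle you anticipate.  Rather than computing $U$ on a family of test acts and disentangling the resulting system (with its case split on the sign of the certainty equivalent), the paper fixes a three-set partition $\{S_1,S_2,S_3\}$ of $\Omega$ (with all $P$- and $Q$-masses positive, which Axiom \ref{ax:SM} guarantees) and uses the bi-atomic acts
\[
X_{i,j}=\id_{S_i}-\frac{\alpha\,p_i}{(1-\alpha)\,q_j}\,\id_{S_j},\qquad i\neq j,
\]
whose coefficient is chosen precisely so that $\ex_\alpha^{P,Q}(X_{i,j})=0$.  Since a competing representation $(\alpha',P',Q')$ must agree on these acts, one reads off $\ex_{\alpha'}^{P',Q'}(X_{i,j})=0$, which (because the certainty equivalent is forced to be $0$, so no sign ambiguity) yields directly
\[
\frac{\alpha\,p_i}{\alpha'\,p_i'}=\frac{(1-\alpha)\,q_j}{(1-\alpha')\,q_j'}\quad\text{for all }i\neq j.
\]
Summing over $i$ and $j$ and using that $P,Q,P',Q'$ are probability measures gives $\alpha/\alpha'=(1-\alpha)/(1-\alpha')$, hence $\alpha=\alpha'$; the cross-ratio then says $p_i/p_i'$ and $q_j/q_j'$ are equal constants, which must be $1$.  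Forcing $U=0$ by choosing the coefficient is the device that collapses your anticipated case analysis to a single line.
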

 
The proof of Lemma \ref{prop-uniquedef} can be shown by constructing, for each set $T\subseteq \Omega$ of states with non-zero probability under $P$, several bi-atomic random variables $X$ and solving equations
  $\ex^{P,Q}_\alpha(X) =\ex^{P',Q'}_{\alpha'}(X) $, and this yields $P'(T)=P(T)$, $Q'(T)=Q(T)$ and $\alpha'=\alpha$.
A detailed proof is presented in Appendix \ref{app:pf-lem}. 

The following properties of duet expectiles will also be used in the proofs of our main results.
 In what follows, 
for $P,Q\in\mathcal M$,
 we say that $\ex^{P,Q}$
 is strictly monotone if
the preference represented by $\ex^{P,Q}$ satisfies Axiom \ref{ax:SM}.
 For $P\in \M$, let 
$\|\cdot\|_{1}^P$ be the  $L^1$-norm under $P$, i.e.,  
$\Vert X\Vert^P_1=\E^P[|X|]$.
Denote by $P\vee Q$  the maximum of two measures $P$ and $Q$, namely the smallest element $R$ in $\M$ satisfying $R\ge P$ and $R\ge Q$. 
Similarly, denote by $P\wedge Q$   the minimum   of $P$ and $Q$. 

\begin{proposition}\label{prop:technical}
Let $P,Q\in\mathcal M$.
\begin{itemize} 
\item [(i)] If   $ (P\wedge Q)(\Omega) \ne 0$, then $L^1$-continuity of $\ex^{P,Q}$ holds as 
$$|\ex^{P,Q}(X)- \ex^{P,Q}(Y)|\le  \frac{ \|X-Y\|^{P\vee Q}_1}{(P\wedge Q)(\Omega)},~~~\mbox{for all }X,Y\in \X.$$ 
\item  [(ii)] $\ex^{P,Q}$ is strictly monotone if and only if 
$P,Q>0$ in case $\Omega$ is finite, or $P\ac Q\ac \p$ in case $\Omega$ is infinite.   
\item  [(iii)]
Assume that $\ex^{P,Q}$ is strictly monotone. Then, $\ex^{P,Q}$ is concave (resp.~convex) if and only if $P\le Q$ (resp.~$P\ge Q$).   
\end{itemize}
\end{proposition}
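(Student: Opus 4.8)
\textbf{Proof plan for Proposition \ref{prop:technical}.}

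The plan is to prove the three parts in order, since (ii) and (iii) will lean on the $L^1$-continuity estimate of (i). For part (i), I would fix $X,Y\in\X$ and set $x=\ex^{P,Q}(X)$, $y=\ex^{P,Q}(Y)$; without loss of generality assume $x\le y$. The defining equations are $\E^P[(X-x)_+]=\E^Q[(x-X)_+]$ and $\E^P[(Y-y)_+]=\E^Q[(y-Y)_+]$. The idea is to evaluate the function $\Phi_Y(t):=\E^P[(Y-t)_+]-\E^Q[(t-Y)_+]$ at $t=x$: since $\Phi_Y$ is nonincreasing and $\Phi_Y(y)=0$, we get $\Phi_Y(x)\ge 0$, i.e. $\E^P[(Y-x)_+]\ge\E^Q[(x-Y)_+]$. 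Subtracting the first defining equation gives $\E^P[(Y-x)_+]-\E^P[(X-x)_+]\ge\E^Q[(x-Y)_+]-\E^Q[(x-X)_+]$. Now I would bound the left side from above by $\E^P[|X-Y|]=\|X-Y\|_1^P$ (since $t\mapsto(t)_+$ is $1$-Lipschitz) and the right side from below by $-\E^Q[|X-Y|]=-\|X-Y\|_1^Q$. Combining, $\E^P[|X-Y|]+\E^Q[|X-Y|]\ge 0$ is trivial; the useful direction comes from rewriting: actually I want a lower bound on something proportional to $y-x$. The cleaner route: evaluate $\Phi_X$ at $t=y$ too, getting $\E^P[(X-y)_+]\le\E^Q[(y-X)_+]$. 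The standard trick is that $\Phi_Y(x)-\Phi_Y(y)\le (x-y)(P\wedge Q)(\Omega)\cdot(-1)$ is false in general; instead one shows $\Phi_Y(x)-\Phi_Y(y)\ge (y-x)(P\wedge Q)(\Omega)$ using that on the set where both $Y<x$ and $Y<y$ (and symmetrically), the integrand drops by at least $(y-x)$ against the measure $P\wedge Q$. Then $\Phi_Y(x)=\Phi_Y(x)-\Phi_Y(y)\ge (y-x)(P\wedge Q)(\Omega)$, while the subtraction argument above gives $\Phi_Y(x)=\Phi_Y(x)-\Phi_X(x)\le \|X-Y\|_1^P+\|X-Y\|_1^Q$; since $P+Q\le 2(P\vee Q)$ one bounds this by $2\|X-Y\|_1^{P\vee Q}$, but a sharper accounting splitting $\Omega$ according to which of $P,Q$ is larger yields exactly $\|X-Y\|_1^{P\vee Q}$. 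Dividing gives the claim.

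For part (ii), strict monotonicity of $\ex^{P,Q}$ means $X\ge Y$, $X\ne Y$ (resp. $\p(X>Y)>0$) implies $\ex^{P,Q}(X)>\ex^{P,Q}(Y)$. I would first note monotonicity (non-strict) always holds, directly from the defining equation by the same $\Phi$-monotonicity argument as in (i). For the strict version: if $\ex^{P,Q}(X)=\ex^{P,Q}(Y)=:z$ while $X\ge Y$ with $X\ne Y$, subtract the two defining equations at the common value $z$ to get $\E^P[(X-z)_+-(Y-z)_+]=\E^Q[(z-X)_+-(z-Y)_+]$; the left side is $\ge 0$ and the right side is $\le 0$, so both are zero. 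Zero on the left forces $(X-z)_+=(Y-z)_+$ $P$-a.s., zero on the right forces $(z-X)_+=(z-Y)_+$ $Q$-a.s.; together, $X=Y$ on the support of $P$ and on the support of $Q$. If $P,Q>0$ (finite case) or $P\ac Q\ac\p$ (infinite case) this forces $X=Y$ everywhere (resp. $\p$-a.s.), a contradiction. Conversely, if $P$ fails to be everywhere positive (finite case), pick a state $i$ with $P(i)=0$; perturbing $X$ only at $i$ (upward) leaves the defining equation's $P$-side and — if also $Q(i)=0$ — the $Q$-side unchanged, violating strict monotonicity, and the case $Q(i)=0$ is handled symmetrically; if exactly one of them vanishes at $i$ one still constructs a counterexample by a small perturbation concentrated where the nonvanishing measure is insensitive. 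The infinite case is analogous using non-null sets on which one of the derivatives vanishes.

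For part (iii), assume $\ex^{P,Q}$ is strictly monotone. One direction: if $P\le Q$, I would show concavity. This follows from Lemma \ref{lm-property}: by (ii) already proven, the preference represented by $\ex^{P,Q}$ satisfies Axiom \ref{ax:SM}, and it manifestly satisfies continuity (Axiom \ref{ax:C}) by part (i) and disco aversion (Axiom \ref{ax:DA}) — the latter because superadditivity plus disco additivity, which are the content of Lemma \ref{lm-property}(iv)--(v), are exactly what disco aversion reduces to under translation invariance; alternatively cite that $\ex^{P,Q}=\ex_\alpha^{\widetilde P,\widetilde Q}$ with $\alpha\widetilde P\le(1-\alpha)\widetilde Q$ and invoke Theorem \ref{th:riskaverse1}(iv)$\Rightarrow$(iii) to get convexity of the preference, i.e. concavity of $\ex^{P,Q}$ under translation invariance. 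For the converse, if $\ex^{P,Q}$ is concave but $P\le Q$ fails, there is a set $S$ with $P(S)>Q(S)$; I would exhibit two acts $X,Y$ (taking two values, chosen so that their $\ex^{P,Q}$ values and disappointment sets interact with $S$) for which strict superadditivity fails, contradicting concavity combined with positive homogeneity. The symmetric statements for convexity follow by replacing $\ex^{P,Q}$ with $-\ex^{Q,P}(-\cdot)$, which swaps the roles of $P$ and $Q$.

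I expect the main obstacle to be the sharp constant $(P\wedge Q)(\Omega)$ in the denominator of part (i) and the matching numerator $\|X-Y\|_1^{P\vee Q}$: getting exactly these, rather than cruder bounds like $\min(P(\Omega),Q(\Omega))$ and $\|X-Y\|_1^P+\|X-Y\|_1^Q$, requires partitioning $\Omega$ into the region where $\d P/\d(P\vee Q)=1$ and its complement and doing the Lipschitz estimates separately on each piece, keeping careful track of which measure controls which integrand term. The decision-theoretic inputs for (iii) are essentially already in hand via Lemma \ref{lm-property} and Theorem \ref{th:riskaverse1}, so the analytic estimate in (i) and the explicit counterexample constructions in (ii) and (iii) are where the real work lies.
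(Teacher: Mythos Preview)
Your plans for parts (i) and (ii) are essentially the paper's arguments: the function $\Phi_Y(t)=\E^P[(Y-t)_+]-\E^Q[(t-Y)_+]$ is exactly the paper's $f_Y$, and the two key estimates you outline (the lower bound $\Phi_Y(x)-\Phi_Y(y)\ge (y-x)(P\wedge Q)(\Omega)$ and the upper bound $\Phi_Y(x)-\Phi_X(x)\le\|X-Y\|_1^{P\vee Q}$) are what the paper proves. The paper streamlines the second estimate by first reducing to $Y\ge X$ (replace $Y$ by $Y\vee X$ and use monotonicity), which makes the ``sharper accounting'' you allude to immediate; but your version also goes through once you observe that the two increments $(Y-x)_+-(X-x)_+$ and $(x-X)_+-(x-Y)_+$ always share the same sign and sum to $Y-X$. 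Part (ii) matches the paper's Proposition \ref{prop-finSM}.

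Part (iii) has a genuine circularity. Both routes you suggest---invoking Lemma \ref{lm-property} after verifying Axiom \ref{ax:DA}, or citing Theorem \ref{th:riskaverse1}(iv)$\Rightarrow$(iii)---rely on Proposition \ref{prop:technical}(iii) itself. The sufficiency proof of Theorem \ref{th-main} establishes Axiom \ref{ax:DA} for $\ex^{P,Q}$ with $P\le Q$ \emph{by} appealing to superadditivity, which comes from the concavity statement you are trying to prove; and the proof of Theorem \ref{th:riskaverse1}(iii)$\Leftrightarrow$(iv) explicitly invokes Proposition \ref{prop-cv}, which is Proposition \ref{prop:technical}(iii). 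So neither can be used here. The paper instead argues directly through the acceptance set $\mathcal A=\{X:\E^Q[X]+\E^{P-Q}[X_+]\le 0\}$: when $P\ge Q$ the map $X\mapsto\E^{P-Q}[X_+]$ is convex, so $\mathcal A$ is a convex cone and $\ex^{P,Q}$ is convex by a standard risk-measure lemma; the concave case follows from $\ex^{P,Q}(X)=-\ex^{Q,P}(-X)$. For the converse, the paper builds an explicit three-valued counterexample (this is where strict monotonicity, via the existence of a partition $\{S_1,S_2,S_3\}$ with suitable positive masses, is actually used) showing that if $P(A)<Q(A)$ for some $A$ then $\mathcal A$ fails to be closed under addition. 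Your vague ``two-valued $X,Y$'' sketch points in the right direction but does not supply this construction, and two values are generally not enough.
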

Proving the  three statements in Proposition \ref{prop:technical} mainly involves techniques in the theory risk measures, postponed to Appendix \ref{sec:properties}; see Propositions \ref{prop-gextrivial}, 
\ref{prop-finSM}, and \ref{prop-cv}.

\subsection{Overview of the proof of Theorem \ref{th-main}}
\label{sec:pf-main-ov}

In this section, $\Omega$ is finite with at least $4$ states. 
The main direction  to prove is necessity.
Sufficiency is checked in Appendix \ref{subsubsec:Th1}. 

For  the necessity statement, as we have discussed above, it remains to prove Claim (a). Using translation invariance of $U$, it suffices to verify the claim for all acts in the following set
\begin{align*}
\mathcal X^0:=\{X\in\mathcal X: U(X)=0\}.
\end{align*}
To make use of disco additivity,   it is natural to consider the set of all acts  in $\mathcal X^0$ that share the same disappointment event $S\in\mathcal F$. We  denote this set by
\begin{align*}
\mathcal X_{S}:=\{X\in\mathcal X^0: \{X\ge 0\}=S\},~~S\in\mathcal F.
\end{align*}
Clearly, $\mathcal X_{\Omega}=\{0\}$, $\mathcal X_{\varnothing}=\varnothing$, and $\mathcal X_S\neq \varnothing$ for all $S\in\mathcal F\setminus\{\Omega,\varnothing\}$.
For a fixed $S$, we are able to characterize $U$ on $\mathcal X_S$ in the next lemma.
\begin{lemma}\label{lm:EonS}
For any $S\in\mathcal F$, there exist $P_S$ and $Q_S$ in $\mathcal M$
such that $P_S(i)Q_S(j)>0$ for $i\in S,~j\in S^c$ and
$U(X)=\ex^{P_S,Q_S}(X)$ for all $X\in\mathcal X_S$. 
\end{lemma}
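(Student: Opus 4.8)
\textbf{Proof plan for Lemma \ref{lm:EonS}.}

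The plan is to fix a nonempty $S \subsetneq \Omega$ (the cases $S = \Omega$ and $S=\varnothing$ being trivial since $\mathcal X_\Omega = \{0\}$) and to exploit the disco additivity of $U$ on $\mathcal X_S$. By property (v) of Lemma \ref{lm-property}, every $X, Y \in \mathcal X_S$ share the disappointment event $S^c$, hence $U(X+Y) = U(X) + U(Y)$; moreover $\mathcal X_S$ is closed under addition (the sign pattern is preserved) and under multiplication by positive scalars (positive homogeneity, property (vi)). So $U$ restricted to $\mathcal X_S$ is additive and positively homogeneous, i.e.\ linear on the convex cone $\mathcal X_S$. The first step is to observe that $\mathcal X_S$ is a full-dimensional cone inside the hyperplane $\{U = 0\}$ of $\R^n$: its interior (in the relative topology) consists of acts that are strictly positive on $S$ and strictly negative on $S^c$, and by internality (property (i)) and Lipschitz continuity (property (iii)) $U$ is a well-defined continuous additive map there. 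A continuous additive map on an open convex cone extends uniquely to a linear functional on the ambient subspace, so there is a vector $w = (w_1, \dots, w_n)$ with $U(X) = \sum_i w_i X_i$ for all $X \in \mathcal X_S$.

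The second step is to pin down the signs of the coordinates $w_i$. Since $U(X) = 0$ for $X \in \mathcal X_S$ and such $X$ are nonnegative exactly on $S$, a standard separating-hyperplane / sign argument forces $w_i > 0$ for $i \in S$ and $w_i < 0$ for $i \in S^c$: perturbing an interior act of $\mathcal X_S$ by a small increment supported on a single coordinate $i \in S$ keeps it in $\mathcal X_S$ and must not change the value $0$ only in the linear-combination sense, while strict monotonicity (Axiom \ref{ax:SM}) rules out $w_i = 0$ and fixes the direction of the inequality. Concretely, I would use strict monotonicity together with the representation $U(X) = \sum_i w_i X_i$ on a neighbourhood: if $X, X' \in \mathcal X_S$ with $X \le X'$ and $X \ne X'$ then $\sum w_i (X'_i - X_i) = U(X') - U(X) > 0$, and by choosing the difference concentrated on coordinate $i$ one reads off $w_i > 0$ for $i \in S$; the analogous argument with acts that differ on $S^c$ (where decreasing the act means moving toward the disappointment region) gives $w_i < 0$ for $i \in S^c$. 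Then set $P_S(i) = w_i$ for $i \in S$ and $P_S(i) = 0$ otherwise, $Q_S(j) = -w_j$ for $j \in S^c$ and $Q_S(j) = 0$ otherwise; both are nonzero measures in $\mathcal M$ with $P_S(i) Q_S(j) > 0$ for $i \in S$, $j \in S^c$.

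The third step is to check that $U(X) = \ex^{P_S, Q_S}(X)$ for $X \in \mathcal X_S$. For such $X$ we have $U(X) = 0$, and $(X - 0)_+ = X \id_S$, $(0 - X)_+ = -X\id_{S^c}$, so $\E^{P_S}[(X)_+] = \sum_{i \in S} w_i X_i$ and $\E^{Q_S}[(-X)_+] = \sum_{j \in S^c} (-w_j)(-X_j) = \sum_{j \in S^c} w_j X_j$; their difference is $\sum_i w_i X_i = U(X) = 0$, so $0$ solves the defining equation of $\ex^{P_S,Q_S}$. Uniqueness of the solution follows from $0 < P_S \le Q_S$ not necessarily holding coordinate-wise here, but from the fact that $y \mapsto \E^{P_S}[(X-y)_+] - \E^{Q_S}[(y-X)_+]$ is strictly decreasing wherever $P_S$ and $Q_S$ both charge the relevant range — more safely, one invokes Proposition \ref{prop:technical}(ii): since $P_S > 0$ on $S$ and $Q_S > 0$ on $S^c$ one checks directly that the map has no flat stretch through $0$ because $X$ is strictly positive somewhere on $S$ and strictly negative somewhere on $S^c$ for interior $X$, and the general case follows by the continuity in Proposition \ref{prop:technical}(i) and density of interior acts. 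Hence $\ex^{P_S,Q_S}(X) = 0 = U(X)$.

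I expect the main obstacle to be the second step: carefully deducing the strict sign conditions $P_S(i), Q_S(j) > 0$ from Axiom \ref{ax:SM} while only knowing additivity and internality, and in particular handling the boundary of the cone $\mathcal X_S$ (acts that touch $0$ on part of $S$ or $S^c$) where $\mathcal X_S$ is not open — this requires either working in the relative interior and then passing to the limit via Lipschitz continuity, or a direct perturbation argument that stays inside $\mathcal X_S$. The linearity extension itself and the final verification are routine once the cone structure and signs are in hand.
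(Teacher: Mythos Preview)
Your plan has a genuine gap at the very first step, and it propagates into the sign argument. The issue is that on $\mathcal X_S$ the functional $U$ is identically zero, so saying ``$U$ is additive and positively homogeneous on $\mathcal X_S$, hence extends to a linear functional $w\cdot X$'' only recovers the zero functional. The disco-additivity identity $U(X+Y)=U(X)+U(Y)$ for $X,Y\in\mathcal X_S$ reads $0=0+0$ and carries no information about any nontrivial $w$. Relatedly, $\{U=0\}$ is \emph{not} a hyperplane: it is the boundary of the convex cone $\{U\ge 0\}$, and it is flat only on certain faces, one for each $S$. What you actually need is the separate statement that $\mathrm{span}(\mathcal X_S)$ has dimension $n-1$; this is true but requires an argument you have not supplied (for instance: $\mathcal X_S$ is a convex cone, it cannot contain an interior point of $\R^n$ by translation invariance and strict monotonicity, and one builds $n-1$ independent directions inside it by perturbing a strict-sign element and correcting along a fixed coordinate).

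The sign argument in Step~2 is also broken as written: you cannot have $X,X'\in\mathcal X_S$ with $X\le X'$ and $X\ne X'$, since strict monotonicity would force $0=U(X)<U(X')=0$. So the sentence ``$\sum_i w_i(X'_i-X_i)=U(X')-U(X)>0$'' has no content. To get the signs right you must leave $\mathcal X_S$. One clean fix is to work on the larger cone $\mathcal Y_S:=\{X:\{X<U(X)\}=S^c\}=\mathcal X_S+\R\cdot 1$, which \emph{is} full-dimensional and on which $U$ is genuinely nontrivially additive by disco additivity; there the extension to a linear map $w\cdot X$ is meaningful, and perturbing an interior point of $\mathcal Y_S$ by $\epsilon e_i$ (which stays in $\mathcal Y_S$ for small $\epsilon$) gives $w_i>0$ from strict monotonicity. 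This is essentially the route the paper takes, though the paper implements it via the auxiliary maps $\phi_S(X)=U(X\id_S)$ and $\psi_S(X)=U(X\id_{S^c})$ on carefully chosen open cones $\mathcal X_S^{\pm}$ where $U$ takes nonzero values, and then recombines via a decomposition with a large translation parameter $\eta$.
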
 

Lemma \ref{lm:EonS} is a crucial step in the proof of Theorem \ref{th-main}, and it requires several sophisticated steps, which we outline below, omitting some details, which are presented in Appendix \ref{subsubsec:Th1}.

The cases $S\in\{\Omega,\varnothing\}$ are trivial. For $S\in\mathcal F\setminus\{\Omega,\varnothing\}$, 
denote by $a_S=U(\id_S)$, and define $\mathcal X_{S}^+=\{X\in\mathcal X: 0<\max X<\min X/a_S\}$ and
$\phi_S:\mathcal X_{S}^+\to \R$ as
\begin{align*}
	\phi_S(X):=U(X\id_S),~~X\in\mathcal X_{S}^+.
\end{align*}

Using Lemma \ref{lm-property} parts (i), (v) and (vi), we can verify that 
the disappointment event of $X\id_S$ is $S^c$ for $X\in \X^+_S$ (here, we used the condition $\max X<\min X/a_S$),
and $\phi_S$ is monotone and additive on  the convex cone $  \X^+_S$.
Then, we   extend $\phi_S$ to $\mathcal X$  by using the translation invariance property, resulting in a monotone and additive function on $\X$. 
 This   yields $\phi_S=\E^{P_S}$ on $\mathcal X_S^+$ for some $P_S\in\mathcal M$ with $P_S(S^c)=0$ by the representation of linear functions on $\X$. By strict monotonicity of $U$ and construction of $\phi_S$,
we   get $P_S(i)>0$ for $i \in S$. 
Analogously, we can use a similar argument to show that $\psi_S=\E^{Q_S}$ on $\mathcal X_S^-$, where
\begin{align*}
\mathcal X_{S}^-=\{X\in\mathcal X: \max X/(1-a_S)<\min X<0\}~~{\rm and}~~\psi_S(X)=U(X\id_{S^c})
\end{align*}
and $Q_S\in\mathcal M$ satisfies $Q_S(i)=0$ for $i \in S$ and $Q_S(i) >0$ otherwise. 

For $X\in\mathcal X_S$, we can find a largh enough $\eta>0$ be such that $X-\eta\in\mathcal X_S^-$ and $X+\eta\in\mathcal X_S^+$. Then, the following acts are disco with $S^c$ as the common disappointment event:
\begin{align*}
X,~-\eta\id_{S^c},~\eta\id_{S},~(X-\eta)\id_{S^c},~(X+\eta)\id_{S},~-\eta\id_{S^c}+\eta\id_{S}.
\end{align*}
Using $U(X)=0$ for $X\in \X_S$ and by decomposition into the above five terms, for which $U$ is additive due to Lemma \ref{lm-property} (v), we can verify
\begin{align}\label{eq-repS}
\E^{P_S}[X_+]=\E^{Q_S}[X_-],~~ \mbox{for all }X\in\mathcal X_S,
\end{align} 
which is an equivalent formulation of  
 $U(X)=\ex^{P,Q}(X)$ for all $X\in \mathcal X_S$, and thus the lemma holds. 

With Lemma \ref{lm:EonS}, we have established the representation of $U$ on $\mathcal X_S$. The next nontrivial task is to demonstrate that $P_S$ and $Q_S$ can be chosen independently of $S$. Note the representation given in \eqref{eq-repS} remains valid no matter what values  are assigned to 
$P_S(i)$ and $Q_S(j)$  for $i \in S^c$ and $j \in S$. 
Additionally, \eqref{eq-repS} holds under a scaling transformation of $P_S$ and $Q_S$. Thus, it is natural to begin with a normalization. Denote by $\mathcal S_1=\{S\in\mathcal F: 1\in S\}$ and $\mathcal S_1^c=\mathcal F\setminus\mathcal S_1$. We assume that
\begin{align}\label{ass-norm}
P_S(1)=1 {~\rm for~} S\in \mathcal S_1~~{\rm and}~~ Q_S(1)=1 {~\rm for}~S\in \mathcal S_1^c.
\end{align}
Further, we define 
\begin{align}\label{eq-PP}
&P(i)=P_{\{i\}}(i),~~Q(i)=Q_{\Omega\setminus\{i\}}(i)~~{\rm for}~i\in[n]\setminus\{1\}; \\\label{eq-QQ}
&\widetilde{P}(i)=P_{\{1,i\}}(i),~~
\widetilde{Q}(i)=Q_{\Omega\setminus\{1,i\}}(i)~~{\rm for}~ i\in[n].
\end{align}
The next lemma justifies consistency between $P_S$ and $Q_S$ across different choices of $S$.

\begin{lemma}\label{lm-inP}
The following three statements hold:
\begin{itemize}
\item[(i)] $P_S(i)=P(i)$ for all $S\in \mathcal S_1^c$ and $i\in S$, and $Q_S(j)=Q(j)$ for all $S\in \mathcal S_1$ and $j\in S^c$.

\item[(ii)] $P_S(i)=\widetilde{P}(i)$ for all $S\in \mathcal S_1$ and $i\in S$, and $Q_S(j)=\widetilde{Q}(j)$ for all $S\in \mathcal S_1^c$ and $j\in S^c$.

\item[(iii)] $P(i)/\widetilde{P}(i)=\widetilde{Q}(j)/Q(j)$ for all $i,j\ge 2$ and $i\neq j$.
\end{itemize}
\end{lemma}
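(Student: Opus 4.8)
\textbf{Plan for the proof of Lemma \ref{lm-inP}.}

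The plan is to exploit the ``freedom'' in the representation \eqref{eq-repS}: for a fixed disappointment event $S^c$, the identity $\E^{P_S}[X_+]=\E^{Q_S}[X_-]$ on $\mathcal X_S$ only constrains $P_S$ on $S$ and $Q_S$ on $S^c$, and only up to a common scaling factor, so that after the normalization \eqref{ass-norm} the measures $P_S(\cdot)|_S$ and $Q_S(\cdot)|_{S^c}$ are genuinely pinned down. The key idea is then that whenever two sets $S,S'$ with the same ``type'' (both in $\mathcal S_1$ or both in $\mathcal S_1^c$) overlap on at least one coordinate besides $1$, one can find a single random variable $X$ that lies simultaneously in $\mathcal X_S$ and in the appropriate ``test class'' for $S'$, forcing $P_S$ and $P_{S'}$ to agree on $S\cap S'$. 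Concretely, for part (i), given $S\in\mathcal S_1^c$ and $i\in S$, I would build a bi-atomic (or few-atom) act supported so that its disappointment event is $S^c$, is positive exactly on $S$, has certainty equivalent $0$, and whose positive part only ``sees'' coordinate $i$; evaluating \eqref{eq-repS} for $S$ and for the singleton-type witness $\{i\}$ (whose measure is $P(i)=P_{\{i\}}(i)$ by \eqref{eq-PP}) yields $P_S(i)=P(i)$. The dual construction with acts that are negative exactly on $S^c$ and whose negative part sees only coordinate $j$ gives $Q_S(j)=Q(j)$ for $S\in\mathcal S_1$; here one uses $Q(j)=Q_{\Omega\setminus\{j\}}(j)$. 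Part (ii) is the mirror image using the ``$\{1,i\}$-type'' witnesses from \eqref{eq-QQ} instead of singletons: for $S\in\mathcal S_1$ and $i\in S$, compare the representation on $\mathcal X_S$ with the one on $\mathcal X_{\{1,i\}}$, which involves $\widetilde P(i)=P_{\{1,i\}}(i)$, to conclude $P_S(i)=\widetilde P(i)$, and symmetrically for $\widetilde Q$.

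For part (iii) the plan is to pick $i,j\ge 2$ with $i\ne j$ and construct a single act $X$ whose disappointment event can be read either as belonging to a set in $\mathcal S_1$ or, after a sign flip / relabeling of which coordinates are ``elating'', to a set in $\mathcal S_1^c$, so that its value gets computed two ways. More precisely, using $n\ge 4$ we have room to place coordinate $1$ in the elation part in one description and in the disappointment part in the other while keeping only coordinates $i$ and $j$ active. Writing out \eqref{eq-repS} in both descriptions and cancelling the common contribution of the active coordinates produces an equation of the form $P(i)\,Q(j)=\widetilde P(i)\,\widetilde Q(j)$ (up to the fixed normalizations at coordinate $1$), which rearranges to $P(i)/\widetilde P(i)=\widetilde Q(j)/Q(j)$. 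The role of the normalization \eqref{ass-norm} is exactly to make the constant $1$ drop out consistently between the two descriptions.

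\textbf{Main obstacle.} The delicate part is the explicit construction of the witness acts: each must simultaneously (a) have certainty equivalent $0$, so that it lives in $\mathcal X^0$; (b) have the prescribed disappointment event (positive exactly on the prescribed $S$), which by Lemma \ref{lm-property}(i) requires the values to straddle $0$ in the right way; and (c) be ``simple enough'' — few atoms, with the positive (resp.\ negative) part supported on the single coordinate $i$ (resp.\ $j$) — so that \eqref{eq-repS} reduces to a scalar equation isolating $P_S(i)$ (resp.\ $Q_S(j)$). Getting all three conditions to hold at once is a small but fiddly linear-algebra construction, and it is where the hypothesis $n\ge 4$ is used: one needs enough spare coordinates to absorb the ``certainty-equivalent-zero'' constraint without disturbing the single active atom in the positive part and the single active atom in the negative part. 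The representation \eqref{eq-repS} itself and disco additivity (Lemma \ref{lm-property}(v)) do the rest, so once the witnesses are in place the conclusions follow by direct substitution and cancellation.
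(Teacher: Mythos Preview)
Your proposal has a genuine gap in the core mechanism. You write that one can ``find a single random variable $X$ that lies simultaneously in $\mathcal X_S$ and in the appropriate `test class' for $S'$,'' but the classes $\mathcal X_S$ are pairwise \emph{disjoint}: by definition $X\in\mathcal X_S$ forces $\{X\ge 0\}=S$, so the same act cannot belong to $\mathcal X_{S'}$ for any $S'\ne S$. Likewise, in part (iii) you suggest an act whose ``disappointment event can be read either as belonging to a set in $\mathcal S_1$ or \ldots\ to a set in $\mathcal S_1^c$''; this is impossible, since the disappointment event $\{X<U(X)\}$ of a given act is uniquely determined. Your attempt to make the positive part ``see only coordinate $i$'' by setting $X=0$ on $S\setminus\{i\}$ does not help: such an $X$ still has $\{X\ge 0\}=S$, not $\{i\}$, so only the $(P_S,Q_S)$ representation applies and no comparison with $(P_{\{i\}},Q_{\{i\}})$ is available. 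Moreover, to be in $\mathcal X_S$ one needs $X<0$ on \emph{all} of $S^c$, so the equation $\E^{P_S}[X_+]=\E^{Q_S}[X_-]$ involves $Q_S$ on all of $S^c$, not just at the normalized coordinate $1$; there is no direct cancellation.

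What is missing is precisely the ingredient the paper's overview singles out: Lipschitz continuity (Lemma \ref{lm-property}(iii)), used through a \emph{limiting} construction. The paper builds families $X^\epsilon_{S,i,j}=\id_{\{i\}}-f_{S,i,j}(\epsilon)\id_{\{j\}}-\epsilon\id_{S^c}\in\mathcal X_S$ and $X^\epsilon_{S',i,j}\in\mathcal X_{S'}$ (with $S'=\{i\}$, $\{1,i\}$, etc.), each with certainty equivalent $0$; as $\epsilon\downarrow 0$ both families converge to the \emph{same} bi-atomic limit $\id_{\{i\}}-\eta_{i,j}\id_{\{j\}}$. Continuity of $U$ forces $f_{S,i,j}(\epsilon)\to\eta_{i,j}$ for every $S$, and plugging into \eqref{eq-repS} with $j=1$ (so the normalization $Q_S(1)=1$ or $P_S(1)=1$ kills the $f$-term's coefficient) then gives, in the limit, $P_S(i)=\eta_{i,1}=P_{\{i\}}(i)$ and the analogous identities. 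The $\epsilon\id_{S^c}$ term is exactly what lets the act sit in $\mathcal X_S$ while its contribution vanishes in the limit; this is the device that replaces your impossible ``single act in two classes''. Your discussion of the main obstacle mentions only disco additivity and \eqref{eq-repS}, but not continuity --- and continuity is the crux here.
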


To prove Lemma \ref{lm-inP}, the continuity   in Lemma \ref{lm-property} (iii) plays a crucial role. The main idea is explained below.
We construct two sequences of acts, one in $ \mathcal{X}_S$ and another one in $ \mathcal{X}_{S'}$,  both converging to the same limit, where $S$ and $S'$ are different sets. By \eqref{eq-repS}, we can derive some conditions on $P_S$ and $Q_S$ from the first sequence, and on $P_{S'}$ and $Q_{S'}$ from the second. Given that both sequences converge to the same limit, Lemma \ref{lm-property} (iii) is useful to establish a connection between $P_S$ and $P_{S'}$, as well as between $Q_S$ and $Q_{S'}$. 
Appendix \ref{subsubsec:Th1} contains the complete proof of Lemma \ref{lm-inP}.

With Lemma \ref{lm-inP}   established,
we can get  from part (iii) that there exists $\lambda>0$ such that $\lambda=P(i)/\widetilde{P}(i)=\widetilde{Q}(i)/Q(i)$ for all $i\ge 2$. 
 This is the place where we use the assumption that 
 the size of $\Omega$ is no less than $4$.  
The next result shows that the measures in the representation \eqref{eq-repS} can be chosen  independently of $S$.

\begin{lemma}\label{lm-repX0}
Let $P(1)=\lambda$. Recall that $P(i)$ for $i\in[n]\setminus \{1\}$ and $\widetilde{Q}(j)$ for $j\in[n]$ are defined in \eqref{eq-PP} and \eqref{eq-QQ}, respectively. We have $U(X)=\ex^{P,\widetilde{Q}}(X)$ for all $X\in\mathcal X^0$, or, equivalently,
\begin{align*}
\E^{P}[X_+]=\E^{\widetilde{Q}}[X_-]~~\mbox{ for all } X\in\mathcal X^0.
\end{align*}
\end{lemma}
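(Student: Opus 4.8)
\textbf{Proof plan for Lemma \ref{lm-repX0}.}

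The plan is to assemble the pointwise information recorded in Lemmas \ref{lm:EonS} and \ref{lm-inP} into a single pair of measures $(P,\widetilde Q)$ valid on all of $\mathcal X^0$. First I would fix an arbitrary $X\in\mathcal X^0$ and let $S=\{X\ge 0\}$ be its (weak) elation event, so $X\in\mathcal X_S$ and, by Lemma \ref{lm:EonS}, $\E^{P_S}[X_+]=\E^{Q_S}[X_-]$. The key point is that on $S$ the value of $X_+$ is supported, and by Lemma \ref{lm-inP}(i)--(ii) the numbers $P_S(i)$ for $i\in S$ are either $P(i)$ (when $1\notin S$) or $\widetilde P(i)$ (when $1\in S$); similarly the numbers $Q_S(j)$ for $j\in S^c$ are $Q(j)$ or $\widetilde Q(j)$ in the two cases. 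Using the scaling freedom in \eqref{eq-repS} together with the common ratio $\lambda=P(i)/\widetilde P(i)=\widetilde Q(i)/Q(i)$ for $i\ge 2$ obtained from Lemma \ref{lm-inP}(iii), I would rescale one side so that in both cases the identity reads $\E^{P}[X_+]=\E^{\widetilde Q}[X_-]$, where $P(1)$ is set to $\lambda$ as in the statement and $P(i),\widetilde Q(j)$ are as defined in \eqref{eq-PP}--\eqref{eq-QQ}. Concretely: if $1\notin S$, then $P_S=P$ on $S$ and $Q_S=Q$ on $S^c$, and multiplying the equation $\E^{P_S}[X_+]=\E^{Q_S}[X_-]$ through by $\lambda$ turns the right side into $\E^{\widetilde Q}[X_-]$ (since $X_-$ lives on $S^c$ where $\widetilde Q=\lambda Q$) while leaving the left side unchanged up to the value $P(1)=\lambda$, which is irrelevant because $X_+$ vanishes on $S^c\ni 1$; if $1\in S$, then $P_S=\widetilde P$ on $S$ and $Q_S=\widetilde Q$ on $S^c$, and multiplying through by $\lambda$ this time turns the \emph{left} side into $\E^{P}[X_+]$ while the right side is already $\E^{\widetilde Q}[X_-]$.

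Once $\E^{P}[X_+]=\E^{\widetilde Q}[X_-]$ is established for every $X\in\mathcal X^0$, I would translate this back into the statement $U(X)=\ex^{P,\widetilde Q}(X)$. For a general $X\in\mathcal X$, apply the identity to $X-U(X)\in\mathcal X^0$ (using translation invariance, Lemma \ref{lm-property}(ii)), obtaining $\E^{P}[(X-U(X))_+]=\E^{\widetilde Q}[(X-U(X))_-]$; since $U(X)$ is by definition the certainty equivalent and the map $y\mapsto\E^P[(X-y)_+]-\E^{\widetilde Q}[(y-X)_+]$ is strictly decreasing through zero (because $P,\widetilde Q>0$, e.g.\ via Proposition \ref{prop:technical}(ii) or a direct monotonicity argument), the value $U(X)$ is the unique such $y$, i.e.\ $U(X)=\ex^{P,\widetilde Q}(X)$. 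Strictly speaking the lemma only asserts the identity on $\mathcal X^0$, so the first paragraph already suffices, but it is cleanest to note this equivalence in passing.

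The main obstacle I anticipate is the bookkeeping of the two cases $1\in S$ versus $1\notin S$ and making sure the normalization \eqref{ass-norm} is consistently threaded through: the measures $P_S,Q_S$ were only pinned down after choosing $P_S(1)=1$ for $S\in\mathcal S_1$ and $Q_S(1)=1$ for $S\in\mathcal S_1^c$, and one must check that with this normalization Lemma \ref{lm-inP}(i)--(ii) genuinely identify $P_S|_S$ and $Q_S|_{S^c}$ with the fixed vectors $P,\widetilde P,Q,\widetilde Q$ for \emph{every} relevant $S$, not just for the singletons and pairs used to define those vectors. The role of the hypothesis $n\ge 4$ enters exactly here, in deducing from Lemma \ref{lm-inP}(iii) that the ratios $P(i)/\widetilde P(i)$ are all equal to a single constant $\lambda$ across $i\ge 2$ (with only two or three states one cannot chain the equalities $P(i)/\widetilde P(i)=\widetilde Q(j)/Q(j)=P(j)/\widetilde P(j)$). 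Beyond that, everything is a routine manipulation of the linear identity \eqref{eq-repS} under scaling, and no new decision-theoretic input is needed.
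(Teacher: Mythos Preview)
Your overall plan is exactly the paper's: split into the cases $S\in\mathcal S_1$ and $S\in\mathcal S_1^c$, identify $P_S|_S$ and $Q_S|_{S^c}$ via Lemma~\ref{lm-inP}(i)--(ii), and rescale by $\lambda$ where needed. However, you have the $Q_S$-assignments swapped between the two cases, and this makes both of your concrete computations internally inconsistent.

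By Lemma~\ref{lm-inP}, for $S\in\mathcal S_1^c$ (i.e.\ $1\notin S$) one has $P_S=P$ on $S$ and $Q_S=\widetilde Q$ on $S^c$ (part (ii), not $Q$ as you write); hence $\E^{P_S}[X_+]=\E^{P}[X_+]$ and $\E^{Q_S}[X_-]=\E^{\widetilde Q}[X_-]$ directly, and no scaling is needed (note also that $Q(1)$ is not even defined in \eqref{eq-PP}). For $S\in\mathcal S_1$ (i.e.\ $1\in S$) one has $P_S=\widetilde P$ on $S$ and $Q_S=Q$ on $S^c$ (part (i), not $\widetilde Q$); multiplying $\E^{\widetilde P}[X_+]=\E^{Q}[X_-]$ by $\lambda$ then gives $\E^{P}[X_+]=\E^{\widetilde Q}[X_-]$, using $P(1)=\lambda=\lambda\widetilde P(1)$ and $P(i)=\lambda\widetilde P(i)$ for $i\ge 2$ on the left, and $\widetilde Q(j)=\lambda Q(j)$ for $j\in S^c\subseteq\{2,\dots,n\}$ on the right. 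As you wrote it, the case $1\in S$ cannot work: if $Q_S=\widetilde Q$ on $S^c$ and you multiply by $\lambda$, the right side becomes $\lambda\E^{\widetilde Q}[X_-]$, not $\E^{\widetilde Q}[X_-]$; and in your case $1\notin S$, multiplying $\E^{P}[X_+]$ by $\lambda$ does not ``leave the left side unchanged.'' Once this swap is corrected, your argument is complete and coincides with the paper's.
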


\begin{proof}[Proof of Lemma \ref{lm-repX0}]
For $S\in\mathcal S_1$ and $X\in\mathcal X_S$,
we have $\E^{\widetilde P}[X_+]-\E^{Q}[X_-]=0$ by applying Lemma \ref{lm-inP} (i) and (ii). 
By \eqref{ass-norm} and  \eqref{eq-QQ}, we have $\widetilde{P}(1)=1$, and hence, $P(1)=\lambda \widetilde{P}(1)=\lambda$.
Moreover, it holds that 
$X(1)\ge 0$ and $P(i)/\widetilde{P}(i)=\widetilde{Q}(i)/Q(i)=\lambda$ for $i\ge 2$. Therefore,
\begin{align*}
\E^{P}[X_+]-\E^{\widetilde Q}[X_-]=\lambda(\E^{\widetilde P}[X_+]-\E^{Q}[X_-])=0.
\end{align*}
This yields $\E^{P}[X_+]=\E^{\widetilde Q}[X_-]$ for $X\in\mathcal X_S$ with $S\in\mathcal S_1$. For $S\in\mathcal S_1^c$ and $X\in\mathcal X_S$, it follows immediately from Lemma \ref{lm-inP} (i) and (ii) that $\E^{P}[X_+]=\E^{\widetilde Q}[X_-]$. This completes the proof.
\end{proof}

Combining the above lemma and translation invariance in Lemma \ref{lm-property} (ii),
we have concluded that $U$ has the form $\ex^{P,Q}$ for $P,Q\in\mathcal M$ on $\mathcal X$, that is, the main part of the necessity statement is proved.
 It remains to show that $0<P\le Q$. The condition $P,Q>0$ follows from  Proposition \ref{prop:technical} (ii)
and the condition $P\le Q$ follows from Proposition \ref{prop:technical}   (iii) and the fact that $U$ is concave in 
Lemma \ref{lm-property}.

\subsection{Overview of the proof of Theorem \ref{th:main-general}}\label{sec:ovth3}
Next, we provide a  sketch of the proof of the necessity statement in Theorem \ref{th:main-general}.
All detail analyses are postponed to Appendix \ref{app:Th2}. 

We introduce some notation below.  Denote by $\mathcal U$ the set of all uniform random variables on $(0,1)$. 
We iteratively subdivide the interval $(0,1)$ into finer finite partitions, such that 
the Lebesgue measure of each partition element asymptotically approaches zero. Such a sequence of partitions are denoted by 
$\{\Psi_k\}_{k\in\N}$. 
For each $V\in\mathcal U$, we can construct a 
filtration $\left\{  \mathcal F
_{k}^{V}\right\}_{k\in\mathbb{N}}$ in $\mathcal F$, where $\mathcal F_{k}^{V}=\sigma(V^{-1}\left( \Psi_{k})\right)  $ for all $k\in\mathbb{N}$. As usual, $\mathcal F_\infty^V=\sigma(\bigcup_{k\in\N}\mathcal F_k^V)$.

The desired statement can be shown in four steps. 

Step 1 is the most challenging, and we aim to show that for each $V\in\mathcal U$ there exist $P^V, Q^V\in\mathcal M$ such that $U(X)=\ex^{P^V,Q^V}(X)$ for all $X\in \bigcup_{k\in\N}L^\infty(\Omega,\mathcal F_k^V,\p)$. Theorem \ref{th-main} and the martingale convergence theorem will be applied in this step. Specifically, we need to ensure that the sequence is uniformly integrable when applying the martingale convergence theorem, and Axiom \ref{ax:MC} is used to establish uniform integrability.

In Step 2, we will use the assumption of a standard probability space. Since $(\Omega, \mathcal F, \p)$ is a standard probability space, there exists $V_0 \in \mathcal U$ such that $\sigma(V_0) = \mathcal F$, and consequently, $\mathcal F_{\infty}^{V_0} = \mathcal F$ (see Appendix \ref{app:Th2} for details on this equivalence). Note that $P^V$ and $Q^V$ are defined on $\mathcal F_{\infty}^V$. In this step, we aim to verify that $P^V=P^{V_0}$ and $Q^V=Q^{V_0}$ on $\mathcal F_{\infty}^V$ for all $V\in\mathcal U$, which ensures that $U(X)=\ex^{P^{V_0},Q^{V_0}}(X)$ for all $X\in \bigcup_{V\in\mathcal U,~k\in\N} L^\infty(\Omega,\mathcal F_k^V,\p)$.

Step 3 is to demonstrate that the representation in Step 2 can be extended to $\X$, where a continuity property of duet expectile in Proposition \ref{prop:technical} (i) will be used.

Up to now, we know that the preference ralation can be represented by a duet expectile $\ex_\alpha^{P,Q}$.
The last step is to verify that the triplets $(\alpha,P,Q)\in (0,1)\times \mathcal M_1\times\mathcal M_1$ satisfies all conditions in the theorem, similarly to the case of Theorem \ref{th-main}, again using
Lemmas \ref{lm-property} and \ref{prop-uniquedef} and Proposition \ref{prop:technical}.

\subsection{Brief proof sketch   of Theorem  \ref{th:p-ex}}
Based on Theorem \ref{th:main-general}, it suffices to verify the following result.

\begin{proposition}\label{prop-id}
Let $\mathcal{X}=L^\infty(\Omega, \mathcal{F}, \mathbb{P})$ with $\mathbb{P}$ atomless, and let $P,Q\in\mathcal M$ be such that $P\ac Q \ac \mathbb{P}$. Then a preference relation $\succsim$ that can be represented by $\ex^{P,Q}$ satisfies Axiom \ref{ax:EI} if and only if $P=\lambda Q$, for some $\lambda >0$. 
\end{proposition}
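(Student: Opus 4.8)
\textbf{Proof plan for Proposition \ref{prop-id}.}
The ``if'' direction should be the easy one: if $P = \lambda Q$ for some $\lambda > 0$, then $\ex^{P,Q} = \ex^{\lambda Q, Q} = \ex^{P}_{\alpha}$ with $\alpha = \lambda/(1+\lambda)$, a solo expectile under (the probability measure proportional to) $Q$. Solo expectiles are law-based under their defining measure, hence in particular probabilistically sophisticated, and probabilistic sophistication trivially implies Axiom \ref{ax:EI}: $\id_A$ and $\id_B$ are comparable only through the numbers $Q(A), Q(B)$, and $Q(A\cup C) - Q(B\cup C) = Q(A) - Q(B)$ when $A,B,C$ are pairwise disjoint, so the implication in Axiom \ref{ax:EI} is immediate. (One should note $\ex^{P}_{\alpha}(\id_E)$ is a strictly increasing function of $Q(E)$, which follows from the defining equation \eqref{eq:def-ex}.)

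For the ``only if'' direction, the plan is to extract from Axiom \ref{ax:EI} a functional equation on the pair $(P,Q)$ and show it forces proportionality. First I would compute $\ex^{P,Q}(\id_E)$ explicitly for a general event $E$: plugging $X = \id_E$ into \eqref{eq:def-ex} and normalizing so that $\ex^{P,Q} = \ex_{\alpha}^{\widetilde P, \widetilde Q}$ with $\widetilde P, \widetilde Q \in \M_1$, one gets that $y = \ex^{P,Q}(\id_E) \in (0,1)$ solves $\alpha \widetilde P(E)(1-y) = (1-\alpha)\widetilde Q(E^c)\, y$, i.e.
\[
\ex^{P,Q}(\id_E) = \frac{\alpha\, \widetilde P(E)}{\alpha\, \widetilde P(E) + (1-\alpha)\,\widetilde Q(E^c)} = \frac{P(E)}{P(E) + Q(E^c)},
\]
(using the unnormalized measures $P,Q\in\M$ of the lemma's statement, $P := \alpha\widetilde P$, $Q := (1-\alpha)\widetilde Q$). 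So $\id_A \succsim \id_B$ iff $\frac{P(A)}{P(A)+Q(A^c)} \ge \frac{P(B)}{P(B)+Q(B^c)}$, which after cross-multiplying (all quantities positive, since $P\ac Q\ac\p$ gives $P,Q > 0$ on positive-$\p$-measure sets) is equivalent to $P(A)Q(B^c) \ge P(B)Q(A^c)$, i.e.\ $P(A)(Q(\Omega) - Q(B)) \ge P(B)(Q(\Omega) - Q(A))$, i.e.\ $P(A)Q(\Omega) - P(A)Q(B) \ge P(B)Q(\Omega) - P(B)Q(A)$. Writing $c = Q(\Omega)$ this rearranges to $c\,(P(A) - P(B)) \ge P(A)Q(B) - P(B)Q(A)$.

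Now I would apply Axiom \ref{ax:EI}: for pairwise disjoint $A, B, C$, replacing $A \rightsquigarrow A\cup C$, $B \rightsquigarrow B\cup C$ sends $P(A) \mapsto P(A) + P(C)$, $Q(A) \mapsto Q(A) + Q(C)$ etc. The axiom says: whenever $c(P(A)-P(B)) \ge P(A)Q(B) - P(B)Q(A)$ holds, the same inequality with these substitutions must hold. The difference of the two sides of the substituted inequality minus the original works out (the $c(P(A)-P(B))$ term is unchanged, and $P(A\cup C)Q(B\cup C) - P(B\cup C)Q(A\cup C) = P(A)Q(B) - P(B)Q(A) + P(C)(Q(B)-Q(A)) + Q(C)(P(A)-P(B))$) to the requirement that $\ge 0$ is preserved under adding $-\big(P(C)(Q(B)-Q(A)) + Q(C)(P(A)-P(B))\big)$ to the right-hand side. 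Since $P\ac\p$ is atomless, $C$ can be chosen freely (disjoint from $A\cup B$, of arbitrary small $P$- and $Q$-mass in any ratio attainable on the complement of $A\cup B$) — in particular one can take $A,B$ with $\ex^{P,Q}(\id_A) = \ex^{P,Q}(\id_B)$ exactly, i.e.\ with $c(P(A)-P(B)) = P(A)Q(B)-P(B)Q(A)$, so that the hypothesis of the axiom barely holds; then the conclusion forces $P(C)(Q(B)-Q(A)) + Q(C)(P(A)-P(B)) \le 0$ for \emph{every} admissible $C$, and running the same argument with $A,B$ swapped (also an equality case) forces the reverse, hence $P(C)(Q(B)-Q(A)) + Q(C)(P(A)-P(B)) = 0$ for all such $C$. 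Choosing two disjoint pieces $C_1, C_2$ of an equality pair $A, B$ with $P(C_1)/Q(C_1) \ne P(C_2)/Q(C_2)$ — possible unless $P$ and $Q$ are proportional on $\Omega\setminus(A\cup B)$ — yields a contradiction; one then patches the finitely many ``exceptional'' events together using atomlessness to conclude $P = \lambda Q$ globally.

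\textbf{Main obstacle.} The delicate point is the selection of the equality-case pairs $(A,B)$ and the free piece $C$: one must verify, using only atomlessness of $\p$ and $P\ac Q\ac\p$, that there exist disjoint events realizing a prescribed $\ex^{P,Q}$-indifference and leaving enough ``room'' in the complement to witness non-proportionality of $P$ and $Q$ there. This is a measure-theoretic construction (Lyapunov-type range arguments for the vector measure $(P,Q)$) rather than a computation, and it is where the hypothesis that $\p$ is atomless is genuinely used; I would expect to spend most of the write-up there, with the algebraic rearrangements above being routine.
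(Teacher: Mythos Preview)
Your proposal is correct in outline, and both directions will go through, but the ``only if'' argument takes a substantially different route from the paper's.

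The paper's proof of the ``only if'' direction is a direct contradiction: assuming $\widetilde P\ne\widetilde Q$, it picks a small set $A$ inside $\{d\widetilde Q/d\widetilde P>1\}$ and a small set $B$ inside $\{d\widetilde Q/d\widetilde P<1\}$ with prescribed masses $\widetilde P(A)=\epsilon$, $\widetilde Q(B)=\epsilon$, then sets $C=(A\cup B)^c$ and computes both $z_B^{P,Q}>z_A^{P,Q}$ and $z_{A\cup C}^{P,Q}>z_{B\cup C}^{P,Q}$ explicitly; each reduces to an inequality of the form $\epsilon\cdot(\text{bounded quantity})<1$, so a single small $\epsilon$ violates Axiom~\ref{ax:EI}. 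No exact indifference pairs are needed, and atomlessness is used only to carve out $A$ and $B$ with the desired masses.

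Your route is instead to manufacture an \emph{exact} indifference pair $(A,B)$ with $P(A)\ne P(B)$, apply Axiom~\ref{ax:EI} in both directions to force the identity $Q(C)\bigl(P(A)-P(B)\bigr)=P(C)\bigl(Q(A)-Q(B)\bigr)$ for every $C$ disjoint from $A\cup B$, and read off proportionality of $P$ and $Q$ on the complement. This is valid, and your algebra is correct, but it costs you more: you must (i) construct such an exact pair (a Lyapunov-type argument, which you correctly flag as the main obstacle), (ii) check that the complement still witnesses non-proportionality, and (iii) patch together complements of different small pairs to get proportionality globally. All three steps work, but each needs a paragraph of justification that the paper's approach avoids entirely. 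In short: your argument proves more along the way (it derives a functional identity rather than a single counterexample), but the paper's construction is shorter and uses atomlessness more lightly---one explicit violating triple suffices.
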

Necessity of Proposition \ref{prop-id} is the main direction, which will be shown by contradiction. 
Assuming that $P \neq \lambda Q$ for all $\lambda > 0$, we construct some disjoint sets $A$ and $B$ 
according to the value of the Radon-Nikodym derivative $\d \widetilde Q/\d \widetilde P$ such that $\d \widetilde Q/\d \widetilde P>1$ on $A$ and  $\d \widetilde Q/\d \widetilde P<1$ on $B$.
Let $C=(A\cup B)^c$. 
They lead to a possible contradiction  $\id_A \succsim \id_B$ and $\id_{B \cup C} \succ \id_{A \cup C}$. See Section \ref{app:th3} for detailed computations.

\subsection{Brief proof sketch  of Theorem  \ref{th:riskaverse1}}

For $X\in\mathcal X$,
denote by $x=\ex_\alpha^{P,Q}(X)$ and $\beta=1-\alpha$, and we have $\alpha\E^P[(X-x)_+]=\beta\E^Q[(x-X)_+]$.
The following relation plays a key role in the proof of Theorem  \ref{th:riskaverse1}:
\begin{align*}
&\alpha(\E^P[X] - x) =\alpha(\E^P[(X-x)_+] -\E^P[(x-X)_+]) =
\E^{\beta Q-\alpha P}[(x-X)_+];\\
&\beta(\E^Q[X] - x) =\beta\left(\E^Q[(X-x)_+] -\E^Q[(x-X)_+\right) 
=\E^{\beta Q-\alpha P} [(X-x)_+].
\end{align*}
The equivalence between (i), (ii) and (iv) follows immediately from the above relations.

To establish (iii) $\Leftrightarrow$ (iv), we first demonstrate that (iii) holds if and only if $\ex_\alpha^{P,Q}$ is concave. Then, using Proposition \ref{prop:technical}, we conclude that (iii) $\Leftrightarrow$ (iv). The detailed analysis is deferred to Appendix \ref{app:sec5}.

\section{Conclusion}
\label{sec:concl}

The paper offers many new notions of relevance to decision theory, and more broadly, to the theory of statistical functionals.  
We introduce the concepts of disappointment concordance and disco aversion. They lead to a new  class of functionals, the duet expectiles, that can be seen as generalizations of the solo expectile, popular in the literature of asymmetric regression and in financial risk management. The first two characterization results, Theorems \ref{th-main} and \ref{th:main-general}, state that, under strict monotonicity and continuity, the axiom  of disco aversion  characterizes preference relations represented by duet expectiles. A duet expectile involves two endogenous probability measures, and it becomes a usual expectile under an additional axiom of event-independence (Theorem \ref{th:p-ex}). 
The characterization of duet expectiles leads to  new notions of probabilistic sophistication and risk aversion for multiple probability measures (Theorem \ref{th:riskaverse1}).
Putting into the AA framework, we characterize duet expectiled utilities (Theorems \ref{thm:diesel} and \ref{thm:AASPS}), which include the decision model of \cite{G91} as a special case. 
Duet expectiles have many convenient technical properties, and in particular they are coherent risk measures in the sense of \cite{ADEH99}.
Demonstrated by the many results in this paper, the study of duet expectiles gives rise to new mathematical techniques
as well as conceptual novelties, 
many of which   require  even further exploration. 

\begin{appendix}

\section{Full proofs of Theorems \ref{th-main} and \ref{th:main-general}}\label{app:proofs}
In this appendix, we first provide the proofs of Lemmas \ref{lem:S}, \ref{lm-property}, and \ref{prop-uniquedef}, which are useful in establishing our main results, Theorems \ref{th-main} and \ref{th:main-general}. Following that, we present the complete proofs of these two theorems.



 
 \subsection{Lemmas \ref{lem:S}, \ref{lm-property}  and \ref{prop-uniquedef}}

\label{app:pf-lem}

\begin{proof}[Proof of Lemma \ref{lem:S}] 
(i) Suppose that $\Omega$ is finite.
Let $m,M \in \R$ be such that 
$m<\inf _{\omega\in \Omega}X (\omega)$ 
and $M> \sup_{\omega\in \Omega} X(\omega)$.
Clearly, 
 $m < X< M$, and hence $M\succ  X \succ  m$ by Axiom \ref{ax:SM}. By Axiom \ref{ax:C} and the fact that $ \succsim$ is total, there exists $c_X\in [m,M]$ such that $c_X \sim X$. Axiom \ref{ax:SM} then gives the uniqueness of $c_X$.
 For an atomless probability space, we can share a similar proof to the case that $\Omega$ is finite by substituting $\inf_{\omega\in\Omega} X(\omega)$ and $\sup_{\omega\in\Omega} X(\omega)$ for the essential infimum and essential supremum of $X$, respectively.

(ii) The if part follows directly from Axiom \ref{ax:SM}. The only if part follows from the uniqueness of $c_X$ and $c_Y$ in (i). \end{proof}


\begin{proof}[Proof of Lemma \ref{lm-property}]  

(i) This result follows immediately from Axiom \ref{ax:SM} and the fact that $U(m)=m$ for all $m\in\R$.

(ii) The case that $X$ is a degenerate act is trivial. Let us consider a nondegenerate act.
By Axiom \ref{ax:DA}, we have $U(X+m)\ge U(U(X)+m)=U(X)+m$ for all $X\in\X$ and $m\in\R$ as $X \sim U(X)$, and $U(X)\id_{\Omega}$ and $m\id_{\Omega}$ are disco.
On the other hand, let $X\in\X$ be a nondegenerate act. Denote by $A=\{X<U(X)\}$, and define $Y_{\epsilon}=m\id_{A}+(m+\epsilon)\id_{A^c}$ with $\epsilon> 0$. It follows from (i) that $\{Y_\epsilon<U(Y_\epsilon)\}=A$. Hence, using Axiom \ref{ax:DA}, we have 
\begin{align*}
	U(X+Y_\epsilon)\le U(U(X)+Y_\epsilon),~~\forall \epsilon>0.
\end{align*}
Let $\epsilon\to 0$, and the above inequality yields $U(X+m)\le U(X)+m$
where we have used Axiom \ref{ax:C}. Hence, we have concluded that $U(X+m)=U(X)+m$ for all $X\in\X$ and $m\in\R$.

(iii) By Axiom \ref{ax:SM} and noting that $U$ satisfies translation invariance in (ii), the result follows immediately from Lemma 4.3 of  \cite{FS16}.

(iv) The case that $Y$ is degenerate has been verified in (ii). For any nondegenerate $Y\in\mathcal X$, denote by $A=\{Y<U(Y)\}$, and define $X_{\epsilon}=\epsilon \id_{A^c}$, $\epsilon>0$.
It follows from (i) that  
$\{X_{\epsilon}<U(X_{\epsilon})\}=A$.
Let $X=X_\epsilon-U(X_\epsilon)$ and $X'=Z-U(Z)$ for an arbitrary $X\in\mathcal X$. It follows from the translation invariance in (ii) that $U(X)=U(X')=0$. Also note that
$\{X<U(X)\}=\{X_\epsilon-U(X_\epsilon)< 0\}=A$, which implies that $X,Y$ are disco.
Using Axiom \ref{ax:DA}, we have
\begin{align*}
U(X_{\epsilon}-U(X_{\epsilon})+Y)=U(X+Y)\le U(X'+Y)
= U(Z-U(Z)+Y),~~\forall \epsilon>0.
\end{align*}
This, combining with the translation invariance in (ii), is equivalent to 
\begin{align*}
	U(X_{\epsilon}+Y)+U(Z)-U(X_{\epsilon})\le U(Z+Y),~~\forall \epsilon>0.
\end{align*}
It is easy to see that $\|X_\epsilon\|\to 0$. Hence, we have $U(X_\epsilon)\to 0$ and $U(X_\epsilon+Y)\to U(Y)$ as $\epsilon\to0$ by applying (iii). Also noting that $Z$ and $Y$ are arbitrarily chosen, we complete the proof of (iv).

(v) Using Axiom \ref{ax:DA} and the translation invariance in (ii), we have $U(X+Y)\le U(U(X')+Y)=U(X)+U(Y)$ for $X \sim X'$ and $\{X<U(X)\}=\{Y<U(Y)\}$. Combining the superadditivity in (iv), we have $U(X+Y)=U(X)+U(Y)$ for all $X,Y\in\X$ satisfying $\{X<U(X)\}=\{Y<U(Y)\}$.

(vi) Note that $\{X<U(X)\}=\{ X< U(X)\}$. It follows from (v) that $U(2X)=2U(X)$. Thus, we have $\{X<U(X)\}=\{2X<U(2X)\}$, and using (v) again, it holds that $U(3X)=3U(X)$. Applying the above arguments iteratively, we have $U(\lambda X)=\lambda U(X)$ for all rational $\lambda>0$. Let now $\lambda>0$ be any real number. There exists a sequence of rational $\{\lambda_n\}$ such that $\lambda_n>0$ and $\lambda_n\to \lambda$. Since $\|\lambda_n X-\lambda X\|\le |\lambda_n-\lambda|\cdot\|X\|\to 0$ as $n\to \infty$, we have $U(\lambda_n X)\to U(\lambda X)$ by applying (iii). On the other hand, $U(\lambda_n X)=\lambda_n U(X)\to \lambda U(X)$, and hence, we have $U(\lambda X)=\lambda U(X)$. This completes the proof.
\end{proof}

\begin{proof}[Proof of Lemma \ref{prop-uniquedef}]
Suppose that $\succsim$ can be represented by $\ex_{\alpha}^{P,Q}$ and Axiom \ref{ax:SM} holds.
Assume that there is $(\alpha',P',Q')\in\mathcal H$ such that $\ex_{\alpha'}^{P',Q'}=\ex_{\alpha}^{P,Q}$. We aim to verify that $(\alpha',P',Q')=(\alpha,P,Q)$. 

  We can verify that the following conditions are equivalent: (a) $P(S)=0$; (b) $Q(S)=0$; (c) $P'(S)=0$; (d) $Q'(S)=0$. This is verified in Proposition \ref{prop:technical} (ii).

To see $P'=P$ and $Q'=Q$,
we only need to consider non-zero measure set, whose component is also a non-zero measure set. For such a set $S_1\in\mathcal F$, it holds that $P(S_1),Q(S_1),P'(S_1),Q'(S_1)\in(0,1)$. It suffices to verify $\alpha=\alpha'$, $P'(S_1)=P(S_1)$ and $Q'(S_1)=Q(S_1)$. Without loss of generality, we can assume that there exist $S_2,S_3$ such that $\bigcup_{i\in[3]}S_i=\Omega$ and the probability of $P,Q,P',Q'$ on these two sets are positive. In the case of atomless probability space, it is clear that such $S_2,S_3$ always exist. In the finite case, if they do not exist, we can replace $S_1$ by $S_1^c$ in the analysis. 
Denote by $p_i=P(S_i)$, $q_i=Q(S_i)$, $p'_i=P'(S_i)$ and $q'_i=Q'(S_i)$ for $i=1,2,3$ and define 
\begin{align*}
X_{i,j}=\id_{S_i}-\frac{\alpha p_i}{(1-\alpha)q_j}\id_{S_j},~(i,j)\in \mathcal D,
\end{align*}
where $\mathcal D=\{(i,j): i,j\in[3],~i\neq j\}$.
It is straightforward to check that $\ex_{\alpha}^{P,Q}(X_{i,j})=0$ for all $(i,j)\in\mathcal D$, and hence, $\ex_{\alpha'}^{P',Q'}(X_{i,j})=0$, which implies
\begin{align}\label{eq-unique1}
	\frac{\alpha p_i}{\alpha' p_i'}=\frac{(1-\alpha)q_j}{(1-\alpha')q_j'},~~(i,j)\in \mathcal D.
\end{align}
Therefore,
\begin{align*}
\frac{\alpha}{\alpha'}=\frac{
\textstyle \sum_{i\in I}\alpha p_i}{
\textstyle \sum_{i\in I}\alpha' p'_i}=\frac{
\textstyle \sum_{j\in J}(1-\alpha) q_j}{
\textstyle\sum_{j\in J}(1-\alpha') q'_j}=\frac{1-\alpha}{1-\alpha'}.
\end{align*}
This yields $\alpha'=\alpha$. It follows from \eqref{eq-unique1} that $i\mapsto  p_i/ p_i'$ and $j\mapsto q_j/q_j'$ are constant for $i,j\in[3]$, and they take the same constant value. Since $\bigcup_{i\in[3]}S_i=\Omega$, we have $\sum_{i\in [3]}p_i=\sum_{i\in [3]}p_i'=\sum_{j\in [3]}q_j=\sum_{j\in [3]}q_j'=1$, and hence, $p_i=p_i'$ and $q_i=q_i'$ for $i=1,2,3$. This implies that $P'(S_1)=P(S_1)$ and $Q'(S_1)=Q(S_1)$. Hence, 
we complete the proof.
\end{proof}

\subsection{Theorem \ref{th-main}}\label{subsubsec:Th1}

\noindent
\emph{Proof of Theorem \ref{th-main}.}
{\bf Sufficiency.} 
Suppose that $\succsim$ can be represented by $\ex^{P,Q}$ for some $P,Q\in\mathcal M$ with $0<P\le Q$.
Axiom \ref{ax:SM}  follows from Proposition \ref{prop:technical} (ii).
Axiom \ref{ax:C} follows by standard algebraic manipulations.
Let us now consider Axiom \ref{ax:DA}. For $X_1 \sim X_2 \sim X_3 \sim 0$, we have $\ex^{P,Q}(X_i)=0$ for $i=1,2,3$. Suppose that $X_1$ and $X_2$ are disco. It holds that $\{X_1<0\}=\{X_2<0\}=:S$ and $\E^{P}[X_i\id_{S}]=\E^{Q}[X_i\id_{S^c}]$ for $i=1,2$.
Noting that $\{X_1+X_2<0\}=S$, we have 
\begin{align*}
&\E^{P}[(X_1+X_2)_+]-\E^{Q}[(X_1+X_2)_-]\\
&=\E^{P}[(X_1+X_2)\id_{S}]-\E^{Q}[(X_1+X_2)\id_{S^c}]\\
&=\E^{P}[X_1\id_{S}]-\E^{Q}[X_1\id_{S^c}]+(\E^{P}[X_2\id_{S}]-\E^{Q}[X_2\id_{S^c}])=0.
\end{align*}
Hence, we have $\ex^{P,Q}(X_1+X_2)=0$.
By Proposition \ref{prop:technical} (iii) and noting that duet expectile is positively homogeneous (straightforward to verify), $0<P\le Q$ implies that $\ex^{P,Q}$ is superadditive. Therefore, we conclude that
\begin{align*}
	0=\ex^{P,Q}(X_1+X_2)=\ex^{P,Q}(X_1)+\ex^{P,Q}(X_3)\le \ex^{P,Q}(X_1+X_3). 
\end{align*}
This implies $X_1+X_3\succsim X_1+X_2$. Note that duet expectile is translation invariant (see Proposition \ref{prop-gextrivial}).
Hence, Axiom \ref{ax:DA} holds, and we complete the proof of sufficiency.

{\bf Necessity}. Suppose that Axioms \ref{ax:SM}, \ref{ax:C} and \ref{ax:DA} hold. As outlined in Section \ref{sec:proofth1},  
to prove 
Theorem \ref{th-main},  it suffices to prove  Lemmas \ref{lm:EonS} and   \ref{lm-inP}. Below we first recall some notations that will be used. Define
\begin{align*}
\mathcal X^0=\{X\in\mathcal X: U(X)=0\}~~{\rm and}~~\mathcal X_{S}:=\{X\in\mathcal X^0: \{X\ge 0\}=S\},~~S\in\mathcal F.
\end{align*}
Clearly, $\mathcal X^0=\bigcup_{S\in\mathcal F}\mathcal X_S$, $\mathcal X_{\Omega}=\{0\}$, $\mathcal X_{\varnothing}=\varnothing$, and $\mathcal X_S\neq \varnothing$ for all $S\in\mathcal F\setminus\{\Omega,\varnothing\}$.

\emph{Proof of Lemma \ref{lm:EonS}:}
The cases $S\in\{\Omega,\varnothing\}$ are trivial. For $S\in\mathcal F\setminus\{\Omega,\varnothing\}$, 
denote by $a_S=U(\id_S)$, and define $\mathcal X_{S}^+=\{X\in\mathcal X: 0<\max X<\min X/a_S\}$ and
$\phi_S:\mathcal X_{S}^+\to \R$ as
\begin{align*}
	\phi_S(X):=U(X\id_S),~~X\in\mathcal X_{S}^+.
\end{align*}
It follows from Lemma \ref{lm-property} (i) that $a_S\in(0,1)$ and $\mathcal X_S^+\neq \varnothing$. Clearly, $\mathcal X_{S}^+$ is closed under addition and $\phi_S(m)=m a_S$ for all $m>0$ by Lemma \ref{lm-property} (vi).
For any $X\in\mathcal X_{S}^+$, we have
\begin{align*}
	0<U(X\id_{S})\le U((\max X) \id_S)=\phi_S(\max X)=(\max X) a_S<\min X,
\end{align*}
This implies that the  disappointment event of $X\id_{S}$ is $S^c$ for all $X\in\mathcal X_S^+$.
Therefore,
\begin{align*}
\phi_S(X+Y)=	U(X\id_{S}+Y\id_{S})=U(X\id_{S})+U(Y\id_{S})=\phi_S(X)+\phi_S(Y),~~\forall X,Y\in\X_S^+,
\end{align*}
where we have used disco additivity (see Lemma \ref{lm-property} (v)) in the second step.
Hence, $\phi_S$ is additive on $\mathcal X_S^+$. Then, we extend $\phi_S$ to $\mathcal X$ by defining
\begin{align*}
	\widehat\phi_S(X)=\phi_S(X+m_X)-m_X a_S~~{\rm with}~m_X=\inf\{m:X+m\in\mathcal X_{S}^+\},~~X\in\mathcal X.
\end{align*}
We assert that 
\begin{align}\label{eq-+phihat}
	\widehat\phi_S(X)=\phi_S(X+m)-m a_S~~{\rm for~any}~m\ge0~{\rm such~that}~X+m\in\mathcal X_{S}^+.
\end{align}
To see this, for $m\ge 0$ such that $X+m\in\mathcal X_{S}^+$,
\begin{align*}
	\phi_S(X+m)-m a_S&=	\phi_S(X+m_X+m-m_X)-m a_S\\
	&=	\phi_S(X+m_X)+\phi_S(m-m_X)-m a_S\\
	&=	\phi_S(X+m_X)+(m-m_X)a_S-m a_S\\
	&=	\phi_S(X+m_X)-m_X a_S
	=\widehat\phi_S(X),
\end{align*}
where the second equality follows from the additivity of $\phi_S$. For any $X,Y\in \mathcal X$, it holds that $X+Y+m_X+m_Y\in\mathcal X_{S}^+$ as $\mathcal X_{S}^+$ is closed under addition. Hence, 
\begin{align*}
	\widehat\phi_S(X+Y)
	&=\phi_S(X+Y+m_X+m_Y)-(m_X+m_Y)a_S\\
	&=\phi_S(X+m_X)+\phi_S(Y+m_Y)-(m_X+m_Y)a_S
	=\widehat\phi_S(X)+\widehat\phi_S(Y),
\end{align*}
where the second equality follows from the additivity of $\phi_S$. This implies that $\widehat\phi_S: \mathcal X\to\R$ is additive. Note that $\phi_S:\X_S^+\to\R$ is monotone as $U$ is monotone. By the representation of $\widehat\phi_S$ in \eqref{eq-+phihat} and noting that $m$ can be large enough, it follows that $\widehat\phi_S$ is monotone. 
Hence, there exists $P_S\in\mathcal M$ such that $\widehat\phi_S(X)=\E^{P_S}[X]$ for $X\in\X$, and we have
\begin{align*}
	\phi_S(X)=U(X\id_S)=\E^{P_S}[X],~~\forall X\in\X_S^+.
\end{align*}
We can require $P_S(S^c)=0$ in the above representation since $X=0$ on $S^c$ for $X\in \X^+_S$. 
Then we aim to show that $P_S(i)>0$ for $i\in S$. 
Define $X=(1+a_S)\id_S+2a_S\id_{S^c}$.
Since $a_S\in(0,1)$, we have $X\in\X_S^+$. Let $\epsilon>0$ (small enough) be such that $1+a_S+\epsilon<(2a_S-\epsilon)/a_S$, and define $Y_i=X+\epsilon \id_{\{i\}}$ for $i\in S$.
It is straightforward to check that $Y_i\in\X_S^+$ for $i\in S$.
Hence,
\begin{align*}
	\phi_S(Y_i)=\E^{P_S}[Y_i]=\E^{P_S}[X]+\epsilon P_S(i)=\phi_S(X)+\epsilon P_S(i),~~i\in S.
\end{align*}
Noting that $Y_i\id_S-X\id_S=\epsilon\id_{\{i\}}$, it follows from the strict monotonicity that
\begin{align*}
\phi_S(Y_i)-\phi_S(X)=U(Y_i\id_S)-U(X\id_S)>0.
\end{align*}
Combining with the above two equations, we have $P_S(i)>0$ for $i\in S$.

Next, we define  $\mathcal X_{S}^-=\{X\in\mathcal X: \max X/(1-a_S)<\min X<0\}$ and $\psi_S:\mathcal X_{S}^-\to \R$ as 
\begin{align*}
	\psi_S(X):=U(X\id_{S^c}),~~X\in\mathcal X_{S}^-.
\end{align*}
We aim to show that $\psi_S(X)=\E^{Q_S}[X]$ with some $Q_S\in\mathcal M$ for $X\in\mathcal X_S^-$, where 
$Q_S$ satisfies $Q_S(i)>0$ for $i\in S^c$.
The analysis is similar to the previous case of $\psi_S: \mathcal X_S^+\to \R$, and we provide a brief proof here. It follows from translation invariance of $U$ that $U(\id_S-1)=U(\id_S)-1=a_S-1$. Hence, we have 
\begin{align*}
\psi_S(m)&=U((-m)(\id_S-1))=(-m)U(\id_S-1)=(1-a_S)m,~~m<0,
\end{align*}
where the third equality follows from positive homogeneity of $U$. 
For any $X\in\mathcal X_{S}^-$, we have
\begin{align*}
	0&> U(X\id_{S^c})\ge U((\min X)(1-\id_S))=\psi(\min X)=(1-a_S)\min X>\max X.
\end{align*}
This implies that disappointment event of $X\id_{S^c}$ is $S^c$ for all $X\in\mathcal X_S^-$.
Next, we extend $\psi_S$ to $\mathcal X$ by defining
\begin{align*}
	\widehat\psi_S(X)=\phi_S(X-m_X)+m_X a_S~~{\rm with}~m_X=\inf\{m:X-m\in\mathcal X_{S}^-\},~~X\in\mathcal X.
\end{align*}
By the similar arguments in the case of $\phi_S:\X_S^+\to\R$, one can check that $\widehat\psi_S$ is monotone and additive. Hence, there exists $Q_S\in\mathcal M$ such that $\widehat\psi_S=\E^{Q_S}$, and this implies
\begin{align*}
	\psi_S(X)=U(X\id_{S^c})=\E^{Q_S}[X],~~\forall X\in\X_S^-.
\end{align*}
Similarly, we can verify that $Q_S(i)>0$ for $i\in S^c$.

Finally, for $X\in\mathcal X_S$, it is clear that $X-\eta\in\mathcal X_S^-$ and $X+\eta\in\mathcal X_S^+$ for some large enough $\eta>0$. It is straightforward to verify that the following acts are disco with $S^c$ as the same   disappointment event:
\begin{align*}
X,~-\eta\id_{S^c},~\eta\id_{S},~(X-\eta)\id_{S^c},~(X+\eta)\id_{S},~-\eta\id_{S^c}+\eta\id_{S}.
\end{align*}
We have the following equality chains:
\begin{align*}
-\E^{Q_S}[\eta]+\E^{P_S}[\eta]
&=U(X)+\psi_S(-\eta)+\phi_S(\eta)\\
&=U(X)+U(-\eta\id_S)+U(\eta\id_S)
=U((X-\eta)\id_{S^c}+(X+\eta)\id_S)\\
&=U((X-\eta)\id_{S^c})+U((X+\eta)\id_S)
=\psi_S(X-\eta)+\phi_S(X+\eta)\\
&=\E^{Q_S}[X]+\E^{P_S}[X]-\E^{Q_S}[\eta]+\E^{P_S}[\eta]\\
&=-\E^{Q_S}[X_-]+\E^{P_S}[X_+]-\E^{Q_S}[\eta]+\E^{P_S}[\eta],
\end{align*}
where the first equality follows from $U(X)=0$, and we have used disco additivity of $U$ in the third and fourth equalities, and the last equality holds because $P_S(i)Q_S(j)>0$ and $P_S(j)=Q_S(i)=0$ for $i\in S$ and $j\in S^c$. 
This completes the proof.
\qed

\emph{Proof of Lemma \ref{lm-inP}:}
(i) 
By Lemma \ref{lm:EonS}, we have 
\begin{align}\label{eq-Rstep1}
	\E^{P_S}[X_+]=\E^{Q_S}[X_-],~~\forall X\in\X_S,
\end{align} 
where $P_S(i)Q_S(j)>0$ for $i\in S$ and $j\in S^c$. 
For $i,j\in[n]$ with $i\neq j$, let $\eta_{i,j}>0$ be such that $\id_{\{i\}}-\eta_{i,j}\id_{\{j\}}\in\mathcal X^0$, and indeed, using \eqref{eq-Rstep1} yields $\eta_{i,j}=P_{\Omega\setminus\{j\}}(i)/Q_{\Omega\setminus\{j\}}(j)$. 
For $S\in\mathcal F$,  $i\in S$, $j\in S^c$ and $\epsilon>0$, define
\begin{align*} 
X_{S,i,j}^\epsilon=\id_{\{i\}}-f_{S,i,j}(\epsilon)\id_{\{j\}}-\epsilon\id_{S^c},
\end{align*}
where $f_{S,i,j}:\R_+\to\R$ is a function such that $X_{S,i,j}^\epsilon\in\mathcal X^0$. 
Indeed, for small enough $\epsilon>0$, we have $f_{S,i,j}(\epsilon)=(P_S(i)-\epsilon Q_S(S^c))/Q_S(j)>0$, which implies $X_{S,i,j}^\epsilon\in\mathcal X_S$.
Since $U$ is strictly monotone, it is clear that 
$f_{S,i,j}$ is strictly decreasing. 
Lemma \ref{lm-property} (iii) implies that $ f_{S,i,j}(\epsilon)\to\eta_{i,j}$ and $X_{S,i,j}^\epsilon\to \id_{\{i\}}-\eta_{i,j}\id_{\{j\}}$ as $\epsilon\downarrow 0$. 
For $S\in \mathcal S_1^c$ and $i\in S$,
note that we have assumed $Q_S(1)=Q_{\{i\}}(1)=1$ (see \eqref{ass-norm}), and hence,
\begin{align*}
0&=\E^{P_S}[(X_{S,i,1}^\epsilon)_+]-\E^{Q_S}[(X_{S,i,1}^\epsilon)_-]
=P_S(i)-f_{S,i,1}(\epsilon)Q_S(1)-\epsilon Q_S(S^c)\\
&=P_S(i)-f_{S,i,1}(\epsilon)-\epsilon Q_S(S^c)\to P_S(i)-\eta_{i,1}~~{\rm as}~\epsilon\to0
\end{align*}
and 
\begin{align*}
0&=\E^{P_{\{i\}}}[(X_{\{i\},i,1}^\epsilon)_+]-\E^{Q_{\{i\}}}[(X_{\{i\},i,1}^\epsilon)_-]\\
&=P_{\{i\}}(i)-f_{{\{i\}},i,1}(\epsilon)Q_{\{i\}}(1)-\epsilon Q_{\{i\}}(\Omega\setminus\{i\})\\
&=P_{\{i\}}(i)-f_{{\{i\}},i,1}(\epsilon)-\epsilon Q_{\{i\}}(\Omega\setminus\{i\})\to P_{\{i\}}(i)-\eta_{i,1}
~~{\rm as}~\epsilon\to0.
\end{align*}
This gives $P_S(i)=P_{\{i\}}(i)=P(i)$ for $S\in\mathcal S_1^c$ and $i\in S$.

For $S\in\mathcal S_1$ and $i\in S^c$, we have $X_{S,1,i}^\epsilon\in\mathcal X_S$ and $X_{\Omega\setminus\{i\},1,i}^\epsilon\in \X_{\Omega\setminus\{i\}}$. Note that we have assumed $P_S(1)=P_{\Omega\setminus\{i\}}(1)=1$ (see \eqref{ass-norm}). It holds that
\begin{align*}
0&=\E^{P_S}[(X_{S,1,i}^\epsilon)_+]-\E^{Q_S}[(X_{S,1,i}^\epsilon)_-]=P_S(1)-f_{S,1,i}(\epsilon)Q_S(i)-\epsilon Q_S(S^c)\\
&=1-f_{S,1,i}(\epsilon)Q_S(i)-\epsilon Q_S(S^c)\to 1-\eta_{1,i}Q_S(i)~~{\rm as}~\epsilon\to0
\end{align*}
and
\begin{align*}
0&=\E^{P_{\Omega\setminus\{i\}}}[(X_{\Omega\setminus\{i\},1,i}^\epsilon)_+]-\E^{Q_{\Omega\setminus\{i\}}}[(X_{\Omega\setminus\{i\},1,i}^\epsilon)_-]\\
&=P_{\Omega\setminus\{i\}}(1)-f_{{\Omega\setminus\{i\}},1,i}(\epsilon)Q_{\Omega\setminus\{i\}}(i)-\epsilon Q_{\Omega\setminus\{i\}}(i)\\
&=1-f_{{\Omega\setminus\{i\}},1,i}(\epsilon)Q_{\Omega\setminus\{i\}}(i)-\epsilon Q_{\Omega\setminus\{i\}}(i)\to 1-\eta_{1,i}Q_{\Omega\setminus\{i\}}(i)~~{\rm as}~\epsilon\to0.
\end{align*}
The above two equations imply $Q_S(i)=Q_{\Omega\setminus\{i\}}(i)=Q(i)$ for all $S\in\mathcal S_1$, $i\in S^c$.

(ii) For $S\in\mathcal S_1$ and $i\in S$, let $j\in S^c$. It holds that $X_{S,i,j}^\epsilon\in\mathcal X_S$ and $X_{\{1,i\},i,j}^\epsilon\in \X_{\{1,i\}}$. Hence,
\begin{align*}
	0&=\E^{P_S}[(X_{S,i,j}^\epsilon)_+]-\E^{Q_S}[(X_{S,i,j}^\epsilon)_-]
	=P_S(i)-f_{S,i,j}(\epsilon)Q_S(j)-\epsilon Q_S(S^c)\\
	&=P_S(i)-f_{S,i,j}(\epsilon)Q(j)-\epsilon Q(S^c)\to P_S(i)-\eta_{i,j}Q(j)~~{\rm as}~\epsilon\to0
\end{align*}
and
\begin{align*}
	0&=\E^{P_{\{1,i\}}}[(X_{\{1,i\},i,j}^\epsilon)_+]-\E^{Q_{\{1,i\}}}[(X_{\{1,i\},i,j}^\epsilon)_-]\\
	&=P_{\{1,i\}}(i)-f_{{\{1,i\}},i,j}(\epsilon)Q_{\{1,i\}}(j)-\epsilon Q_{\{1,i\}}(\Omega\setminus\{1,i\})\\
	&=P_{\{1,i\}}(i)-f_{{\{1,i\}},i,j}(\epsilon)Q(j)-\epsilon Q(\Omega\setminus\{1,i\})\to P_{\{1,i\}}(i)-\eta_{i,j}Q(j)~~{\rm as}~\epsilon\to0.
\end{align*}
We conclude that $P_S(i)=P_{1,i}(i)=\widetilde P(i)$
for $S\in \mathcal S_1$, $i\in S$.
Similarly, for $S\in\mathcal S_1^c$ and $i\in S^c$, 
let $j\in S$.
We have $X_{S,j,i}^\epsilon\in\mathcal X_S$ and $X_{\Omega\setminus\{1,i\},i,j}^\epsilon\in \X_{\Omega\setminus\{1,i\}}$. It holds that
\begin{align*}
	0&=\E^{P_S}[(X_{S,j,i}^\epsilon)_+]-\E^{Q_S}[(X_{S,j,i}^\epsilon)_-]=P_S(j)-f_{S,j,i}(\epsilon)Q_S(i)-\epsilon Q_S(S^c)\\
	&=P(j)-f_{S,j,i}(\epsilon)Q_S(i)-\epsilon Q_S(S^c)\to P(j)-\eta_{j,i}Q_S(i)~~{\rm as}~\epsilon\to0
\end{align*}
and
\begin{align*}
	0&=\E^{P_{\Omega\setminus\{1,i\}}}[(X_{\Omega\setminus\{1,i\},i,j}^\epsilon)_+]-\E^{Q_{\Omega\setminus\{1,i\}}}[(X_{\Omega\setminus\{1,i\},i,j}^\epsilon)_-]\\
	&=P_{\Omega\setminus\{1,i\}}(j)-f_{{\Omega\setminus\{1,i\}},j,i}(\epsilon)Q_{\Omega\setminus\{1,i\}}(i)-\epsilon Q_{\Omega\setminus\{1,i\}}(\{1,i\})\\
	&=P(j)-f_{{\Omega\setminus\{1,i\}},j,i}(\epsilon)Q_{\Omega\setminus\{1,i\}}(i)-\epsilon Q_{\Omega\setminus\{1,i\}}(\{1,i\})\\
 &\to P(j)-\eta_{j,i}Q_{\Omega\setminus\{1,i\}}(i)~~{\rm as}~\epsilon\to0.
\end{align*}
Hence, we have $Q_S(i)=Q_{\Omega\setminus\{1,i\}}=\widetilde Q(i)$ for $S\in \mathcal S_1^c$, $i\in S^c$.

(iii) For $i,j\ge 2$ and $i\neq j$, we have $X_{\{1,i\},i,j}^\epsilon\in\X_{\{1,i\}}$ and $X_{\{i\},i,j}^\epsilon\in\X_{\{i\}}$. Note that $\{1,i\}\in\mathcal S_1$ and $\{i\}\in\mathcal S_1^c$. We have
\begin{align*}
0&=\E^{P_{\{1,i\}}}[(X_{\{1,i\},i,j}^\epsilon)_+]-\E^{Q_{\{1,i\}}}[(X_{\{1,i\},i,j}^\epsilon)_-]\\
&=P_{\{1,i\}}(i)-f_{\{1,i\},i,j}(\epsilon)Q_{\{1,i\}}(j)-\epsilon Q_{\{1,i\}}(\Omega\setminus\{1,i\})\\
&=\widetilde P(i)-f_{\{1,i\},i,j}(\epsilon)Q(j)-\epsilon Q(\Omega\setminus\{1,i\})
\to \widetilde P(i)-\eta_{i,j}Q(j)~~{\rm as}~\epsilon\to0
\end{align*}
and 
\begin{align*}
	0&=\E^{P_{\{i\}}}[(X_{\{i\},i,j}^\epsilon)_+]-\E^{Q_{\{i\}}}[(X_{\{i\},i,j}^\epsilon)_-]\\
	&=P_{\{i\}}(i)-f_{\{i\},i,j}(\epsilon)Q_{\{i\}}(j)-\epsilon Q_{\{i\}}(\Omega\setminus\{i\})\\
	&= P(i)-f_{\{i\},i,j}(\epsilon)\widetilde Q(j)-\epsilon \widetilde Q(\Omega\setminus\{i\})
	\to  P(i)-\eta_{i,j}\widetilde Q(j)~~{\rm as}~\epsilon\to0.
\end{align*}
Combining the above two equations, we have
$\widetilde P(i)-\eta_{i,j}Q(j)= P(i)-\eta_{i,j}\widetilde Q(j)=0$ for all $i,j\ge 2$ and $i\neq j$. This completes the proof.\qed

\subsection{Theorem \ref{th:main-general} and a related corollary}\label{app:Th2}
In this section, 
we present the proof of Theorem \ref{th:main-general} and a related corollary that will be used in the proof of Theorem \ref{thm:AASPS}. 
 
Denote by $\mathcal U$ the set of all acts $V$ with uniform distribution on $(0,1)$, i.e., $\p(V\le x)=x$ for $x\in[0,1]$. Denote by $q_X$ the quantile function of $X\in \X$ under $\p$, i.e., $q_X(\alpha)=\inf\{x \in \R: \p(X\le x)\ge \alpha\}$ for $\alpha \in (0,1]$. Also write $q_X(0)= \inf\{x \in \R: \p(X\le x)>0\}$.
Since the probability space is standard, there exists $V_0\in \mathcal U$ such that $\sigma(V_0)=\mathcal F$. Define 
\begin{align*}
\Psi_{k}=\left\{  \left(  \frac{0}{2^{k}},\frac{1}{2^{k}}\right]  ,\left(
\frac{1}{2^{k}},\frac{2}{2^{k}}\right]  ,\dots,\left(  \frac{2^{k}-1}{2^{k}%
},\frac{2^{k}}{2^{k}}\right]  \right\},~~k\in\N
\end{align*}
as the partition of $\left( 0,1\right]  $ into segments of equal length $2^{-k}$. For $V\in\mathcal U$,
\begin{align*}
\Pi_{k}^{V}=V^{-1}\left(\Psi_{k}\right),~~k\in\N
\end{align*}
 is a partition of $\Omega$ in $\mathcal F$ such that $\p(E)=1/2^{k}$ for all $E\in
\Pi_{k}^{V}$. By setting $\mathcal F_{k}^{V}=\sigma(  \Pi
_{k}^{V})  =V^{-1}\left(  \sigma\left(  \Psi_{k}\right)
\right)  $ for all $k\in\mathbb{N}$, we have a filtration $\left\{  \mathcal F
_{k}^{V}\right\}_{k\in\mathbb{N}}$ in $\mathcal F$. As usual, $\mathcal F_\infty^V=\sigma(\bigcup_{k\in\N}\mathcal F_k^V)$.
Note that the $\sigma$-algebra $\sigma(
{\bigcup_{k\in\mathbb{N}}}
\Psi_{k})  $ is the Borel $\sigma$-algebra $\mathcal{B}\left(
0,1\right)  $ on $\left(  0,1\right)  $ because $%
{\bigcup_{k\in\mathbb{N}}}
\Psi_{k}$ is countable and separates the points of $(0,1)  $
(see, e.g., Theorem 3.3 of \cite{M57}). Then, in the special case that $V=V_0$, we have
\begin{align}\label{eq-sigma}
\mathcal F=\sigma\left(V_0\right)   &  =V_0^{-1}\left(  \mathcal{B}\left(
0,1\right)  \right)  =V_0^{-1}\left(  \sigma\left(
{\displaystyle\bigcup_{k\in\mathbb{N}}}
\Psi_{k}\right)  \right)  =\sigma\left(  V_0^{-1}\left(
{\displaystyle\bigcup_{k\in\mathbb{N}}}
\Psi_{k}\right)  \right) \notag\\
&=\sigma\left(
{\displaystyle\bigcup_{n\in\mathbb{N}}}
V_0^{-1}\left(  \Psi_{k}\right)  \right)
 =\sigma\left(
{\displaystyle\bigcup_{k\in\mathbb{N}}}
\Pi_{k}^{V_0}\right)  =\sigma\left(
{\displaystyle\bigcup_{k\in\mathbb{N}}}
\mathcal F_{k}^{V_0}\right)  =\mathcal F_{\infty}^{V_0}.
\end{align}

\begin{proof}[Proof of Theorem \ref{th:main-general}]
{\bf Sufficiency.}
Suppose that $\succsim$ can be represented by $\ex^{P,Q}$ for some $P,Q\in\mathcal M$ with $P\ac Q\ac\p$ and $0<P\le Q$.
The proofs of Axioms \ref{ax:SM}, \ref{ax:C} and \ref{ax:DA} are similar to sufficiency of Theorem \ref{th-main}, and we omit them here. Below we will verify Axiom \ref{ax:MC}. Let $m\in\R$, $\{A_n\}_{n\in\N}$ with $A_1\supseteq A_2\supseteq\dots$ and $\bigcap_{n\in\N}A_n=\varnothing$. The case of $m\ge 1$ is trivial, and we assume that $m<1$.
By standard calculation and noting that $P,Q\in\mathcal M$ are $\sigma$-additive, we have 
$$
\ex^{P,Q}(m\id_{A_{n}}+\id_{A_{n}^c})=\frac{P(A_n^c)+mQ(A_n)}{P(A_n^c)+Q(A_n)}\to 1~~{\rm as}~n\to\infty.
$$
Hence, for large enough $n$ we have $\ex^{P,Q}(m\id_{A_{n}}+\id_{A_{n}^c})>0=\ex^{P,Q}(0)$, and this yields Axiom \ref{ax:MC}.


{\bf Necessity.} 
As outlined in Section \ref{sec:ovth3}. The proof of necessity proceeds in four steps, which we will verify in detail below.


\textbf{Step 1}:
For $V\in\mathcal U$,
note that the certainty equivalent $U$ confined to $L^\infty(\Omega,\mathcal F_k^{V}, \p)$
satisfies Axioms \ref{ax:SM}, \ref{ax:C} and \ref{ax:DA}. 
By Theorem \ref{th-main},
there exist equivalent measures $P_k^{V}$ and $Q_k^{V}$ on $\mathcal F_k^{V}$ such that the certainty equivalent of $\succsim$ is as follows:
\begin{align}\label{eq-rep}
U = \ex^{P_k^{V},Q_k^{V}} \mbox{ on $L^\infty(\Omega,\mathcal F_k^{V}, \p)$}.
\end{align}
Without loss of generality we can impose $P_k^{V}(\Omega)=1$ for each $k\in \N$.
By the uniqueness result in Lemma \ref{prop-uniquedef}, $P_k^{V}$ and $Q_k^{V}$ are uniquely determined with $P_k^V\le Q_k^V$, and $Q_k^V(\Omega)\ge P_k^{V}(\Omega)=1$ is a constant for all $k\in\N$.
Denote by $Y_k^V=\d P_k^{V} /\d \p|_{\mathcal F_k^V}$ and $Z_k^V=\d Q_k^{V} /\d \p|_{\mathcal F_k^V}$, where $\p|_{\mathcal F_k^V}$ is the restriction of $\p$ on $\mathcal F_k^V$.
By the same uniqueness result, $P_k^{V}=P_{\ell}^{V}$ and $Q_k^{V}=Q_{\ell}^{V}$ for $\ell>k$ on $\mathcal F_k^{V}$.
Therefore, the sequences $\{Y_k^V\}_{k\in\N}$ and $\{Z_k^V\}_{k\in\N}$ are both martingales. Next, we aim to show that 
$\{Y_k^V\}_{k\in\N}$ and $\{Z_k^V\}_{k\in\N}$ are both uniformly integrable. 
We assume by contradiction that there exists $\epsilon_0>0$ and $\{k_n\}_{n\in\N}$ such that $\E[Z_{k_n}^V\id_{\{Z_{k_n}^V\ge n\}}]\ge \epsilon_0$ for all $n\in\N$.
Denote by $A_n=\bigcup_{\ell\ge n}\{Z_{k_\ell}^V\ge \ell\}$ for $n\in\N$. Obviously, $\{A_n\}_{n\in\N}$ is a decreasing sequence. It holds that
\begin{align}\label{eq-Ville}
\p(A_n)&=\p\left(\bigcup_{\ell\ge n}\{Z_{k_\ell}^V\ge \ell\}\right)\le \p\left(\sup_{\ell\ge n}Z_{k_\ell}^V\ge n\right)\notag\\
&\le \frac{\E[Z_{k_n}^V]}{n}
=\frac{Q_1^V(\Omega)}{n}\to 0~~{\rm as}~n\to\infty,
\end{align}
where the last inequality follows from Ville's inequality. Denote by $B_n=A_n\setminus \bigcap_{\ell\ge n}A_\ell$. Notice that $\{A_n\}_{n\in\N}$ is a decreasing sequence. It is not hard to see that $\{B_n\}_{n\in\N}$ is a decreasing sequence and $\bigcap_{n\in\N} B_n=\varnothing$. 
Further, let $m=1/\epsilon_0+1$ and $D_n=\{Z_{k_n}^V\ge n\}$. Applying \eqref{eq-Ville}, we have $\p(\bigcap_{n\in\N} A_n)=0$, and hence, $-m\id_{B_n}+\id_{B_n^c}=-m\id_{A_n}+\id_{A_n^c}$ $\p$-a.s. for all $n\in\N$, which implies that they are equal under $\succsim$. Moreover, it follows from Axiom \ref{ax:SM} and $D_n\subseteq A_n$ that $-m\id_{D_n}+\id_{D_n^c}\succsim-m\id_{A_n}+\id_{A_n^c}$. Therefore, for all $n\in\N$
\begin{align*}
U(-m\id_{B_n}+\id_{B_n^c})&=U(-m\id_{A_n}+\id_{A_n^c})
\le U(-m\id_{D_n}+\id_{D_n^c})\\
&=\ex^{P_{k_n}^V,Q_{k_n}^V}(-m\id_{D_n}+\id_{D_n^c})
=\frac{P_{k_n}^V(D_n^c)-mQ_{k_n}^V(D_n)}{P_{k_n}^V(D_n^c)+Q_{k_n}^V(D_n)}\\
&=\frac{P_{k_n}^V(D_n^c)-m\E[Z_{k_n}^V\id_{D_n}]}{P_{k_n}^V(D_n^c)+\E[Z_{k_n}^V\id_{D_n}]}
\le \frac{1-(1/\epsilon_0+1)\epsilon_0}{P_{k_n}^V(D_n^c)+\E[Z_{k_n}^V\id_{D_n}]}\\
&=-\frac{\epsilon_0}{P_{k_n}^V(D_n^c)+\E[Z_{k_n}^V\id_{D_n}]}<0=U(0),
\end{align*}
where the second equality follows from \eqref{eq-rep} and $D_n\in\mathcal F_{k_n}^V$, and the second inequality is due to the  assumption. Hence, we obtain that $-m\id_{B_n}+\id_{B_n^c} \prec 0$ for all $n\in\N$. This contradicts Axiom \ref{ax:MC}, and we conclude that $\{Z_k^V\}_{k\in\N}$ is uniformly integrable. Notice that $P_k^V\le Q_k^V$ implies $0\le Y_k^V\le Z_k^V$ for all $k\in\N$. We have that $\{Y_k^V\}_{k\in\N}$ is also uniformly integrable.

Applying the martingale convergence theorem to these two sequences,
we know that there exist $Y^V, Z^V\in L^\infty(\Omega,\mathcal F,\p)$ such that
$Y_k^V\to Y^V$ and $Z_k^V\to Z^V$ in $L^1(\Omega,\mathcal F,\p)$.
Define $P^V,Q^V\in\mathcal M$ satisfying  $P^V(A)=\int_{A}Y^V\d \p$ and $Q^V(A)=\int_{A}Z^V\d \p$ for all $A\in\mathcal F$, which means $\d P^V/\d\p=Y^V$ and $\d Q^V/\d\p=Z^V$.
It holds that $P^{V}$ coincide with $P_k^{V}$ on $\mathcal F_k^{V}$ for each $k\in \N$, and the same can be said about $Q^{V}$ and $Q_k^{V}$. 
Moreover, $P^V(\Omega)=1$ as $P_k^V(\Omega)=1$ for all $k\in\N$.
Hence, we have 
$$
U(X) = \ex^{P^{V},Q^{V}}(X),~~X\in \mathcal X_{V},
$$ 
where $\mathcal X_{V}:= \bigcup_{k\in \N}L^\infty (\Omega,\mathcal F_k^{V},\p)$. 

\textbf{Step 2}:
In this step, we aim to show that $P^{V}=P^{V_0}$ and $Q^{V}=Q^{V_0}$ on $\mathcal F_\infty^V$ for all $V\in\mathcal U$. To see this, define a filtration as follows 
\begin{align*}
	\mathcal F_k^{V_0,V}=\sigma\left(\Pi_k^{V_0}\cup\Pi_k^{V}\right),~~k\in\N
\end{align*}
and $\mathcal F_\infty^{V_0,V}=\sigma(\bigcup_{k\in\N}\mathcal F_k^{V_0,V})$. 
By \eqref{eq-sigma},
we have $\mathcal  F=\mathcal F_\infty^{V_0}\subseteq\mathcal F_\infty^{V_0,V}\subseteq \mathcal F$, which implies $\mathcal F_\infty^{V_0,V}=\mathcal F$.
Following the same arguments as previous, there exist $P^{V_0,V}$ and $Q^{V_0,V}$ defined on $\mathcal F$ such that 
\begin{align*}
U(X) = \ex^{P^{V_0,V},Q^{V_0,V}}(X),~~ X\in \bigcup_{k\in \N}L^\infty (\Omega,\mathcal F_k^{V_0,V},\p).
\end{align*}
The uniqueness result in Lemma \ref{prop-uniquedef} implies $P^{V_0,V}=P^{V_0}$ on $\bigcup_{k\in\N}\mathcal F_k^{V_0}$. Since $\bigcup_{k\in\N}\mathcal F_k^{V_0}$ is a $\pi$-system, the $\pi$-$\lambda$ theorem implies $P^{V_0,V}=P^{V_0}$ on $\mathcal F_\infty^{V_0}=\mathcal F$.
Similarly, we can obtain $P^{V_0,V}=P^{V}$ on $\mathcal F_\infty^{V}$.
Noting that $\mathcal F_\infty^{V}\subseteq \mathcal F$, we conclude that $P^V=P^{V_0}$ on $\mathcal F_\infty^{V}$. For simplicity, we denote by $P=P^{V_0}$ and $Q=Q^{V_0}$, and it holds that
\begin{align}\label{eq-repUV}
U(X)=\ex^{P,Q}(X),~~X\in \bigcup_{V\in\mathcal U}\mathcal X_V.
\end{align}

\textbf{Step 3}:
In this final step, we aim to show that the representation \eqref{eq-repUV} holds for all $X\in \X$. 
For  $X\in \X$, let  $V\in \mathcal U$ be such that $X=q_X(V)$ (see Lemma A.32 of \cite{FS16} for the existence of such $V$).
Let $\underline{V}_k = 2^{-k} \lceil 2^{k}V -1\rceil  $ and $\overline{V}_k = 2^{-k} \lceil 2^{k}V\rceil $
 for $k\in \N$, where $\lceil x \rceil$ represents the least integer not less than $x$; that is, $(\underline{V}_k,\overline{V}_k]$ is the interval in $\Psi_k$ that contains $V$. Denote by
  $\underline X_k=q_X(\underline{V}_k)\in \mathcal X_V$ and $\overline X_k= q_X(\overline{V}_k)\in \mathcal X_V$.  Clearly $\underline X_k \le X \le \overline X_k$. 
By Axiom \ref{ax:SM}, we have 
$$
U(\underline X_k ) \le U(X) \le U( \overline X_k),
$$
implying 
\begin{align}\label{eq:sandwich}
\ex^{P,Q}(\underline X_k ) \le U(X) \le \ex^{P,Q}( \overline X_k).
\end{align}
Let $$A^{V,k}_j= \left\{ V\in \left(\frac{j-1}{2^k} , \frac{j}{2^k}\right]\right\} \mbox{~~for $j\in [2^k]$.}$$
Note that 
\begin{align*}
\E^{P+Q} [|\overline X_k- X|] &  \le
\E^{P+Q} [|\overline X_k- \underline X_k|] 
\\ &\le
\textstyle\sum_{j=1}^{2^k}  \left(q_X\left(\frac{j}{2^k}\right) - q_X\left(\frac{j-1}{2^k}\right) \right) (P+Q) \left( A^{V,k}_j \right)
\\& \le (q_X(1) - q_X(0)) \max_{j \in [2^k]}  (P+Q) \left(A^{V,k}_j \right) 
\to 0,
\end{align*}
where the convergence holds because $P+Q$ is absolutely continuous with respect to $\p$ and $\p(A^{V,k}_j )=2^{-k}$.
Using the continuity result of Proposition \ref{prop:technical} (i), we have 
$$
|\ex^{P,Q}(\overline  X_k )  - \ex^{P,Q}(  \underline X_k)|\to 0~~{\rm and}~~ 
|\ex^{P,Q}(\overline X_k )  - \ex^{P,Q}( X )|\to 0.
$$
Therefore, by \eqref{eq:sandwich}, we conclude $U(X) = \ex^{P,Q}(X)$. 

\textbf{Step 4}:
Notice that from Proposition \ref{prop:technical} (ii) it follows that Axiom \ref{ax:SM} implies $P\ac Q \ac \mathbb{P}$.
By (vi) and (iv) of Lemma \ref{lm-property}, we know that $U$ is concave, which implies $P\le Q$ by applying Proposition \ref{prop:technical} (iii).
Hence, $\succsim$ can be represented by $\ex_{\alpha}^{\widetilde P,\widetilde Q}=\ex^{P,Q}$ with $\alpha=P(\Omega)/(P(\Omega)+Q(\Omega))$. The uniqueness of $(P,Q,\alpha)$ follows from Lemma \ref{prop-uniquedef}.
This completes the proof.
\end{proof}

Recall that $B_0(\R)$ defined in Section \ref{sec:AA} represents the set of all real-valued measurable simple functions. 
By the proof of Theorem \ref{th:main-general}, the necessity also holds true if we restrict the preference on $B_0(\R)$. This result is summarized as the following corollary, which will be used in the proof of Theorem \ref{thm:AASPS}.

\begin{corollary}\label{cor:thm2}
Let $(\Omega,\mathcal F,\p)$ be a standard probability space. 
For a preference relation $\succsim$ on $B_0(\R)$, if
Axioms \ref{ax:SM}, \ref{ax:C}, \ref{ax:MC} and \ref{ax:DA} hold, then there exist $P,Q\in\mathcal M_1$ and $\alpha\in(0,1/2]$ with $\alpha P\le (1-\alpha)Q$ and $P\ac Q\ac \p$
such that $\succsim$ is represented by $\ex_{\alpha}^{P,Q}$. In this case, $P,Q,\alpha$ are unique.
\end{corollary}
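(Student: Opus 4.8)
The plan is to observe that the necessity argument for Theorem~\ref{th:main-general} already takes place entirely inside $B_0(\R)$, so it transfers word for word to a preference defined only on $B_0(\R)$, after which uniqueness is supplied by Lemma~\ref{prop-uniquedef}. First I would check that the auxiliary machinery survives the restriction. In Lemma~\ref{lem:S} the certainty equivalent $U$ of a bounded simple act is still obtained by sandwiching it between constants and invoking Axiom~\ref{ax:C}. In Lemma~\ref{lm-property} every act used in the proof --- constants, indicators $\id_A$, finitely-valued perturbations of simple acts, and dilations $\lambda X$ with $X$ simple --- remains in $B_0(\R)$, which is a linear space closed under adding constants; the Lipschitz estimate in part~(iii) uses only translation invariance and monotonicity, and the rational-to-real passage in part~(vi) only needs $\lambda X\in B_0(\R)$. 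Hence $U$ restricted to $B_0(\R)$ is internal, translation invariant, $1$-Lipschitz, superadditive, disco-additive and positively homogeneous. Likewise the bi-atomic test acts $\id_{S_i}-\alpha p_i/((1-\alpha)q_j)\,\id_{S_j}$ used in Lemma~\ref{prop-uniquedef} are simple, so a $(P,Q,\alpha)$-representation on $B_0(\R)$ is unique whenever it exists.

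Next I would reproduce Steps~1 and~2 of the proof of Theorem~\ref{th:main-general}. For each $V\in\mathcal U$ and $k\ge 2$ the algebra $\mathcal F_k^V$ has $2^k\ge 4$ atoms and $L^\infty(\Omega,\mathcal F_k^V,\p)$ consists of simple acts, hence lies in $B_0(\R)$; Theorem~\ref{th-main} applied on this finite space produces $P_k^V,Q_k^V$, and the martingale / uniform-integrability argument goes through unchanged because the acts $-m\id_{B_n}+\id_{B_n^c}$, $-m\id_{D_n}+\id_{D_n^c}$ and $0$ that furnish the contradiction with Axiom~\ref{ax:MC} are all simple. Choosing $V_0$ with $\sigma(V_0)=\mathcal F$ and gluing across different $V$ by Lemma~\ref{prop-uniquedef} (now on $B_0(\R)$) yields $P,Q\in\mathcal M$ with $U=\ex^{P,Q}$ on $\bigcup_{V\in\mathcal U}\mathcal X_V$, where $\mathcal X_V=\bigcup_k L^\infty(\Omega,\mathcal F_k^V,\p)\subseteq B_0(\R)$.

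For Step~3 I would extend this representation to all of $B_0(\R)$ exactly as in the proof of Theorem~\ref{th:main-general}: for $X\in B_0(\R)$ the quantile function $q_X$ is a step function, so writing $X=q_X(V)$ $\p$-a.s.\ for a suitable $V\in\mathcal U$ (Lemma~A.32 of \cite{FS16}) the dyadic coarsenings $\underline X_k=q_X(\underline V_k)$ and $\overline X_k=q_X(\overline V_k)$ are finitely valued and $\mathcal F_k^V$-measurable, hence in $\mathcal X_V$, and satisfy $\underline X_k\le X\le\overline X_k$. Axiom~\ref{ax:SM} squeezes $U(X)$ between $\ex^{P,Q}(\underline X_k)$ and $\ex^{P,Q}(\overline X_k)$, and the $L^1$-continuity of $\ex^{P,Q}$ from Proposition~\ref{prop:technical}(i) --- applicable because $P\ac Q$ forces $(P\wedge Q)(\Omega)>0$ --- collapses the two bounds to $U(X)=\ex^{P,Q}(X)$. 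Step~4 is then unchanged: Proposition~\ref{prop:technical}(ii) together with Axiom~\ref{ax:SM} gives $P\ac Q\ac\p$, and concavity of $U$ (from superadditivity and positive homogeneity) together with Proposition~\ref{prop:technical}(iii) gives $P\le Q$; normalizing, $\succsim$ is represented by $\ex_\alpha^{\widetilde P,\widetilde Q}$ with $\alpha=P(\Omega)/(P(\Omega)+Q(\Omega))\in(0,1/2]$ and $\alpha\widetilde P\le(1-\alpha)\widetilde Q$, while uniqueness of $(\alpha,\widetilde P,\widetilde Q)$ is Lemma~\ref{prop-uniquedef}.

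The work here is essentially bookkeeping, with no new idea beyond the proof of Theorem~\ref{th:main-general}; the one point that genuinely needs attention --- hence the main, but mild, obstacle --- is confirming that every intermediate act stays in $B_0(\R)$, the decisive instance being that the sandwiching acts $\underline X_k,\overline X_k$ in Step~3 are $q_X$ evaluated at the dyadically rounded $\underline V_k,\overline V_k$ and are therefore still finitely valued.
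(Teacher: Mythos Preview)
Your proposal is correct and follows essentially the same route as the paper's own proof, which simply notes that Steps~1--3 of Theorem~\ref{th:main-general} use only simple acts and that Lemmas~\ref{lm-property}, \ref{prop-uniquedef} and Proposition~\ref{prop:technical}(iii) remain valid on $B_0(\R)$, then repeats Step~4. You have been more explicit than the paper about verifying that each intermediate act stays simple; one small remark is that in Step~3 the applicability of Proposition~\ref{prop:technical}(i) is most directly justified by $P\le Q$ (inherited from $P_k^V\le Q_k^V$ in Step~1, hence $P\wedge Q=P$ with $P(\Omega)=1$) rather than by $P\ac Q$, which is only established afterwards in Step~4.
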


\begin{proof}
Following a similar arguments of Steps 1, 2 and 3 of the proof of Theorem \ref{th:main-general} (note that all acts in Steps 1 and 2 are the elements of $B_0(\R)$), we know that there exists $P,Q\in\mathcal M$ such that $\succsim$ can be represented by $\ex^{P,Q}$ on $B_0(\R)$.
From the proofs of Lemmas \ref{lm-property} and \ref{prop-uniquedef} and Proposition \ref{prop:technical} (iii), 
one can check that these results still hold if we assume that the duet expectiles are defined on $B_0(\R)$.
Thus, we can directly apply them, and sharing the similar arguments in Step 4 of the proof of Theorem \ref{th:main-general} completes the proof.
\end{proof}

\end{appendix}


\newpage

\begin{center}
    \large ~\\
    Supplement to ``Disappointment concordance and duet expectiles"
\end{center}
\begin{center}
 \vspace{0.1cm} Fabio Bellini,   Tiantian Mao,   Ruodu Wang,  and Qinyu Wu
\end{center}
 
~\\

\begin{appendix}
\setcounter{section}{1}

This supplement is organized as follows. Appendix \ref{sec:properties} contains technical properties of duet expectiles, which proves the three statements in Proposition \ref{prop:technical}.  In Appendix \ref{app:omproofs}, we provide the omitted proofs of the results in Sections \ref{sec:solo}, \ref{sec:discuss} and \ref{sec:AA}. In Appendix \ref{app:A}, we provide some counterexamples to support certain claims made in the paper and to clarify several technical conditions.

\section{Properties of duet expectiles}\label{sec:properties}


\subsection{Some standard  properties}\label{subsec:sp}
We first state several standard  properties.
Recall the following notation in Section \ref{sec:7.1}.
For $P\in \M$, let $\|\cdot\|_{\infty}^P$ and
$\|\cdot\|_{1}^P$ be the $L^\infty$-norm and $L^1$-norm under $P$, respectively, i.e., $\Vert X\Vert^P_{\infty}=\esssup_{P}|X|$ and
$\Vert X\Vert^P_1=\E^P[|X|]$.  
Moreover, $P\vee Q$ is  the maximum   of two measures $P$ and $Q$, namely the smallest element $R$ in $\M$ satisfying $R\ge P$ and $R\ge Q$. 
Similarly, $P\wedge Q$ is the minimum   of $P$ and $Q$. 


\begin{proposition}\label{prop-gextrivial}
Let $P,Q\in\mathcal M$.
The duet expectile has the following properties.
\begin{enumerate}[(i)]
\item[(i)]  $\ex^{P,Q}$ is monotone, i.e., $\ex^{P,Q}(Y)\ge \ex^{P,Q}(X)$ whenever $X,Y\in\X$ and $Y\ge X$.
\item[(ii)]  $\ex^{P,Q}$ is translation  invariant, i.e., $\ex^{P,Q}(X+c)= \ex^{P,Q}(X)+c$ for $c\in \R$ and $X\in \X$.
\item[(iii)] $\ex^{P,Q}$ is positively homogeneous, i.e., $\ex^{P,Q}(\lambda X)=\lambda \ex^{P,Q}(X)$ for   $\lambda\ge 0$ and $X\in\X$.
\item[(iv)] $\ex^{P,Q}(X)=-\ex^{Q,P}(-X)$ for all $X\in\X$.
\item[(v)] $\ex^{P,Q}$ is  continuous in the following  senses: for all $X,Y\in\mathcal X$,
\begin{enumerate}[(a)]
\item  $|\ex^{P,Q}(X)- \ex^{P,Q}(Y)|\le \Vert X-Y\Vert_{\infty}^{P\vee Q}$.
\item    If   $ (P\wedge Q)(\Omega) \ne 0$, then 
$$|\ex^{P,Q}(X)- \ex^{P,Q}(Y)|\le  \frac{ \|X-Y\|^{P\vee Q}_1}{(P\wedge Q)(\Omega)} 
.$$
\end{enumerate}
\end{enumerate}
\end{proposition}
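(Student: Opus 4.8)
\textbf{Proof plan for Proposition \ref{prop-gextrivial}.}

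The plan is to work throughout from the definition
$$\ex^{P,Q}(X)=\inf\left\{x\in\R:\E^P[(X-x)_+]\le\E^Q[(x-X)_+]\right\},$$
setting $g_X(x)=\E^P[(X-x)_+]-\E^Q[(x-X)_+]$. The first observation I would record is that $g_X$ is continuous, nonincreasing in $x$, strictly decreasing wherever it takes values in a neighbourhood where neither $(X-x)_+$ nor $(x-X)_+$ vanishes $P$- (resp.\ $Q$-)almost everywhere, with $g_X(x)\to+\infty$ as $x\to-\infty$ and $g_X(x)\to-\infty$ as $x\to+\infty$; hence the infimum is attained and equals the largest (indeed, on the relevant range, the unique) zero-crossing point, i.e.\ $g_X(\ex^{P,Q}(X))=0$ when $(P\wedge Q)(\Omega)\neq 0$. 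Parts (i)--(iv) then follow by short manipulations: monotonicity because $Y\ge X$ gives $g_Y\ge g_X$ pointwise, so the crossing point of $g_Y$ is at least that of $g_X$; translation invariance because $g_{X+c}(x)=g_X(x-c)$; positive homogeneity because $g_{\lambda X}(x)=\lambda g_X(x/\lambda)$ for $\lambda>0$ (and the case $\lambda=0$ is immediate since $\ex^{P,Q}(0)=0$); and the antisymmetry (iv) because $(-X-x)_+=(x-(-X))_-$... more precisely $g_{-X}(x)=\E^P[(-X-x)_+]-\E^Q[(x+X)_+]=-g^{Q,P}_{X}(-x)$ in the obvious notation, which rearranges to $\ex^{P,Q}(X)=-\ex^{Q,P}(-X)$.

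For the continuity statement (v)(a), I would fix $X,Y$ and write $\delta=\Vert X-Y\Vert_\infty^{P\vee Q}$, so that $Y-\delta\le X\le Y+\delta$ holds $P$-a.s.\ and $Q$-a.s.; then monotonicity (i) and translation invariance (ii) give $\ex^{P,Q}(Y)-\delta\le\ex^{P,Q}(X)\le\ex^{P,Q}(Y)+\delta$, which is the claim. For (v)(b), set $x=\ex^{P,Q}(X)$ and $y=\ex^{P,Q}(Y)$ and assume without loss of generality $y\ge x$. Using $g_Y(y)=0=g_X(x)$ and the monotonicity of $g_X$, the idea is to bound $0=g_Y(y)-g_X(x)$ from below: write $g_Y(y)-g_X(y)=\E^P[(Y-y)_+-(X-y)_+]-\E^Q[(y-Y)_+-(y-X)_+]$, estimate each bracket in absolute value by $|X-Y|$ (since $t\mapsto(t)_+$ is $1$-Lipschitz), so that $|g_Y(y)-g_X(y)|\le\E^{P\vee Q}[|X-Y|]\le\Vert X-Y\Vert_1^{P\vee Q}$; and separately bound $g_X(x)-g_X(y)=g_X(x)-g_X(y)\ge (y-x)(P\wedge Q)(\Omega)$, the key inequality, which comes from the fact that on the interval $[x,y]$ the map $g_X$ decreases at rate at least $P(\{X\ge \cdot\})+Q(\{X\le\cdot\})\ge (P\wedge Q)(\Omega)$ — concretely, $g_X(x)-g_X(y)=\int_x^y\big(P(X\ge t)+Q(X\le t)\big)\,\d t$ (after justifying differentiation of $g_X$, or by a direct Fubini computation on $(X-x)_+-(X-y)_+$ and $(y-X)_+-(x-X)_+$), and $P(X\ge t)+Q(X\le t)\ge P(\Omega\setminus\{X<t\})\wedge\cdots$; more simply $P(X\ge t)+Q(X\le t)\ge (P\wedge Q)(X\ge t)+(P\wedge Q)(X\le t)\ge (P\wedge Q)(\Omega)$. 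Combining, $(y-x)(P\wedge Q)(\Omega)\le g_X(x)-g_X(y)= g_Y(y)-g_X(y)\le \Vert X-Y\Vert_1^{P\vee Q}$, which is exactly (v)(b).

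I expect the main obstacle to be the careful justification of the ``rate of decrease'' lower bound for $g_X$ used in (v)(b) — i.e.\ the identity $g_X(x)-g_X(y)=\int_x^y\big(P(X\ge t)+Q(X\le t)\big)\,\d t$ and the ensuing estimate $P(X\ge t)+Q(X\le t)\ge (P\wedge Q)(\Omega)$. The identity is a routine application of Fubini (writing $(X-x)_+=\int_x^\infty \id_{\{X\ge t\}}\,\d t$ and similarly for the negative part), and the estimate uses that $\{X\ge t\}$ and $\{X\le t\}$ cover $\Omega$ so $(P\wedge Q)(X\ge t)+(P\wedge Q)(X\le t)\ge (P\wedge Q)(\Omega)$; but one must handle atoms of $X$ at $t$ and the (harmless) overlap on $\{X=t\}$ with a little care. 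Everything else is bookkeeping with the Lipschitz property of $(\cdot)_+$ and the monotone/translation/homogeneity structure already isolated in the first paragraph.
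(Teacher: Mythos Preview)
Your plan is correct and follows essentially the paper's route: (i)--(iv) by direct manipulation of $g_X$, (v)(a) from (i) and (ii), and (v)(b) by combining a horizontal bound $|g_Y(y)-g_X(y)|\le\|X-Y\|_1^{P\vee Q}$ with a vertical lower bound $g_X(x)-g_X(y)\ge(y-x)(P\wedge Q)(\Omega)$; the paper obtains the latter via signed-measure algebra on $P-Q$ rather than your Fubini integral, but the content is the same. One caveat: ``estimate each bracket by $|X-Y|$'' as you wrote it only delivers $\E^{P+Q}[|X-Y|]$, a factor of two off; to reach $\E^{P\vee Q}[|X-Y|]$ you need the pointwise identity $|(Y-y)_+-(X-y)_+|+|(y-Y)_+-(y-X)_+|=|Y-X|$ (the two brackets have the same sign and sum to $Y-X$), or, as the paper does, first reduce to $Y\ge X$ by replacing $Y$ with $Y\vee X$.
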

 
\begin{remark}
If $P $ and $Q$ are not mutually singular,  then by   
Proposition \ref{prop-gextrivial} (v), $\ex^{P,Q}$ is Lipschitz continuous with respect to  $\Vert \cdot \Vert_{\infty}^{P\vee Q}$ and 
$\Vert\cdot \Vert_1^{P\vee Q}$.
The Lipschitz coefficient in point (b) of Proposition \ref{prop-gextrivial} (v) is generally sharp.
For instance, in the special case of $\ex_\alpha^P$,  by setting  $P\in \M_1$ and  $\alpha Q= (1-\alpha)   P$, 
point (b) yields
		$$|\ex^{P}_\alpha (X)- \ex^{P}_\alpha(Y)|\le \left(\frac{1-\alpha}{\alpha} \vee \frac{\alpha}{1-\alpha}\right)  { \|X-Y\|^{P }_1} 
			,$$
			which gives the known Lipschitz coefficient  for the classic expectile $\ex_{\alpha}^P$ in \citet[Theorem 10]{BKMR14}.
\end{remark}

\begin{proof}[Proof of Proposition \ref{prop-gextrivial}]
The first four statements are straightforward to check from the definition. 
Point (a) of (v) is a direct consequence of (i) and (ii). 
Below we show point (b) of (v). 
Denote by $q=Q(\Omega)$ and  let $e_X=\ex^{P,Q}(X)$ and  $e_Y=\ex^{P,Q}(Y)$ for $X,Y\in\mathcal X$.
Assume without loss of generality that $Y\ge X$; otherwise we can take $Y\vee X$ in place of $Y$.
Clearly, $e_Y\ge e_X$ by (i).
Let the function $f_X:\R\to \R$ be given by
$$
f_X(x) = \E^Q[X] + \E^{P-Q} [(X-x)_+] -  q x, 
$$
and similarly we define $f_Y$. 
By the definition of $e_X$ and $e_Y$, we have $f_X(e_X)=f_Y(e_Y)=0$.
Note that for any $x\in \R$,
\begin{align*}
f_Y(x) - f_X(x) &= \E^Q[Y-X] + \E^{P-Q} [ (Y-x)_+ - (X-x)_+] 
\\ &\le \E^Q[Y-X] + \E^{(P-Q)_+ } [ (Y-x)_+ - (X-x)_+]  
\\& \le \E^Q[Y-X] + \E^{(P-Q)_+ } [  Y-X ]  
 =  \E^{P\vee Q} [Y-X].
\end{align*}
Moreover, for $y\ge x$, 
\begin{align*}
f_Y(x) - f_Y(y) &= \E^{P-Q} [ (Y-x)_+ - (Y-y)_+]  - q( x-y)
\\ &\ge -  \E^{(Q-P)_+} [ (Y-x)_+ - (Y-y)_+]    - \E^Q[x-y]
\\& \ge  -  \E^{(Q-P)_+} [ y-x]    + \E^Q[y-x]\\ 
 &=   \E^{P\wedge Q}[y-x] =  (y-x) (P\wedge Q)(\Omega) .
\end{align*}
Putting the above two inequalities together, and using $f_Y(e_Y)=f_X(e_X)$, we get 
$$
(e_Y-e_X)(P\wedge Q)(\Omega) \le  \E^{P\vee Q} [Y-X],
$$
and this yields the desired statement.  
\end{proof}

\subsection{Strict monotonicity}\label{subsub:SM}
We recall two cases of $\mathcal X$ in Section \ref{sec:axiom}. That is, $\Omega$ is finite, in which case $\X$ can be identified with $\R^n$, 
or $\Omega$ is infinite, in which case $\X=L^\infty(\Omega,\mathcal F,\p)$ is a set of all bounded acts under some fixed reference atomless probability $\p$.
We say briefly that \ref{ax:SM} holds if Axiom \ref{ax:SM} holds for the preference represented by $\ex^{P,Q}$. First, we give a simple equivalent condition on $P$ and $Q$ for \ref{ax:SM} to hold.

\begin{proposition}\label{prop-finSM}
For $P,Q\in\mathcal M$, \ref{ax:SM} holds if and only if 
$P,Q>0$ in case $\Omega$ is finite, or $P\ac Q\ac \p$ in case $\Omega$ is infinite. 
\end{proposition}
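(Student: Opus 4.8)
The plan is to separate the two implications, after first recording the elementary fact that $\ex^{P,Q}(X)$ is the smallest root of an auxiliary function. For $X\in\X$ put
\[
g_X(x)=\E^P[(X-x)_+]-\E^Q[(x-X)_+],\qquad x\in\R .
\]
Since $X$ is bounded and $P,Q\in\mathcal M$, the map $g_X$ is continuous and non-increasing, with $g_X(x)\to+\infty$ as $x\to-\infty$ and $g_X(x)\to-\infty$ as $x\to+\infty$; hence $\{x:g_X(x)\le 0\}=[\ex^{P,Q}(X),\infty)$ and in particular $g_X(\ex^{P,Q}(X))=0$. Monotonicity of $\ex^{P,Q}$ is already supplied by Proposition~\ref{prop-gextrivial}(i), so only strictness is at stake; I will use Proposition~\ref{prop-gextrivial} freely.

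For sufficiency, assume the stated positivity (finite case) or equivalence (infinite case) condition, and suppose towards a contradiction that $X\ge Y$ with $X\neq Y$ on a $\p$-non-null set while $\ex^{P,Q}(X)=\ex^{P,Q}(Y)=:y$. Then $g_X(y)=g_Y(y)=0$, and since $X\ge Y$ forces $(X-y)_+\ge(Y-y)_+$ and $(y-X)_+\le(y-Y)_+$ pointwise,
\[
0=g_X(y)-g_Y(y)=\E^P\!\big[(X-y)_+-(Y-y)_+\big]+\E^Q\!\big[(y-Y)_+-(y-X)_+\big]
\]
is a sum of two non-negative terms, so $(X-y)_+=(Y-y)_+$ $P$-a.s.\ and $(y-X)_+=(y-Y)_+$ $Q$-a.s. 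Unwinding these identities pointwise shows that $\{X>Y\}\cap\{X>y\}$ is $P$-null and $\{X>Y\}\cap\{Y<y\}$ is $Q$-null; since on $\{X>Y\}$ one of $X>y$, $Y<y$ always holds, $\{X\neq Y\}=\{X>Y\}$ is the union of a $P$-null and a $Q$-null set. Under $P,Q>0$ this union is empty, and under $P\ac Q\ac\p$ (so $P,Q\ll\p$) it is $\p$-null; either conclusion contradicts the hypothesis, so $\ex^{P,Q}(X)>\ex^{P,Q}(Y)$ and Axiom~\ref{ax:SM} holds.

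For necessity I argue by contraposition, exhibiting an explicit violation of Axiom~\ref{ax:SM} whenever the condition fails. When $P,Q>0$ fails (finite case) there is $i$ with $P(\{i\})=0$ or $Q(\{i\})=0$; in the infinite case, failure of $P\ac Q\ac\p$ means (since $P,Q\ll\p$ for $\ex^{P,Q}$ to be defined on $L^\infty$) that one of $P,Q$ does not dominate $\p$, so there is $S\in\mathcal F$ with $\p(S)>0$ and $P(S)=0$ or $Q(S)=0$. In the case $P(S)=0$, a direct computation gives $g_{\id_S}(x)=-x\,Q(S^c)\le 0$ for $x\in[0,1]$ and $g_{\id_S}(x)=-x\,P(\Omega)>0$ for $x<0$, whence $\ex^{P,Q}(\id_S)=0=\ex^{P,Q}(0)$ although $\id_S\ge 0$ and $\id_S\neq 0$ on a non-null (resp.\ non-empty) set — a violation of Axiom~\ref{ax:SM}. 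The case $Q(S)=0$ reduces to the previous one by applying the reflection identity $\ex^{P,Q}(X)=-\ex^{Q,P}(-X)$ of Proposition~\ref{prop-gextrivial}(iv) with $X=-\id_S$, noting that $Q(S)=0$ makes $Q$ play the role of the vanishing measure for $\ex^{Q,P}$, so that $\ex^{Q,P}(\id_S)=0$ and hence $\ex^{P,Q}(-\id_S)=0=\ex^{P,Q}(0)$ while $0\ge-\id_S$ and $0\neq-\id_S$.

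I expect the sufficiency step to be the main obstacle: because $P\le Q$ is not assumed here, the defining equation for $\ex^{P,Q}$ need not have a unique solution, so one must consistently use the smallest root of $g_X$ and carefully extract the two pointwise almost-everywhere equalities from the single scalar equation $g_X(y)-g_Y(y)=0$, together with the set-containment $\{X>Y\}\subseteq\{X>y\}\cup\{Y<y\}$. The remaining bookkeeping — the limiting behaviour of $g_X$, the explicit evaluations on $\id_S$-type acts, and the reflection reduction — is routine given Proposition~\ref{prop-gextrivial}.
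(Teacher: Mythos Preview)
Your proof is correct and, for sufficiency, takes a genuinely different route from the paper. The paper fixes $x^*=\ex^{P,Q}(X)$, isolates a single state (or event) where $Y>X$, and shows by a direct case split that $g_Y(x^*)>0$, forcing $\ex^{P,Q}(Y)>x^*$. Your contradiction argument---setting $y=\ex^{P,Q}(X)=\ex^{P,Q}(Y)$, writing $g_X(y)-g_Y(y)$ as a sum of two non-negative terms, and deducing from their vanishing that $\{X>Y\}$ is covered by a $P$-null set and a $Q$-null set via the inclusion $\{X>Y\}\subseteq\{X>y\}\cup\{Y<y\}$---is cleaner, symmetric in $P$ and $Q$, and handles the finite and atomless cases uniformly without singling out a state. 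The paper's approach has the minor advantage that it never needs to worry about non-uniqueness of roots of $g_X$, since it does not assume equality of the two expectiles.

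For necessity your argument matches the paper's in spirit (indicator acts on null sets), and the reflection identity for the $Q$-null case is a tidy shortcut. One point deserves an extra line: your parenthetical ``since $P,Q\ll\p$ for $\ex^{P,Q}$ to be defined on $L^\infty$'' is asserted rather than proved, whereas the paper explicitly treats the case $P(A)>0$, $\p(A)=0$ (and the analogous $Q$ case) as part of the necessity proof. Your claim is correct and the justification is immediate---if $\p(A)=0$ and $P(A)>0$ then $\id_A=0$ in $L^\infty$ but $g_{\id_A}(0)=P(A)>0$ forces $\ex^{P,Q}(\id_A)>0$---but you should spell this out rather than leave it as a parenthetical.
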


\begin{proof}[Proof of Proposition \ref{prop-finSM}]
First, we consider the case that $\Omega=[n]$. Assume that $P(i)$, $Q(i)>0$ for all $i\in[n]$. Let $X,Y\in\X$ be such that $X\le Y$ and $X\neq Y$, and denote by $x^*=\ex^{P,Q}(X)$ and $y^*=\ex^{P,Q}(Y)$. Without loss of generality, we assume that $X(1)<Y(1)$. Define $f_Z(x)=\E^P[(Z-x)_+]-\E^Q[(Z-x)_-]$ for $Z\in\mathcal X$ and $x\in\R$. Denote by
$p_1=P(1)$, $q_1=Q(1)$, $x_1=X(1)$ and $y_1=Y(1)$.
It holds that 
\begin{align}\label{eq-PRSM}
&~~~~f_Y(x^*)
=\E^P[(Y-x^*)_+]-\E^Q[(Y-x^*)_-]\notag\\
&=f_X(x^*)+(\E^P[(Y-x^*)_+]-\E^P[(X-x^*)_+])-(\E^Q[(Y-x^*)_-]-\E^Q[(X-x^*)_-])\notag\\
&\ge p_1[(y_1-x^*)_+-(x_1-x^*)_+]-q_1[(y_1-x^*)_--(x_1-x^*)_-],
\end{align}
where we have used $f_X(x^*)=0$ in the inequality. If $y_1>x^*$, then \eqref{eq-PRSM} implies $f_Y(x^*)\ge p_1(y_1-x_1)>0$. If $y_1\le x^*$, then \eqref{eq-PRSM} implies $f_Y(x^*)\ge q_1(y_1-x_1)>0$. Hence, we have $f_Y(x^*)>0$. It follows from the definition of $\ex^{P,Q}$   that $y^*>x^*$. This implies \ref{ax:SM} holds. Conversely, suppose that \ref{ax:SM} holds. We first show that $P(i)>0$ for all $i\in[n]$ at first.
Assume by contradiction that $P(i)=0$ for some $i\in[n]$.
One can check that $\ex^{P,Q}(k\cdot\id_{\{i\}})=0$ for all $k>0$. This yields a contradiction to \ref{ax:SM}. Hence, we conclude that $P(i)>0$ for all $i\in[n]$. Assume now by contradiction that $Q(i)=0$ for some $i\in[n]$. Noting that $\p(\Omega\setminus \{i\})>0$, we have $\ex^{P,Q}(k\cdot\id_{\{1\}})=0$ for all $k<0$. This also yields a contradiction. Hence, we conclude that $P(i)Q(i)>0$ for all $i\in[n]$. 

Next, we consider the case that $\Omega$ is infinite. Assume that $P\ac Q\ac \p$. Let $X,Y\in \X$ be such that $X\le Y$ and $X\neq Y$. Denote by $A=\{X<Y\}$. Noting that $\p(A)>0$, we have $P(A)Q(A)>0$. Similar to the finite case by substituting the state $\{1\}$ into $A$, we obtain $\ex^{P,Q}(X)<\ex^{P,Q}(Y)$. Conversely, assume by contradiction that $P\not\ac \p$ or $Q\not\ac \p$. We only focus on the case that $P\not\ac \p$ as the other is similar. If $\p(A)>0$ and $P(A)=0$ for some $A\in\mathcal F$, then we have $\ex^{P,Q}(k\cdot \id_{A})=0$ for all $k>0$. This contradicts \ref{ax:SM}, and hence, we have $ \p$ is absolutely continuous with respect to $P$.
Assume now $P(A)>0$ and $\p(A)=0$ for some $A\in\mathcal F$. If $Q(A^c)>0$, then it is easy to calculate that $\ex^{P,Q}(k\cdot \id_{A})=kP(A)/(P(A)+Q(A^c))$ for $k> 0$ which is strictly increasing in $k$. This contradicts to the fact that $\p$-almost surely equal objects are identical. If $Q(A^c)=0$ and $Q(A)>0$, then we have $\ex^{P,Q}(k\cdot\id_{A^c})=0$ for all $k<0$ as $P(A)>0$. Noting that $\p(A^c)=1$, this yields a contradiction to \ref{ax:SM}. Hence, we have $ P$ is absolutely continuous with respect to $\p$. This completes the proof.
\end{proof}

\subsection{Convexity and concavity}\label{subsub:cvcx}

In this section, we aim to investigate concavity and convexity of duet expectiles. We focus on the space and measures satisfying the following assumption. 

\begin{assumption}\label{assum:P}
There exist $S_1,S_2,S_3\in\mathcal F$ such that $\{S_1,S_2,S_3\}$ is a partition of $\Omega$,  with $P(S_1)$, $P(S_2)>0$ and $Q(S_1)$, $Q(S_3)>0$.
\end{assumption}

From Proposition \ref{prop-finSM} it follows that Assumption \ref{assum:P} always holds whenever Axiom \ref{ax:SM} holds on a finite or infinite space.

\begin{proposition}\label{prop-cv}
For $P,Q\in\mathcal M$, 
assume that Assumption \ref{assum:P} holds. The duet expectile $\ex^{P,Q}$ is concave (resp.~convex) if and only if $P\le Q$ (resp.~$P\ge Q$).
\end{proposition}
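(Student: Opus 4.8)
The strategy is to reduce the statement, via translation invariance and the scaling relation $\ex^{P,Q}=\ex_{\alpha}^{\widetilde P,\widetilde Q}$, to a local convexity/concavity check, and then to exhibit, using Assumption~\ref{assum:P}, a pair of bi- or tri-atomic acts witnessing a violation whenever neither $P\le Q$ nor $P\ge Q$ holds. First I would recall from Proposition~\ref{prop-gextrivial} that $\ex^{P,Q}$ is translation invariant and positively homogeneous, so that convexity is equivalent to superadditivity and concavity to subadditivity; thus it suffices to compare $\ex^{P,Q}(X+Y)$ with $\ex^{P,Q}(X)+\ex^{P,Q}(Y)$, and by translation invariance one may normalize $\ex^{P,Q}(X)=\ex^{P,Q}(Y)=0$, i.e.\ work with acts in $\mathcal X^0$. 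The key computational tool is the relation, valid for $x=\ex_{\alpha}^{P,Q}(X)$ with $\beta=1-\alpha$,
\begin{align*}
\alpha(\E^P[X]-x)=\E^{\beta Q-\alpha P}[(x-X)_+],\qquad
\beta(\E^Q[X]-x)=\E^{\beta Q-\alpha P}[(X-x)_+],
\end{align*}
already used in the sketch of Theorem~\ref{th:riskaverse1}; when $\alpha P\le (1-\alpha)Q$ the measure $\beta Q-\alpha P$ is nonnegative, which will drive the sufficiency direction.

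For the sufficiency direction ($P\le Q\Rightarrow$ concave), I would argue as in the sufficiency part of Theorem~\ref{th-main}: given $X,Y$ with $\ex^{P,Q}(X)=\ex^{P,Q}(Y)=0$, I use the identity $\E^P[(Z)_+]-\E^Q[(Z)_-]=\E^P[Z]+\E^{Q-P}[(Z)_-]$, apply it to $Z=X+Y$, bound $(X+Y)_-\le X_-+Y_-$ together with $\E^P[X+Y]=\E^P[X]+\E^P[Y]$, and use $Q-P\ge 0$ to conclude $\E^P[(X+Y)_+]-\E^Q[(X+Y)_-]\ge 0$, hence $\ex^{P,Q}(X+Y)\ge 0$; this is exactly superadditivity, equivalently (by positive homogeneity and translation invariance) concavity. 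The case $P\ge Q$ follows by the duality $\ex^{P,Q}(X)=-\ex^{Q,P}(-X)$ from Proposition~\ref{prop-gextrivial}(iv).

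The necessity direction is where Assumption~\ref{assum:P} and the hypothesis $n\ge 4$ (in the finite case) enter, and I expect this to be the main obstacle. Suppose $\ex^{P,Q}$ is concave but $P\le Q$ fails, i.e.\ $\d P/\d(P+Q)>\d Q/\d(P+Q)$ on some set of positive measure; I would then also rule out $P\ge Q$ and produce sets on which the Radon--Nikodym derivative $\d\widetilde Q/\d\widetilde P$ exceeds $1$ on one piece $A$ and is below $1$ on another piece $B$, refining the partition $\{S_1,S_2,S_3\}$ of Assumption~\ref{assum:P} so that $A,B$ and a third ``balancing'' block are available with strictly positive $P$- and $Q$-mass where needed. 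On such a configuration I construct explicit acts $X$ and $Y$ supported on two atoms each, chosen so that $\ex^{P,Q}(X)=\ex^{P,Q}(Y)=0$ by solving the defining equation $\alpha\E^P[(X-0)_+]=\beta\E^Q[(0-X)_-]$ for the weights, and then compute $\ex^{P,Q}(X+Y)$ directly; the asymmetry $\d\widetilde Q/\d\widetilde P\gtrless 1$ on the two blocks makes the cross term strictly negative, yielding $\ex^{P,Q}(X+Y)<0$, contradicting concavity. The delicate point is ensuring such acts actually lie in the domain and that the partition of Assumption~\ref{assum:P} gives enough ``room'' — this is precisely where the counterexample for $n=2$ mentioned in the paper shows the hypothesis cannot be dropped — so I would treat the finite and atomless cases in parallel, using Assumption~\ref{assum:P} to pick the three blocks and, in the atomless case, splitting blocks further as needed via nonatomicity. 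Concavity of $\ex^{P,Q}$ under the axioms of Theorem~\ref{th-main} then also follows a posteriori from parts (vi) and (iv) of Lemma~\ref{lm-property}, closing the loop.
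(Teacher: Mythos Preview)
Your approach is essentially the paper's: both arguments reduce the question to sub-/superadditivity (the paper via the acceptance-set characterization from F\"ollmer--Schied, you directly via positive homogeneity and translation invariance), establish sufficiency by exploiting the sign of $Q-P$ in the expression $\E^P[Z_+]-\E^Q[Z_-]$, construct for necessity explicit acts on a three-block partition supplied by Assumption~\ref{assum:P}, and invoke the duality $\ex^{P,Q}(X)=-\ex^{Q,P}(-X)$ for the other case.

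A few slips to clean up. First, your labeling is inverted: for a positively homogeneous functional, convexity is equivalent to \emph{sub}additivity and concavity to \emph{super}additivity (you apply it correctly afterwards). Second, the identity should read $\E^P[Z_+]-\E^Q[Z_-]=\E^P[Z]-\E^{Q-P}[Z_-]$; with this sign the bound $(X+Y)_-\le X_-+Y_-$ is indeed what is needed, together with $\E^P[X]=\E^{Q-P}[X_-]$ from $\ex^{P,Q}(X)=0$. Third, your necessity sketch overcomplicates matters: you do not need a two-sided condition on $\d\widetilde Q/\d\widetilde P$, nor do you need to ``rule out $P\ge Q$.'' If $P\le Q$ fails there is a single set $A$ with $P(A)>Q(A)$; intersecting with the partition $\{S_1,S_2,S_3\}$ of Assumption~\ref{assum:P} yields a three-block partition $\{B_1,B_2,B_3\}$ with $P(B_1)>Q(B_1)$, $Q(B_2)>0$, $P(B_3)>0$, on which one writes down two tri-atomic acts $X,Y\in\mathcal X^0$ whose sum fails the superadditivity inequality. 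Working with the unnormalized measures throughout (the statement concerns $P\le Q$, not $\widetilde P\le\widetilde Q$) avoids the Radon--Nikodym detour entirely.
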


\begin{remark}
Sufficiency of Proposition \ref{prop-cv} always holds on a general space.  
Theorem \ref{th:riskaverse1} and Proposition \ref{prop-cv} lead to the following equivalent conditions under   \ref{ax:SM}:  
(i) $\ex^{P,Q} \le \E^{\widetilde P}$ (resp.~$\ex^{P,Q} \ge \E^{\widetilde P}$);
 (ii) $\ex^{P,Q} \le \E^{\widetilde Q}$ (resp.~$\ex^{P,Q} \ge \E^{\widetilde Q}$);
(iii) $\ex^{P,Q}$ is concave (resp.~convex);
(iv) $P\le Q$ (resp.~$P\ge Q$). 
\end{remark}

To prove Proposition \ref{prop-cv}, we need the following lemma.
For a functional $U:\mathcal X\to\R$, its \emph{acceptance set} is defined by $\{X\in\mathcal X: U(X)\le 0\}$.

\begin{lemma}[Proposition 4.6 of \cite{FS16}]\label{lm-setcx}
Let $U:\mathcal X\to\R$ be a monotone, translation-invariant and positively homogeneous functional and $\mathcal A$ be its acceptance set. Then, $U:\mathcal X\to\R$ is convex if and only if $\mathcal A$ is a convex cone.
\end{lemma}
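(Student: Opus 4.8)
The plan is to prove the two implications separately, using only translation invariance and positive homogeneity of $U$ (monotonicity plays no role in this particular equivalence, but it is harmless as a standing hypothesis). Throughout, constant acts are identified with their values, so that $U(X)$ may be subtracted inside the argument of $U$.

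For the ``only if'' direction, suppose $U$ is convex. Positive homogeneity gives $U(\lambda X)=\lambda U(X)\le 0$ for every $\lambda\ge 0$ whenever $X\in\mathcal A$, so $\mathcal A$ is a cone. If $X,Y\in\mathcal A$, convexity yields $U(X/2+Y/2)\le U(X)/2+U(Y)/2\le 0$, hence $X/2+Y/2\in\mathcal A$; combining this with the cone property gives $X+Y=2(X/2+Y/2)\in\mathcal A$. Thus $\mathcal A$ is a convex cone.

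For the ``if'' direction, suppose $\mathcal A$ is a convex cone, and fix $X,Y\in\mathcal X$ and $\lambda\in[0,1]$. Translation invariance gives $U(X-U(X))=U(X)-U(X)=0$, so $X-U(X)\in\mathcal A$, and likewise $Y-U(Y)\in\mathcal A$. By convexity of $\mathcal A$,
\begin{equation*}
\lambda(X-U(X))+(1-\lambda)(Y-U(Y))=\lambda X+(1-\lambda)Y-\bigl(\lambda U(X)+(1-\lambda)U(Y)\bigr)\in\mathcal A,
\end{equation*}
so this act has value $\le 0$ under $U$. Applying translation invariance once more gives
\begin{equation*}
U\bigl(\lambda X+(1-\lambda)Y\bigr)-\bigl(\lambda U(X)+(1-\lambda)U(Y)\bigr)\le 0,
\end{equation*}
which is exactly convexity of $U$.

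Since neither step involves a genuine difficulty, there is no real obstacle to overcome; the only point requiring a little care is to invoke translation invariance in the right form --- namely that $X-U(X)\in\mathcal A$ with $U(X-U(X))=0$ --- which is precisely what transfers the convex-combination argument carried out inside $\mathcal A$ back to the desired inequality for $U$. One may also note that the ``cone'' hypothesis on $\mathcal A$ is in fact automatic here, being equivalent to the assumed positive homogeneity of $U$, so the substance of the statement is the equivalence between convexity of $U$ and convexity of $\mathcal A$.
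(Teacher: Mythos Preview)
Your proof is correct and is essentially the standard argument for this equivalence. The paper does not supply its own proof of this lemma but simply quotes it as Proposition~4.6 of \cite{FS16}, so there is nothing further to compare against.
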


\begin{proof}[Proof of Proposition \ref{prop-cv}]
We first consider the convexity case. 
The acceptance set of $\ex^{P,Q}$ has the following form:
\begin{align*}
	\mathcal A:=\{X\in\X: \E^P[X_+]-\E^{Q}[X_-]\le 0\}
	=\{X\in\X: \E^Q[X]+\E^{P-Q}[X_+]\le 0\}.
\end{align*}
Since $\ex^{P,Q}$ is monotone, translation-invariant and positively homogeneous (see Proposition \ref{prop-gextrivial}),
it follows from Lemma \ref{lm-setcx} that $\ex^{P,Q}$ is convex if and only if $\mathcal A$ is a convex cone. Therefore, it suffices to verify:
\begin{align*}
	P\ge Q \Longleftrightarrow \mathcal A~{\rm is~a~convex~cone}.
\end{align*}
Suppose that $P\ge Q$. It is obvious that $\mathcal A$ is a convex cone since $X\mapsto \E^Q[X]$ is linear and $X\mapsto \E^{P-Q}[X_+]$ is convex and positively homogeneous. Conversely, suppose that $\mathcal A$ is a convex cone.
We assume by contradiction that there exists $A\in\mathcal F$ such that $P(A)<Q(A)$. By Assumption \ref{assum:P}, we know that $P(S_1)$,$P(S_2), Q(S_1)$,$Q(S_3)>0$, where $\{S_1,S_2,S_3\}\subseteq \mathcal F$ is a partition of $\Omega$. Hence, we have $P(A\cap S_i)<Q(A\cap S_i)$ for some $i=1,2,3$. Define 
\begin{align*}
B_1=A\cap S_i,~B_2=S_{i_1}\cup (S_i\setminus A)~~{\rm and}~~B_3=S_{i_2},
\end{align*}
where $i_1,i_2\in\{1,2,3\}\setminus \{i\}$ and $i_1\neq i_2$, and we can assume that $P(B_2)>0$ and $Q(B_3)>0$. Indeed, for $i=1,2,3$, $(i_1,i_2)$ can be $(2,3)$, $(1,3)$ and $(2,1)$, respectively. Denote by $p_j=P(B_j)$ and $q_j=Q(B_j)$ for $j=1,2,3$. By construction, we have $p_1<q_1$, $p_2>0$ and $q_3>0$. 
Note that $\{B_1,B_2,B_3\}$ is a partition of $\Omega$. We define $X,Y\in\X$ as follows.
\begin{center}
	\begin{tabular}
		[c]{c|c|c|c}
		& $B_1$ & $B_2$ & $B_3$  \\\hline
		$X$ & $p_2q_3$ & $p_1q_3$ & $-2p_1p_2$ \\
		$Y$ & $-p_2q_3$ & $2q_1q_3$ & $-p_2q_1$ 
	\end{tabular}
\end{center}
One can check that  $\E^P[X_+]-\E^Q[X_-]=\E^P[Y_+]-\E^Q[Y_-]=0$ and 
\begin{align*}
	\E^P[(X+Y)_+]-\E^Q[(X+Y)_-]=(q_1-p_1)p_2q_3>0.
\end{align*}
Hence, we have $X,Y\in\mathcal A$ and $X+Y\notin \mathcal A$. This yields a contradiction, that completes the proof of the convexity case. The concavity case follows by  (iv) of Proposition \ref{prop-gextrivial} that gives $\ex^{P,Q}(X)=-\ex^{Q,P}(-X)$ that is concave if and only if $\ex^{Q,P}$ is convex, that happens if and only if $Q \geq P$.
\end{proof}

\subsection{Dual representation}\label{sec:DR}
The final result in this section is a dual representation of duet expectiles as a worst-case expectation that is the basis of the Gilboa-Schmeidler representation of duet expectiled utilities.  
\begin{proposition}\label{prop:dual}
Let $P,Q\in\mathcal M_1$ and assume that \ref{ax:SM} holds. If $\alpha P\le (1-\alpha) Q$, then the following dual representation holds: for each $X\in \X$,
 $$
\ex_\alpha^{P,Q}(X) = \inf_{R\in \mathcal R}\E^R[X],~\mbox{with~} \mathcal R=\left \{R\in \M_1: \alpha \esssup_Q \frac{\d R}{\d Q} \le (1-\alpha) \essinf_P \frac{\d R}{\d P}\right\},
 $$
 where $\d R/d P$ and $\d R/\d Q$ are Radon-Nikodym derivatives, and $\essinf_H$ and $\esssup_H$ represent the essential infimum and essential supremum under $H\in\mathcal M_1$, respectively. 
Further, it holds that
 $$\ex_\alpha^{P,Q}(X) = \frac{\alpha \E^P [X \id_{E_X} ] + (1-\alpha) \E^Q[X  \id_{D_X}]} {\alpha P(E_X) + (1-\alpha) Q(D_X)},
 $$
 where  $D_X:=\{X < \ex_\alpha^{P,Q}(X) \}$ and $E_X:=\{X \geq \ex_\alpha^{P,Q}(X) \}$. 
 \end{proposition}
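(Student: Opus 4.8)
\textbf{Proof plan for Proposition \ref{prop:dual}.}

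The plan is to establish the two formulas separately, starting with the explicit ``barycentre'' formula since it is essentially a rearrangement of the defining equation, and then deriving the dual (worst-case) representation from it together with a verification that the infimum is attained.

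\emph{Step 1: the barycentre formula.} Fix $X\in\X$ and write $y=\ex_\alpha^{P,Q}(X)$. By the defining equation \eqref{eq:def-ex}, and using that $\ex_\alpha^{P,Q}=\ex^{\alpha P,(1-\alpha)Q}$, we have $\alpha\E^P[(X-y)_+]=(1-\alpha)\E^Q[(y-X)_+]$. Now split the left-hand expectation over $E_X=\{X\ge y\}$ and the right-hand one over $D_X=\{X<y\}$, so that $(X-y)_+=(X-y)\id_{E_X}$ and $(y-X)_+=(y-X)\id_{D_X}$. Rearranging $\alpha\E^P[(X-y)\id_{E_X}]=(1-\alpha)\E^Q[(y-X)\id_{D_X}]$ and collecting the terms containing $y$ on one side gives
\begin{align*}
\alpha\E^P[X\id_{E_X}]+(1-\alpha)\E^Q[X\id_{D_X}] = y\bigl(\alpha P(E_X)+(1-\alpha)Q(D_X)\bigr).
\end{align*}
Under \ref{ax:SM} we have $P\ac Q\ac\p$, and one checks (using internality, Lemma \ref{lm-property}(i), or directly) that $P(E_X)>0$, so the denominator is strictly positive and we may divide to obtain the claimed formula.

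\emph{Step 2: the dual representation, easy inequality.} For any $R\in\mathcal R$, I would show $\E^R[X]\ge \ex_\alpha^{P,Q}(X)$. Write $y=\ex_\alpha^{P,Q}(X)$ and compute $\E^R[X-y]=\E^R[(X-y)_+]-\E^R[(X-y)_-]$. Bound the positive part below using $\d R/\d P\ge\essinf_P(\d R/\d P)=:c$ on $E_X$ and the negative part above using $\d R/\d Q\le\esssup_Q(\d R/\d Q)=:d$ on $D_X$, i.e. $\E^R[(X-y)_+]\ge c\,\E^P[(X-y)_+]$ and $\E^R[(X-y)_-]\le d\,\E^Q[(y-X)_+]$. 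The constraint $R\in\mathcal R$ says $\alpha d\le(1-\alpha)c$, and combining with the defining identity $\alpha\E^P[(X-y)_+]=(1-\alpha)\E^Q[(y-X)_+]$ yields $\E^R[X-y]\ge c\,\E^P[(X-y)_+]-d\,\E^Q[(y-X)_+]\ge 0$; here one uses that $(1-\alpha)\E^Q[(y-X)_+]=\alpha\E^P[(X-y)_+]$ to express everything in a single nonnegative quantity times $(1-\alpha)c-\alpha d\ge 0$.

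\emph{Step 3: attainment.} To close the representation I would exhibit $R^\star\in\mathcal R$ with $\E^{R^\star}[X]=y$. The natural candidate is the measure whose density with respect to $P$ is proportional to a constant $\propto(1-\alpha)$ on $E_X$ and whose density with respect to $Q$ is proportional to a constant $\propto\alpha$ on $D_X$; concretely, set $\d R^\star = K^{-1}\bigl((1-\alpha)\id_{E_X}\,\d P+\alpha\id_{D_X}\,\d Q\bigr)$ with normalizing constant $K=(1-\alpha)P(E_X)+\alpha Q(D_X)$. One checks $R^\star\in\M_1$, that its Radon--Nikodym derivatives against $P$ and $Q$ satisfy the constraint defining $\mathcal R$ with equality, and that $\E^{R^\star}[X]$ reproduces exactly the barycentre expression from Step 1 after accounting for the $\alpha$ versus $1-\alpha$ weighting — this is where one must be slightly careful to match the normalization, and it is the one genuinely fiddly bookkeeping step. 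Combining Steps 2 and 3 gives $\ex_\alpha^{P,Q}(X)=\inf_{R\in\mathcal R}\E^R[X]$.

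\emph{Main obstacle.} The conceptual content is light; the main obstacle is the bookkeeping in Step 3, namely getting the normalizing constants and the $\alpha/(1-\alpha)$ asymmetry to line up so that $R^\star$ simultaneously lies in $\mathcal R$ (with the $\esssup$/$\essinf$ constraint tight) and returns the value $y$. One should also be careful that $\mathcal R$ as stated is nonempty and that the essential suprema/infima behave well, which is guaranteed by $P\ac Q$ under \ref{ax:SM}; it is worth noting explicitly that the constraint in $\mathcal R$ forces $R\ll P$ and $R\ll Q$, so all the Radon--Nikodym derivatives appearing are well defined.
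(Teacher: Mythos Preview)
Your overall strategy matches the paper's: show $\E^R[X]\ge y$ for every $R\in\mathcal R$ (your Step 2 is essentially identical to the paper's computation), then exhibit an attaining measure. The paper derives the barycentre formula as a byproduct of evaluating $\E^{R_0}[X]$ for the optimizer rather than as a separate preliminary step, but your reordering is harmless and arguably cleaner.

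There is, however, a genuine slip in Step 3: your candidate $R^\star$ has the weights $\alpha$ and $1-\alpha$ interchanged. The correct attaining measure is
\[
\d R_0 \;=\; \frac{\alpha\,\id_{E_X}\,\d P + (1-\alpha)\,\id_{D_X}\,\d Q}{\alpha P(E_X)+(1-\alpha)Q(D_X)},
\]
with $\alpha$ on the elation side and $1-\alpha$ on the disappointment side, so that $\E^{R_0}[X]$ is literally the barycentre expression you derived in Step 1. With your weights, $\E^{R^\star}[X]=K^{-1}\bigl((1-\alpha)\E^P[X\id_{E_X}]+\alpha\E^Q[X\id_{D_X}]\bigr)$, which differs from $y$ whenever $\alpha\neq 1/2$ and $X$ is nondegenerate: using $\alpha\E^P[(X-y)_+]=(1-\alpha)\E^Q[(y-X)_+]$ one finds $\E^{R^\star}[X]-y=K^{-1}(1-2\alpha)\E^P[(X-y)_+]\cdot(1-\alpha)/\alpha$ in the solo case $P=Q$, which is strictly positive for $\alpha<1/2$. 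So your $R^\star$ is not the minimizer, and the discrepancy is not just a normalization constant --- the roles of $\alpha$ and $1-\alpha$ in the construction must be reversed. Once that swap is made, checking $R_0\in\mathcal R$ goes exactly as you outline, using $\alpha P\le(1-\alpha)Q$ to bound $\d R_0/\d Q\le(1-\alpha)/K'$ and $\d R_0/\d P\ge\alpha/K'$.
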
 
 
\begin{proof}[Proof of Proposition \ref{prop:dual}]
Assume that $\Omega$ is infinite, the finite case being similar. From Proposition \ref{prop-finSM}
 it follows that $P \ac Q \ac \mathbb{P}$. We first check that for each $ R \in \mathcal{R}$ and for each $X \in \mathcal{X}$ it holds that $\E^R[X] \geq \ex_\alpha^{P,Q}(X)$. From translation invariance of $\ex_\alpha^{P,Q}$ (see Proposition \ref{prop-gextrivial}), we can assume without loss of generality that $\ex_\alpha^{P,Q}(X) =0$. Then
\begin{align*}
 \E^R[X] &=\E^R[X_+]-\E^R[X_-] = \E^P \left [\frac{\d R}{\d P}X_+\right]-\E^Q \left [\frac{\d R}{\d Q} X_-\right ] \\
&\geq  
 \left (\essinf_P \frac{\d R}{\d P} \right)  \E^P[X_+] 
 - \left ( \esssup_Q \frac{\d R}{\d Q} \right)  \E^Q[X_-] \\
 &\geq 
\left ( \esssup_Q \frac{\d R}{\d Q} \right)\left ( \frac{\alpha \E^P[X_+]}{1-\alpha} - \E^Q[X_-] \right) =0,
\end{align*}
where the last equality follows from $\ex_\alpha^{P,Q}(X) =0$. Let now $R_0
$ be given by
$$
\frac{\d R_0}{\d\mathbb{P}} =  \frac{\alpha \frac{\d P}{\d \mathbb{P}}  \id_{\{X \geq z\}}+ (1-\alpha) \frac{\d Q}{\d\mathbb{P}} \id_{\{X < z \}} }
{\alpha P(X \geq z )+(1-\alpha) Q(X < z)},
$$
where $z:= \ex_\alpha^{P,Q}(X)$. Since $\frac{\d R_0}{\d\mathbb{P}} >0$, 
it follows that $R_0 \ac \mathbb{P}$, and 
\begin{align*}
\frac{\d R_0}{\d Q} &= \frac{\d R_0}{\d\mathbb{P}} \frac{\d\mathbb{P}}{\d Q} = \frac{\alpha\frac{\d P}{\d Q} \id_{\{X \geq z\}}+ (1-\alpha)  \id_{\{X < z \}}}{ \alpha P(X \geq z)+(1-\alpha) Q(X < z)} \leq \frac{1-\alpha}{{\alpha P(X \geq z)+(1-\alpha) Q(X < z)}}; \\
\frac{\d R_0}{\d P} &= \frac{\d R_0}{\d\mathbb{P}} \frac{\d\mathbb{P}}{\d P} =
 \frac{ \alpha  \id_{\{X \geq z \}}+ (1-\alpha)  \frac{\d Q}{\d P}   \id_{\{X < z \}}}{\alpha P(X \geq z)+(1-\alpha) Q(X < z)} \geq \frac{\alpha}{{\alpha P(X \geq z)+ (1-\alpha) Q(X < z)}},
\end{align*}
where the assumption $\alpha P \leq (1-\alpha) Q$ has been used. Since $Q \ac P$, it follows that
$$
\esssup_Q \frac{\d R_0}{\d Q} \le \essinf_P \frac{\d R_0}{\d P},
$$
which implies $R_0 \in \mathcal{R}$. Finally we show that $\E^{R_0}[X]=\ex^{P,Q}(X)$. Indeed,
\begin{align*}
\E^{R_0}[X] &= \frac{\alpha \E^P\left[X \id_{\{X \geq z \}}\right] + (1-\alpha) \E^Q\left[X \id_{\{X < z \}}\right] }{\alpha P(X \geq z)+(1-\alpha) Q(X < z)} \\
&= z + \frac{\alpha \E^P[(X - z)_+] + (1-\alpha) \E^Q[(X - z)_-] }{\alpha P(X>z)+ (1-\alpha) Q(X \leq z)} 
 = z.  
\end{align*}
\noindent
Under the condition $\alpha P \geq (1-\alpha)Q$ a similar reasoning gives a representation of $\ex_\alpha^{P,Q}$ as a supremum of expected values, with respect to the same set of probability measures. When $P=Q$ the dual representation of expectiles given in \cite{BKMR14} is obtained as a special case. 
\end{proof}

\section{Remaining proofs}\label{app:omproofs}

\subsection{Proofs of results in Section \ref{sec:solo}}\label{app:sec-4-pf}

\subsubsection{Proposition \ref{prop:classic}}

\begin{proof}[Proof of Proposition \ref{prop:classic}]
Sufficiency follows immediately from Theorem \ref{th:main-general}. 
The proof of necessity is similar to the proof of Theorem \ref{th:main-general}. For clarity, we give the complete proof below.
The notation $X\laweq Y$ means that $X$ and $Y$ has the same distribution under $\p$.
We collect some notation used in the proof of Theorem \ref{th:main-general}. Denote by $q_X$ the quantile function of $X\in \X$ under $\p$. Let $V$ be an act with uniform distribution on $(0,1)$. It holds that $X\laweq q_X(V)$ for all $X\in\X$. Define 
\begin{align*}
\Psi_{k}=\left\{  \left(  \frac{0}{2^{k}},\frac{1}{2^{k}}\right]  ,\left(
\frac{1}{2^{k}},\frac{2}{2^{k}}\right]  ,\dots,\left(  \frac{2^{k}-1}{2^{k}%
},\frac{2^{k}}{2^{k}}\right]  \right\},~~k\in\N.
\end{align*}
and 
\begin{align*}
\Pi_{k}=V^{-1}\left(\Psi_{k}\right),~~k\in\N.
\end{align*}
Let $\mathcal F_{k}=\sigma(  \Pi
_{k})  =V^{-1}\left(  \sigma\left(  \Psi_{k}\right)
\right)  $ for all $k\in\mathbb{N}$, and 
 $\left\{  \mathcal F
_{k}\right\}_{k\in\mathbb{N}}$ is a filtration in $\mathcal F$. By Theorem \ref{th-main},
there exist equivalent measures $P_k$ and $Q_k$ on $\mathcal F_k$ with $0<P_k\le Q_k$ such that the certainty equivalent of $\succsim$ is as follows:
\begin{align}\label{eq-rep1}
U = \ex^{P_k,Q_k} \mbox{ on $L^\infty(\Omega,\mathcal F_k,\p)$}.
\end{align}
We aim to verify that $\widetilde{P}_k=\widetilde{Q}_k=\p|_{\mathcal F_k}$, where $\p|_{\mathcal F_k}$ is the restriction of $\p$ on $\mathcal F_k$. 
Note that $\p(A_i)=1/2^k$ for all $i\in[2^k]$, where 
\begin{align*}
A_i=V^{-1}\left(\left(\frac{i-1}{2^k},\frac{i}{2^k}\right]\right),~~i=1,\dots,2^k.
\end{align*}
It suffices to show that $\widetilde{P}_k(A_i)=\widetilde{Q}_k(A_i)=1/2^k$ for all $i\in[2^k]$.
Since the distributions of $\id_{A_i}$ under $\p$ are identical for all $i\in[2^k]$, and the same result also holds for all $-\id_{A_i}$, the probabilistic sophistication implies that there exist $c,c'\in\R$ such that 
\begin{align*}
U(\id_{A_i})=c~~{\rm and }~~U(-\id_{A_i})=c',~\mbox{for every }i\in[2^k].
\end{align*}
Combining with \eqref{eq-rep1} and noting that $A_i\in\mathcal F_k$ for all $i\in[2^k]$, straightforward calculation yields
\begin{align*}
U(\id_{A_i})=\ex^{P_k,Q_k}(\id_{A_i})=\frac{P_k(A_i)}{P_k(A_i)+Q_k(A_i^c)}=c,~~i\in[2^k]
\end{align*}
and
\begin{align*}
U(-\id_{A_i})= \ex^{P_k,Q_k}(-\id_{A_i})=-\frac{Q_k(A_i)}{P_k(A_i^c)+Q_k(A_i)}=c',~~i\in[2^k].
\end{align*}
This implies $Q_k(A_i^c)/P_k(A_i)=(1-c)/c$ and $P_k(A_i^c)/Q_k(A_i)=-(1+c')/c'$ for all $i\in[2^k]$.
Hence, we have
\begin{align*}
&\frac{Q_k(A_i^c)}{P_k(A_i)}=\frac{\textstyle\sum_{j=1}^{2^k}Q_k(A_j^c)}{\textstyle\sum_{j=1}^{2^k}P_k(A_j)}=\frac{(2^k-1)Q_k(\Omega)}{P(\Omega)},~~i\in[2^k];\\
&\frac{P_k(A_i^c)}{Q_k(A_i)}=\frac{\textstyle\sum_{j=1}^{2^k}P_k(A_j^c)}{\textstyle\sum_{j=1}^{2^k}Q_k(A_j)}=\frac{(2^k-1)P_k(\Omega)}{Q(\Omega)},~~i\in[2^k].
\end{align*}
This yields
\begin{align*}
\widetilde{P}_k(A_i)=\frac{1-\widetilde{Q}_k(A_i)}{2^k-1}~~{\rm and}~~\widetilde{Q}_k(A_i)=\frac{1-\widetilde{P}_k(A_i)}{2^k-1},~~i\in[2^k].
\end{align*}
Hence, we conclude that $\widetilde{P}_k(A_i)=\widetilde{Q}_k(A_i)=1/2^k$ for all $i\in[2^k]$, and this implies $\widetilde{P}_k=\widetilde{Q}_k=\p|_{\mathcal F_k}$. Therefore, \eqref{eq-rep1} can be reformulated as
$U = \ex^{\p}_{\alpha_k}$ on $L^\infty(\Omega,\mathcal F_k,\p)$, where $\alpha_k\in(0,1/2]$ for all $k\in\N$.
Note that $\left\{ \mathcal F
_{k}\right\}_{k\in\mathbb{N}}$ is a filtration. The uniqueness result in Lemma \ref{prop-uniquedef} implies that $\alpha_k$ is a constant, denoted by $\alpha\in(0,1/2]$, for all $k\in\N$, and hence, we have
$$
U= \ex_\alpha^{\p}~~{\rm on}~\bigcup_{k\in \N}L^\infty (\Omega,\mathcal F_k,\p).
$$ 
Next, following similar arguments as in step 3 of the proof of Theorem \ref{th:main-general}, we can directly verify that $U(X)=\ex_\alpha^{\p}(X)$ for all $X\in\X$ that satisfies $X=q_X(V)$. Hence, for any  $X\in\X$
\begin{align*}
U(X)=U(q_X(V))=\ex_\alpha^{\p}(q_X(V))=\ex_\alpha^{\p}(X),
\end{align*}
where  we used probabilistic sophistication twice. This completes the proof.
\end{proof}

\begin{remark}
In the proof of Proposition \ref{prop:classic} it is actually not necessary to assume that $(\Omega,\mathcal F,\p)$ is standard, the only requirement needed is that it is atomless.
\end{remark}

\subsubsection{Theorem \ref{th:p-ex}}\label{app:th3}

\begin{proof}[Proof of Proposition \ref{prop-id}]
For any $A \in \mathcal{F}$, let $z_A^{P,Q}= \ex^{P,Q}(\id_A)$. By a straightforward calculation,
$$
z_A^{P,Q}=\frac{P(A)}{P(A)+Q(A^c)}. 
$$
To prove the if part, notice that if $P = \lambda Q$ then for each disjoint $A,B,C \in \mathcal{F}$
$$
z_A^{P,Q} \geq z_B^{P,Q} \iff P(A) \geq P(B) \iff  z_{A \cup C}^{P,Q} \geq z_{B \cup C}^{P,Q} 
$$
that shows that axiom \ref{ax:EI} holds. 

To prove the only if part, notice first that it can be assumed without loss of generality that $P,Q\in\mathcal M_1$. Indeed, recalling that $\widetilde P=P/P(\Omega)$ for any $P\in\mathcal M$, it holds that
\begin{align*}
z_A^{P,Q} \geq z_B^{P,Q} &\iff P(A)Q(B^c) \geq P(B)Q(A^c)\\ 
&\iff \widetilde{P}(A)\widetilde{Q}(B^c) \geq \widetilde{P}(B)\widetilde{Q}(A^c) \iff z_A^{\widetilde{P},\widetilde{Q}} \geq z_B^{\widetilde{P},\widetilde{Q}}
\end{align*}
and similarly 
$$
z_{A \cup C}^{P,Q} \geq z_{B \cup C}^{P,Q} \iff z_{A \cup C}^{\widetilde{P},\widetilde{Q}} \geq z_{B \cup C}^{\widetilde{P},\widetilde{Q}},
$$
so axiom \ref{ax:EI} holds for $(P,Q)$ if and only if it holds for $(\widetilde{P}, \widetilde{Q})$. Further, $P$ and $Q$ are proportional if and only if $\widetilde{P}$ and $\widetilde{Q}$ are equal. 

In order to prove by contradiction the only if part, it thus suffices to show that if $\widetilde{P} \neq \widetilde{Q}$, then axiom \ref{ax:EI} is violated. Let $\varphi = \d \widetilde{Q}/\d \widetilde{P}$. Assuming by contradiction that $\widetilde{P} \neq \widetilde{Q}$ and recalling that $\widetilde{P}(\Omega)=\widetilde{Q}(\Omega)=1$ and that $P \ac Q \ac \mathbb{P}$, it follows that there exists $D_1$ and $D_2$ with $\mathbb{P}(D_1)>0$, $\mathbb{P}(D_2)>0$ and $D_1 \cap D_2 = \varnothing$ such that $\varphi >1$ on $D_1$ and $\varphi <1$ on $D_2$. Since $\mathbb{P}$ is atomless, for each $\epsilon >0$ sufficiently small we can find $A \subseteq D_1$ and $B \subseteq D_2$ such that
$\widetilde{P}(A) = \epsilon$, $\widetilde{Q}(A)=\epsilon \lambda_1 (\epsilon)$, $\widetilde{P}(B)=\epsilon \lambda_2 (\epsilon)$, $\widetilde{Q}(B)=\epsilon$, with $1+c < \lambda_1(\epsilon) \leq \widetilde{Q}(D_1)/\widetilde{P}(D_1)$ and $1+ c < \lambda_2(\epsilon) \leq \widetilde{P}(D_2)/\widetilde{Q}(D_2)$, for some fixed $c>0$. 
Letting $C= (A \cup B)^c$, by a direct computation we get
$$
z_{B}^{P,Q} > z_{A}^{P,Q} \iff \epsilon \cdot \frac{\lambda_1(\epsilon) \lambda_2(\epsilon)-1}{\lambda_2(\epsilon) -1} < 1
$$
and
$$
z_{A \cup C}^{P,Q} > z_{B \cup C}^{P,Q} \iff \epsilon \cdot \frac{\lambda_1(\epsilon) \lambda_2(\epsilon)-1}{\lambda_1(\epsilon) -1} < 1.
$$
This shows that by choosing $\epsilon$ sufficiently small axiom \ref{ax:EI} is violated. 
\end{proof}

Theorem \ref{th:p-ex} follows immediately from Theorem \ref{th:main-general} and Proposition \ref{prop-id}.

\subsection{Proofs of results in Section \ref{sec:discuss}}\label{app:sec5}
\begin{proof}[Proof of Theorem \ref{th:riskaverse1}]

Let $R=\alpha P$ and $H=(1-\alpha) Q$.
We first prove that (iv)  implies   (i) and (ii) using the following property: For $x=\ex^{R,H}(X)$, we have
\begin{align}\label{eq-WRA2} 
\E^R[X] - R(\Omega) x &= \E^{R} [X-x]\notag\\
&= \E^R[(X-x)_+] - \E^R[(x-X)_+]  =  \E^{H-R} [(x-X)_+],
\end{align}
 where the last step follows from $\E^R[(X-x)_+]=\E^H[(x-X)_+]$,
and hence $x \le \E^{P}[X]$  if $H\ge R$. Similarly, 	
\begin{align}\label{eq-WRA1}
	\E^H[X] -H(\Omega) x  &=  \E^H[X-x] \notag\\
 &= \E^H[(X-x)_+] - \E^H[(x-X)_+] = \E^{H-R} [(X-x)_+],
\end{align}
 and hence $x \le \E^{Q}[X]$ if $H\ge R$. The above arguments show that (iv) implies both (i) and (ii). 
	
Next, we show that (i) implies (iv).  Assume by contradiction that $R(A)>H(A)$ for some $A\in\mathcal F$ and $\succsim$ is weakly risk averse under $P$.
Note that we can always take $A$ such that $A\ne \Omega$  
if $\Omega$ is finite and 
 $\p(A)\in (0,1)$  if  $\X=L^\infty(\Omega,\mathcal F,\p)$.
  By Axiom \ref{ax:SM},
$1 \succ \id_A \succ 0$ 
and hence 
 $1>\ex^{R,H}(\id_A)>0$. Then by \eqref{eq-WRA2} we have $\E^R[\id_A] < R(\Omega) \ex^{R,H}(\id_A)$, conflicting weakly risk aversion.
A similar argument using \eqref{eq-WRA1} shows that  (ii) implies (iv).

Finally, we show that (iii) is equivalent to (iv).
By Proposition \ref{prop-cv} in Section \ref{sec:properties}, under Axiom \ref{ax:SM}  concavity of $\ex^{R,H}$ is equivalent to $R\le H$.   
It suffices to verify that quasi-concavity of $\ex^{R,H}$ implies concavity of $\ex^{R,H}$. 
It follows from Proposition \ref{prop-gextrivial} in Section \ref{sec:properties} that $\ex^{R,H}$ is translation invariant, i.e., $\ex^{R,H}(Z+c)=\ex^{R,H}(Z)+c$ for all $Z\in\X$ and $c\in\R$.
This is sufficient for our desired implication (e.g., Proposition 2.1 of \cite{CMMM11}). 
For a self-contained proof of this claim, let $m=\ex^{R,H}(X)-\ex^{R,H}(Y)$ for $X,Y\in\mathcal X$. For $\lambda\in[0,1]$, we have
\begin{align*}
 \ex^{R,H}(\lambda X+(1-\lambda)Y)
 &=\ex^{R,H}(\lambda X+(1-\lambda)(Y+m))-(1-\lambda)m\\
& \ge \ex^{R,H}(X)-(1-\lambda)m\\
&=\lambda\ex^{R,H}(X)+(1-\lambda)\ex^{R,H}(Y),
\end{align*}
where we have used quasi-concavity in the second step since  $\ex^{R,H}(X)=\ex^{R,H}(Y+m)$. Hence, we complete the proof.
\end{proof}

\begin{proof}[Proof of Proposition \ref{prop-jointRA}]
Let $R=\alpha P$, $H=(1-\alpha)Q$ and $T=H-R$. Since  $R\le H$ and $R\neq H$, we have $T\in\mathcal M$.
Note that
\begin{align*}
\ex^{R,H}(X)=\inf\{x\in\R: \E^R[X]-R(\Omega)x\le \E^{T}[(x-X)_+]\},~~X\in\mathcal X.
\end{align*}
For $X,Y\in\mathcal X$,
suppose that $X\ge_{\rm ssd}^P Y$ and $X\ge_{\rm ssd}^{\widetilde T} Y$, and we aim to show that $\ex^{R,H}(X)\ge \ex^{R,H}(Y)$. It is clear that $\E^R[X]\ge \E^R[Y]$. Since $t\mapsto -(x-t)_+$ is an increasing and concave mapping for all  $x\in\R$, we have $\E^{T}[(x-X)_+]\le \E^{T}[(x-Y)_+]$. Therefore,
\begin{align*}
\{x: \E^R[X]-R(\Omega)x\le \E^{T}[(x-X)_+]\}\subseteq \{x: \E^R[Y]-R(\Omega)x\le \E^{T}[(x-Y)_+]\}.
\end{align*}
This implies $\ex^{R,H}(X)\ge \ex^{R,H}(Y)$, showing the desired joint strong risk aversion. 
\end{proof}

\subsection{Proofs of results in Sections \ref{sec:AA}}
\label{sec:proof-AA}
 
\subsubsection{Theorem \ref{thm:diesel}}

\begin{proof}[Proof of Theorem \ref{thm:diesel}]
(ii) $\Rightarrow$ (i). Without loss of generality, we assume that $0\in u\left(  \C\right)^{\circ}$; otherwise a constant shift would do, noting that $u$ is cardinally unique. 
Axioms \ref{ax:AA-SM}, \ref{ax:AA-I}, \ref{ax:AA-C} and \ref{ax:AA-ND}
are trivial. Let us consider Axiom \ref{ax:AA-DA}. We denote by $\succsim'$ the preference on $B_0(u(\mathcal C))$ that has $\ex_\alpha^{P,Q}$ as certainty equivalent and as before we say that 
$X,Y\in B_0(u(\mathcal C))$ are disco if $\{\omega\in \Omega: X(\omega)\prec' X\}=\{\omega\in \Omega: Y(\omega)\prec' Y\}$.
Suppose that $f,g,h\in B_{0}\left(  \C\right)$ satisfies that $f$ and $g$ are disco, and $g\sim h$. 
Let $X=u(f)$, $Y=u(g)$ and $Z=u(h)$. Because $f$ and $g$ are disco, we have
\begin{align*}
\{\omega\in \Omega: X(\omega)<\ex_\alpha^{P,Q}(X)\}=\{\omega\in \Omega: Y(\omega)<\ex_\alpha^{P,Q}(Y)\}.
\end{align*}
Moreover, $g\sim h$ implies $\ex_\alpha^{P,Q}(Y)=\ex_\alpha^{P,Q}(Z)$. Note that the mapping $\ex_\alpha^{P,Q}: B_0(\R)\to\R$ is positive homogeneous (see Proposition \ref{prop-gextrivial}) and $\lambda X,(1-\lambda)Y,(1-\lambda)Z\in B_0(\mathcal C)$ for any $\lambda\in(0,1)$. It holds that $(1-\lambda)Y\sim' (1-\lambda)Z$ and for all $\lambda\in(0,1)$
\begin{align*}
\{\omega\in \Omega: \lambda X(\omega)<\ex_\alpha^{P,Q}(\lambda X)\}=\{\omega\in \Omega: (1-\lambda)Y(\omega)<\ex_\alpha^{P,Q}((1-\lambda)Y)\},
\end{align*}
which implies that $\lambda X$ and $(1-\lambda)Y$ are disco.
It follows from Theorem \ref{th-main} that $\succsim'$ satisfies Axiom \ref{ax:DA}, and hence, 
\begin{align*}
\lambda X+(1-\lambda)Z\succsim' \lambda X+(1-\lambda)Y,~~\forall \lambda\in(0,1).
\end{align*}
This is equivalent to
\begin{align*}
\ex_\alpha^{P,Q}(u(\lambda f+ (1-\lambda)h))&=\ex_\alpha^{P,Q}(\lambda u(f)+(1-\lambda)u(h))\\
&\ge \ex_\alpha^{P,Q}(\lambda u(f)+(1-\lambda)u(g))\\
&=\ex_\alpha^{P,Q}(u(\lambda f+ (1-\lambda)g)),~~\forall \lambda\in(0,1).
\end{align*}
This gives $\lambda f+ (1-\lambda)h\succsim \lambda f+ (1-\lambda)g$ for all $\lambda\in(0,1)$. We have completed this implication.

(i) $\Rightarrow$ (ii). It follows from Lemma \ref{lm-CGMMS11} that there exists an affine function
$u:\C\rightarrow\mathbb{R}$ and a monotone, continuous, and normalized function
$I:B_{0}\left(  u\left(  \C\right)  \right)  \rightarrow\mathbb{R}$ such that,
given any $f,g\in B_{0}\left(  \C\right)$,
\begin{align}\label{eq-AAlm1}
f\succsim g\iff I\left(  u\left(  f\right)  \right)  \geq I\left(  u\left(
g\right)  \right).
\end{align}
Axiom \ref{ax:AA-ND} implies that $u$ is non-constant. Without loss of generality, we can assume $0\in u\left(  \C\right)  ^{\circ}$.
First, we aim to verify that $I$ is strictly monotone and positively homogeneous on $B_0(u(\mathcal C))$. For $X,Y\in B_0(u(\mathcal C))$ with $X\ge Y$ and $X\neq Y$, there exists $f,g\in B_0(\mathcal C)$ such that $X=u(f)$ and $Y=u(g)$. It is obvious that $I(u\circ f (\omega))=X(\omega)\ge Y(\omega)=I(u\circ g(\omega))$ for all $\omega\in\Omega$, which implies $f(\omega)\succsim g(\omega)$ for all $\omega\in \Omega$, and there exists $\omega'$ such that $u\circ f (\omega')>u\circ g (\omega')$, which means that $f(\omega')\succ g(\omega')$.
By Axiom \ref{ax:AA-SM}, we have $f\succ g$, and hence, 
\begin{align*}
I(X)=I(u\circ f)>I(u\circ g)=I(Y).
\end{align*}
This gives the strict monotonicity of $I$ on $B_0(u(\mathcal C))$. To see positive homogeneity of $I$, denote by $\succsim'$ the preference on $B_0(u(\mathcal C))$ that has $I$ as certainty equivalent and define disco for $\succsim'$ as before.
Combining with \eqref{eq-AAlm1},
it is straightforward to check that $\succsim$ satisfies Axiom \ref{ax:AA-DA} if and only if the following statement holds:
\begin{itemize}
\item[(a)] If $X,X',Y\in B_0(u(\mathcal C))$ are such that $X$ and $Y$ are disco, then
\begin{align*}
X'\sim' X~\Longrightarrow~\lambda X'+(1-\lambda)Y\succsim' \lambda X+(1-\lambda)Y,~~\forall \lambda\in(0,1).
\end{align*}
\end{itemize}
Below we will directly use Statement (a).
Let $X\in B_0(u(\mathcal C))$ and $Y=0$. It holds that $I(X)$ and $Y$ are disco because their disappointment event are both empty set. Using Statement (a), we have
\begin{align*}
\lambda X+(1-\lambda)Y\succsim' \lambda I(X)+(1-\lambda)Y,~~\forall \lambda\in(0,1),
\end{align*}
and hence, $I(\lambda X)\ge \lambda I(X)$ for all $\lambda\in(0,1)$. On the other hand, for $X\in B_0(u(\mathcal C))$, denote by $A=\{\omega\in\Omega: X(\omega)\prec' X\}$ and define $Y_\epsilon=\epsilon \id_{A^c}$ for $\epsilon>0$. Since $I$ is strictly monotone, it is straightforward to check that $I(Y_\epsilon)\in(0,\epsilon)$, which implies
\begin{align*}
\{\omega\in \Omega:Y_\epsilon(\omega)\prec' Y_\epsilon\}=\{\omega\in \Omega:Y_\epsilon(\omega)< I(Y_\epsilon)\}=A.
\end{align*}
Therefore, $X$ and $Y_\epsilon$ are disco, and using Statement (a) yields
\begin{align*}
\lambda I(X)+(1-\lambda)Y_\epsilon\succsim'\lambda X+(1-\lambda)Y_\epsilon,~~\forall \lambda\in(0,1).
\end{align*}
Letting $\epsilon \downarrow 0$ and noting that $I$ is continuous, we have $\lambda I(X)\ge I(\lambda X)$ for all $\lambda\in(0,1)$. Therefore, we have concluded that $I(\lambda X)=\lambda I(X)$ for all $X\in B_0(u(\mathcal C))$ and $\lambda\in[0,1]$. Suppose that $\lambda>1$ and $ X\in B_0(u(\mathcal C))$ with $\lambda X\in B_0(u(\mathcal C))$. It holds that 
\begin{align*}
I(X)=I\left(\frac{1}{\lambda}\cdot(\lambda X)\right)=\frac{1}{\lambda}I(\lambda X).
\end{align*}
This gives the positive homogeneity of $I$ on $B_0(u(\mathcal C))$.

Next, we extend $I$ to $B_0(\R)$ in the following way:
\begin{align*}
\widetilde{I}(X):=\frac{1}{\lambda_X}I(\lambda_X X)~~{\rm with}~\lambda_X:=\sup\{\lambda\ge 0:\lambda X\in B_0(u(\mathcal C))\},~~X\in\mathcal B_0(\R).
\end{align*}
In fact, for $X\in B_0(\R)$ and any $\lambda>0$ with $\lambda X\in B_0(u(\mathcal C))$, we have $I(\lambda X)/\lambda=I(\lambda_X X)/\lambda_X$.
It is clear that $\widetilde{I}(X)=I(X)$ for all $X\in B_0(u(\mathcal C))$ and $\widetilde{I}(m)=m$ for all $m\in\R$ because $I$ is positively homogeneous on $B_0(u(\mathcal C))$. We also extend $\succsim'$ to $B_0(\R)$ with $\widetilde{I}$ as its certainty equivalent. Our aim is to verify that $\succsim'$ satisfies Axioms \ref{ax:SM}, \ref{ax:C} and \ref{ax:DA}. For $X,Y\in B_0(\R)$ with $X\ge Y$ and $X\neq Y$, let $\lambda>0$ be such that $\lambda X, \lambda Y\in B_0(u(\mathcal C))$. It holds that 
\begin{align*}
\widetilde{I}(X)=\frac{1}{\lambda}I(\lambda X)>\frac{1}{\lambda}I(\lambda Y)=\widetilde{I}(Y),
\end{align*}
where the inequality follows from the strict monotonicity of $I$. Hence, $\succsim'$ satisfies Axiom \ref{ax:SM}. To see Axiom \ref{ax:C}, it suffices to verify the continuity of $\widetilde{I}$. Let $\{X_n\}_{n\in\N}\subseteq B_0(\R)$ be a sequence such that $X_n\to X$ with $X\in B_0(\R)$. Then, there exists $\lambda>0$ such that $\lambda X,\lambda X_n\in B_0(u(\mathcal C))$ for all $n\in\N$, and hence,
\begin{align*}
\widetilde{I}(X_n)=\frac{1}{\lambda}I(\lambda X_n)\to \frac{1}{\lambda}I(\lambda X)=\widetilde{I}(X),
\end{align*}
where the convergence follows from the continuity of $I$. This gives Axiom \ref{ax:C}. Let us now consider Axiom \ref{ax:DA}. Suppose that $X,X',Y\in B_0(\R)$ satisfy $X\sim' X'$ and $X,Y$ are disco. Let $\lambda>0$ be such that $\lambda X,\lambda X',\lambda Y\in B_0(u(\mathcal C))$. It holds that
\begin{align*}
I(\lambda X)=\lambda \left(\frac{1}{\lambda}I(\lambda X)\right)
=\lambda \widetilde I(X)=\lambda \widetilde I(Y)=\lambda \left(\frac{1}{\lambda}I(\lambda Y)\right)=I(\lambda Y)
\end{align*}
and 
\begin{align*}
\{\omega\in \Omega: \lambda X(\omega)\prec' \lambda X\}&=\{\omega\in \Omega: \lambda X(\omega)<I(\lambda X)\}=\{\omega\in \Omega:  X(\omega)< \widetilde{I}(X)\}\\
&=\{\omega\in \Omega:  Y(\omega)< \widetilde{I}(Y)\}=\{\omega\in \Omega: \lambda Y(\omega)<I(\lambda Y)\}\\
&=\{\omega\in \Omega: \lambda Y(\omega)\prec' \lambda Y\}.
\end{align*}
Therefore, we have that $\lambda X\sim' \lambda X'$ and $\lambda X,\lambda Y$ are disco. Using Statement (a) yields 
\begin{align*}
\frac{1}{2}(\lambda X'+\lambda Y)\succsim' \frac{1}{2}(\lambda X+\lambda Y),
\end{align*}
and this is equivalent to
\begin{align*}
\frac{2}{\lambda}I\left(\frac{1}{2}(\lambda X'+\lambda Y)\right)\ge \frac{2}{\lambda}I\left(\frac{1}{2}(\lambda X+\lambda Y)\right),
\end{align*}
that is, $\widetilde I(X'+Y)\ge \widetilde I(X+Y)$. Hence, Axiom \ref{ax:DA} holds for $\succsim'$.

Finally, applying Theorem \ref{th-main}, we know that $\widetilde I=\ex_\alpha^{P,Q}$, where $P,Q$ are stricly positive probability measures and $\alpha\in\left(0,1/2\right]$ with $0<\alpha P\leq(1-\alpha)Q$. The uniqueness of the triplet $(P,Q,\alpha)$ follows from Lemma \ref{prop-uniquedef}. By Proposition \ref{prop-gextrivial}, we know that $\ex_\alpha^{P,Q}$ is translation-invariant and positively homogeneous, and this implies that $u$ is cardinally unique.
Hence, we complete the proof.
\end{proof}

\subsubsection{Theorem \ref{thm:AASPS}}

\begin{proof}[Proof of Theorem \ref{thm:AASPS}]
(ii) $\Rightarrow$ (i). Axioms \ref{ax:AA-SM}, \ref{ax:AA-I}, \ref{ax:AA-C}. \ref{ax:AA-ND} are trivial. 
We can share a similar proof to Theorem \ref{thm:diesel} to verify Axiom \ref{ax:AA-DA}. Let us focus on Axiom \ref{ax:AA-MC}. 
Suppose that $x,y,z\in\mathcal C$ satisfy $x\succ y$, and $\{A_n\}_{n\in\N}\subseteq \mathcal F$ satisfies $A_1\supseteq A_2\supseteq\dots$ and $\bigcap_{n\in\N}A_n=\varnothing$. 
It is clear that $u(x)>u(y)$ and we assume without loss of generality that $u(x)>0$. By standard calculation and noting that $P,Q$ are probability measures, we have 
\begin{align*}
\ex_\alpha^{P,Q}(u(z\id_{A_{n}}+x\id_{A_{n}^c}))
&=\ex_\alpha^{P,Q}(u(z)\id_{A_{n}}+u(x)\id_{A_{n}^c})\\
&=\frac{\alpha u(x)P(A_n^c)+(1-\alpha)u(z)Q(A_n)}{\alpha P(A_n^c)+(1-\alpha)Q(A_n)}\to u(x)~~{\rm as}~n\to\infty.
\end{align*}
This implies that $z\id_{A_{n}}+x\id_{A_{n}^c}\succ y$ for large enough $n$, and Axiom \ref{ax:AA-MC} holds.


(i) $\Rightarrow$ (ii). By Lemma \ref{lm-CGMMS11}, there exist an affine function
$u:\C\rightarrow\mathbb{R}$ and a monotone, continuous, and normalized function
$I:B_{0}\left(  u\left(  \C\right)  \right)  \rightarrow\mathbb{R}$ such that,
given any $f,g\in B_{0}\left(  \C\right)$,
\begin{align}\label{eq-AAlm}
f\succsim g\iff I\left(  u\left(  f\right)  \right)  \geq I\left(  u\left(
g\right)  \right).
\end{align}
Axiom \ref{ax:AA-ND} implies that $u$ is non-constant. Without loss of generality, we can assume $0\in u\left(  \C\right)  ^{\circ}$.

First, following a similar arguments to Theorem \ref{thm:diesel}, 
we can construct a mapping $\widetilde{I}$ on $B_0(\R)$, which is an extension of $I$ as follows:
\begin{align*}
\widetilde{I}(X):=\frac{1}{\lambda_X}I(\lambda_X X)~~{\rm with}~\lambda_X:=\sup\{\lambda\ge 0:\lambda X\in B_0(u(\mathcal C))\},~~X\in\mathcal B_0(\R).
\end{align*}
Moreover, the following properties holds:
\begin{itemize}
\item[]
For $X\in B_0(\R)$ and any $\lambda>0$ with $\lambda X\in B_0(u(\mathcal C))$, we have $I(\lambda X)/\lambda=I(\lambda_X X)/\lambda_X$; 

\item[]
The preference, denoted by $\succsim'$, on $B_0(\R)$ with  $\widetilde{I}$ as its certainty equivalent satisfies Axioms \ref{ax:SM}, \ref{ax:C} and \ref{ax:DA}. 
\end{itemize}
Next, we aim to verify that $\succsim'$ satisfies Axiom \ref{ax:MC}. Let $\{A_n\}_{n\in\N}\subseteq \mathcal F$ be a sequence such that
 $A_1\supseteq A_2\supseteq\dots$ and $\bigcap_{n\in\N}A_n=\varnothing$.
By Axiom \ref{ax:AA-MC}, it is straightforward to check that for all $a,b,c\in u(\mathcal C)$ with $a>b$, there exists $n_0\in\N$ such that $c\id_{A_{n_0}}+a\id_{A_{n_0}^c}\succ' b$. For any $m\in\R$, there exists $\lambda> 0$ such that $\lambda m,\lambda\in u(\mathcal C)$. 
Hence,
$
\lambda m\id_{A_{n_0}}+\lambda\id_{A_{n_0}^c}\succ' 0
$
for some $n_0\in\N$. Thus, we have
\begin{align*}
\lambda\widetilde{I}( m\id_{A_{n_0}}+\id_{A_{n_0}^c})
&=\lambda\left(\frac{1}{\lambda}I(\lambda m\id_{A_{n_0}}+\lambda\id_{A_{n_0}^c})\right)\\
&=I(\lambda m\id_{A_{n_0}}+\lambda\id_{A_{n_0}^c})
=\widetilde{I}(\lambda m\id_{A_{n_0}}+\lambda\id_{A_{n_0}^c})
> \widetilde{I}(0)
=0.
\end{align*}
This implies $m\id_{A_{n_0}}+\id_{A_{n_0}^c}\succ' 0$, and we have verified Axiom \ref{ax:MC} for $\succsim'$.  Hence, we have concluded that $\succsim'$ satisfies Axioms \ref{ax:SM}, \ref{ax:C}, \ref{ax:DA} and \ref{ax:MC} on $B_0(\R)$.
Finally, applying Corollary \ref{cor:thm2} yields the desired result.
\end{proof}

\section{Counter-examples}
\label{app:A}
In this appendix, we present some counter-examples to justify some claims in the paper and illustrate several technical conditions. 
\subsection{Necessity of dimension larger than 3 in Theorem \ref{th-main}}

In this section, we aim to give a counter-example of the preference relation $\succsim$, which satisfies Axioms \ref{ax:SM}, \ref{ax:C} and \ref{ax:DA} and can not be represented by a duet expectile, in the case $\Omega=[n]$ with $n=3$.
We present the following example of a functional $U$, and in Proposition \ref{prop:cen4} we show that a preference relation $\succsim$ represented by $U$ is a desirable counter-example.

\begin{example}
Define the subsets of $\mathbb{R}^3$:
$$
A_1=\{x\ge y\ge z\},~~A_2=\{x\ge z\ge y,~x+2y\ge 3z\},~~A_3=\{y\ge x\ge z,~y+z\le 2x\},
$$
$$
B_1=\{z\ge x\ge y\},~~B_2=\{x\ge z\ge y,~x+2y\le 3z\},~~B_3=\{z\ge y\ge x,~x+z\ge 2y\},
$$
$$
C_1=\{y\ge z\ge x\},~~C_2=\{z\ge y\ge x,~x+z\le 2y\},~~C_3=\{y\ge x\ge z,~y+z\ge 2x\}.
$$
Let $A=\bigcup_{i=1}^3A_i$, $B=\bigcup_{i=1}^3B_i$ and $C=\bigcup_{i=1}^3C_i$. It can be easily checked that $\R^3=A\cup B\cup C$.
Define a function $f:\R^3\to \R$ as follows:
\begin{align}\label{eq-cefun}
	f(x,y,z)=\begin{cases}
		f_A(x,y,z)=\frac{x+2y+2z}{5},~~& (x,y,z)\in A,\\
		f_B(x,y,z)=\frac{x+2y+z}{4},~~& (x,y,z)\in B,\\
		f_C(x,y,z)=\frac{x+y+z}{3},~~& (x,y,z)\in C.
	\end{cases}
\end{align}
A direct verification shows that $f_A=f_B$ on $A\cap B$, $f_A=f_C$ on $A\cap C$, $f_B=f_C$ on $B\cap C$ and $f_A=f_B=f_C$ on $A\cap B\cap C$.
This implies that $f$ is well defined on $\R^3$, and it is continuous and strictly increasing. Define $U:\mathcal X\to \R$ as $U(X)=f(X(1),X(2),X(3))$. 
\end{example}


\begin{proposition}\label{prop:cen4}
The following two statements hold.
\begin{itemize}
	\item[(i)] $U$ can not be expressed as a duet expectile.
	\item[(ii)] The preference relation represented by $U$ satisfies Axioms \ref{ax:SM}, \ref{ax:C} and \ref{ax:DA}. 
\end{itemize}
\end{proposition}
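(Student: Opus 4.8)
The plan is to verify the two statements separately, starting with (ii) since the axioms are the substantive requirements and (i) then follows by invoking the uniqueness part of Theorem~\ref{th-main} (or by a direct comparison of certainty equivalents). For (ii), note first that $U$ is continuous and strictly increasing by construction, so Axioms~\ref{ax:SM} and~\ref{ax:C} are immediate; moreover $U(c)=c$ for every constant $c$ since $f_A,f_B,f_C$ are all weighted averages summing to one, so the certainty equivalent interpretation is legitimate. The real work is Axiom~\ref{ax:DA}. Here I would exploit the fact that $f$ is \emph{positively homogeneous} and \emph{translation invariant} (each piece $f_A,f_B,f_C$ is linear with unit-sum coefficients, and the regions $A,B,C$ are cones invariant under adding a constant vector), which reduces disco aversion to a superadditivity-type statement: following the logic of the sufficiency direction of Theorem~\ref{th-main}, it suffices to show that whenever $X_1\sim X_2\sim X_3\sim 0$ with $X_1,X_2$ disco, one has $U(X_1+X_2)\ge 0 = U(X_1)+U(X_3)\le U(X_1+X_3)$, i.e.\ it suffices to check (a) that $U$ is superadditive and (b) that $U$ is additive on pairs of disco acts summing the disappointment events consistently.

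For the superadditivity and disco-additivity claims, the key observation is that on $\Omega=\{1,2,3\}$ the disappointment event of an act $X$ with $U(X)=0$ is determined by which coordinates of $X$ are negative, and there are only finitely many such patterns. I would organize the argument by the disappointment event $S\subseteq\{1,2,3\}$: for each fixed $S$, the acts in $\mathcal X_S$ all lie (after the translation normalization $U(X)=0$) in a region where $f$ is given by a single linear piece $f_A$, $f_B$, or $f_C$ — this is exactly what the careful choice of the nine subsets $A_i,B_i,C_i$ and the three linear pieces is designed to guarantee, and checking it is a finite case-by-case verification using the defining inequalities of the regions. Once we know $U$ restricted to $\mathcal X_S$ agrees with a linear functional $\E^{P_S}[\,\cdot\,\id_S]-\E^{Q_S}[\,\cdot\,\id_{S^c}]$ style expression (in fact, agrees with $\ex^{P_S,Q_S}$ on the relevant cone), disco-additivity on $\mathcal X_S$ is automatic, and superadditivity follows by comparing the linear pieces across regions (the weights $(1,2,2)/5$, $(1,2,1)/4$, $(1,1,1)/3$ must be checked to be ordered so that the concatenation is superadditive). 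I would spell out one or two representative region transitions and assert the rest by symmetry of the cyclic structure.

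For statement (i): by (ii) the preference relation satisfies the hypotheses of Theorem~\ref{th-main} \emph{were $n\ge 4$}, but $n=3$; the content is that no duet expectile $\ex^{P,Q}_\alpha$ can represent $U$. The cleanest route is to exhibit three acts on which $U$ takes values inconsistent with any single pair $(P,Q)$: pick acts landing in regions $A$, $B$, $C$ respectively with the same disappointment event, compute that $U$ restricted to $\mathcal X_S$ for that $S$ would force $(P_S,Q_S)$ to have one set of values from the $A$-piece and a different set from the $B$-piece, contradicting the uniqueness of $(P,Q,\alpha)$ that any duet-expectile representation would entail (Lemma~\ref{prop-uniquedef}). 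Concretely I would show that the induced measures on the disappointment side differ depending on which region a normalized act sits in, so no global $(P,Q)$ works. The main obstacle I anticipate is the bookkeeping in statement (ii): verifying that $f$ is well-defined, continuous, strictly increasing, and — crucially — that each slice $\mathcal X_S$ falls inside a single linear region, requires patiently checking the boundary inequalities $x+2y\gtrless 3z$, $y+z\gtrless 2x$, $x+z\gtrless 2y$ against all sign patterns of the coordinates; this is elementary but error-prone, and getting the superadditivity inequality between the three different averaging weights to come out in the right direction is the one place where the construction could silently fail.
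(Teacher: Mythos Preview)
Your plan for part~(ii) is essentially the paper's argument: reduce Axiom~\ref{ax:DA} (via translation invariance and positive homogeneity) to checking superadditivity of $U$ together with additivity on disco pairs; obtain disco-additivity from the fact that disco acts lie in the same region $A$, $B$, or $C$ (which are convex cones on which $f$ is linear); and obtain superadditivity from the pointwise inequality $f_S\ge f$ for each $S\in\{A,B,C\}$, which is the precise form of your ``compare the linear pieces'' remark. The bookkeeping you flag is indeed the content of the check, but your outline matches the paper.

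For part~(i), however, there is a genuine gap. Your proposed route---``pick acts landing in regions $A$, $B$, $C$ respectively with the \emph{same} disappointment event''---is inconsistent with what you just argued in~(ii): acts with the same disappointment event (normalized to $U=0$) lie in the \emph{same} linear region, so you cannot get conflicting $(P_S,Q_S)$ constraints that way. The correct move, and what the paper does, is to vary the disappointment event. Assuming $U=\ex^{P,Q}$ and writing $p_i=P(i)$, $q_i=Q(i)$, one picks acts in the subregions $A_2$, $A_3$, $B_2$, $C_2$ (each corresponding to a \emph{different} pattern of which coordinates lie above or below the certainty equivalent) and reads off, for each, a ratio constraint among the $p_i$'s and $q_j$'s from the defining equation $\E^P[(X-c)_+]=\E^Q[(c-X)_+]$. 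These four constraints---$p_1{:}q_2{:}q_3=1{:}2{:}2$, $p_1{:}p_2{:}q_3=1{:}2{:}2$, $p_1{:}q_2{:}p_3=1{:}2{:}1$, $q_1{:}p_2{:}p_3=1{:}1{:}1$---are mutually incompatible (the first two force $p_2=q_2$, while the last two force $q_2=2p_3=2p_2$). So the contradiction comes from comparing \emph{different} disappointment events, not the same one; invoking Lemma~\ref{prop-uniquedef} or the uniqueness in Theorem~\ref{th-main} is neither needed nor directly applicable here.
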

\begin{proof}
(i) We assume by contradiction that $U=\ex^{P,Q}$ for some $P,Q\in\mathcal M$. Denote by $p_i=P(i)$ and $q_i=Q(i)$ for $i=1,2,3$. For $x,y,z\in\R$, let $X\in\mathcal X$ be such that $X(1)=x$, $X(2)=y$ and $X(3)=z$. If $(x,y,z)\in A_2$, then we have $x\ge f_A(x,y,z)$ and $y,z\le f_A(x,y,z)$, and hence, 
\begin{align*}
	p_1(x-f_A(x,y,z))=q_2(f_A(x,y,z)-y)+q_3(f_A(x,y,z)-z),~~(x,y,z)\in A_2,
\end{align*}
which implies 
\begin{align*}
	f_A(x,y,z)=\frac{p_1x+q_2y+q_3z}{p_1+q_2+q_3},~~(x,y,z)\in A_2.
\end{align*}
Hence, we have 
$$
(a)~p_1:q_2:q_3=1:2:2.
$$
Similarly, by considering $(x,y,z)$ on $A_3$, $B_2$ and $C_2$, we respectively obtain
\begin{align*}
	(b)~p_1:p_2:q_3=1:2:2;~(c)~p_1:q_2:p_3=1:2:1;~(d)~q_1:p_2:p_3=1:1:1.
\end{align*}
It follows from (a) and (b) that $p_2=q_2$. But (c) and (d) imply $q_2=2p_3$ and $p_2=p_3$, respectively. This yields a contradiction. Therefore, we have verified (i). 

(ii) Recall the definition of $U$ and note that $f:\R^3\to\R$ is continuous and strictly increasing. It follows that Axioms \ref{ax:SM} and \ref{ax:C} hold. It remains to verify Axiom \ref{ax:DA}. Suppose that $X_i(1)=x_i$, $X_i(2)=y_i$ and $X_i(3)=z_i$ for $i=1,2$, and $X_1,X_2$ are disco. 
By the definition of $U$,  
it is straightforward to check that $(x_i,y_i,z_i)\in S$ for $i=1,2$ with $S\in\{A,B,C\}$. Below we verify that $A$, $B$ and $C$ are all convex cones. We only consider the case of $A$ as the other two cases are similar. It suffices to show that $(x_i,y_i,z_i)\in A$ for $i=1,2$ implies $(x_1+x_2,y_1+y_2,z_1+z_2)\in A$. To see this, if $(x_i,y_i,z_i)\in A_j$ for all $i=1,2$ and some $j=1,2,3$, then $(x_1+x_2,y_1+y_2,z_1+z_2)\in A_j\subseteq A$ because $A_j$ is a convex cone for all $j=1,2,3$. If $(x_1,y_1,z_1)\in A_1$ and $(x_2,y_2,z_2)\in A_2$, it holds that $x_1+x_2\ge \max\{y_1+y_2,z_1+z_2\}$ and 
\begin{align}\label{eq-ce1}
	x_1+x_2+2(y_1+y_2)= (x_1+2y_1)+(x_2+2y_2)\ge 3z_1+3z_2=3(z_1+z_2).
\end{align}
This implies $(x_1+x_2,y_1+y_2,z_1+z_2)\in A_1\cup A_2\subseteq A$. If $(x_1,y_1,z_1)\in A_1$ and $(x_2,y_2,z_2)\in A_3$, it holds that $\min\{x_1+x_2,y_1+y_2\}\ge z_1+z_2$ and 
\begin{align}\label{eq-ce2}
y_1+y_2+z_1+z_2= (y_1+z_1)+(y_2+z_2)\le 2x_1+2x_2=2(x_1+x_2).
\end{align}
This implies $(x_1+x_2,y_1+y_2,z_1+z_2)\in A_1\cup A_3\subseteq A$.  If $(x_1,y_1,z_1)\in A_2$ and $(x_2,y_2,z_2)\in A_3$, it holds that $x_1+x_2\ge z_1+z_2$, and \eqref{eq-ce1} and \eqref{eq-ce2} both hold. This implies $(x_1+x_2,y_1+y_2,z_1+z_2)\in A_1\cup A_2\cup A_3=A$. Hence, we have concluded that $A$ is a convex cone.
Note that $f$ is a linear function on $A$, $B$ or $C$. Hence, we have $U(X_1+X_2)=U(X_1)+U(X_2)$, which shows that $U$ is additive for any disco acts. It remains to prove that $U$ satisfies superadditivity, and this is equivalent to $f(x_1+x_2,y_1+y_2,z_1+z_2)\ge f(x_1,y_1,z_1)+f(x_2,y_2,z_2)$ for any $(x_i,y_i,z_i)\in\R^3$ with $i=1,2$. To see this, through elementary analysis, one can observe that
\begin{align*}
	f_S(x,y,z)\ge f(x,y,z),~~\forall S\in\{A,B,C\},~(x,y,z)\in\R^3.
\end{align*}
Hence, for any $(x_i,y_i,z_i)\in\R^3$ with $i=1,2$, there exists $S\in\{A,B,C\}$ such that
\begin{align*}
	f(x_1+x_2,y_1+y_2,z_1+z_2)&=f_S(x_1+x_2,y_1+y_2,z_1+z_2)\\
	&=f_S(x_1,y_1,z_1)+f_S(x_2,y_2,z_2)\\
	&\ge f(x_1,y_1,z_1)+f(x_2,y_2,z_2).
\end{align*}
This completes the proof of (ii).
\end{proof}

\subsection{Necessity of Axiom \ref{ax:SM} in Theorem \ref{th-main}}\label{app:SMmain}

In this section, we aim to clarify that Theorem \ref{th-main} fails if 
Axiom \ref{ax:SM} is replaced by the weaker axiom of monotonicity:
\renewcommand\theaxiom{M}
\begin{axiom}[Monotonicity] 
	\label{ax:M} 
	For $X,Y\in \mathcal X$, $X\ge Y$ implies $X\succsim Y$.
\end{axiom} 
To this end, we will give a counter-example of the preference relation $\succsim$, which satisfies Axioms \ref{ax:M}, \ref{ax:C} and \ref{ax:DA} and can not be represented by a duet expectile, in the case $\Omega=[n]$ with $n\ge 4$. We present the following example of a functional $U$, and in Proposition \ref{prop:ceSM} we show that a preference relation $\succsim$ represented by $U$ is a desirable counter-example.


\begin{example}
Let $f \colon \R^3\to \R$ be the function defined in \eqref{eq-cefun}. Define $g \colon \R^n\to \R$ as a cylindrical extension of $f$, i.e.,
\begin{align*}
g(x_1,\dots,x_n)=f(x_1,x_2,x_3),~~\forall (x_1,\dots,x_n)\in \R^n.
\end{align*}
Further, define $U:\mathcal X\to \R$ as $U(X)=g(X(1),\dots,X(n))$. 
\end{example}

\begin{proposition}\label{prop:ceSM}
The following two statements hold.
\begin{itemize}
	\item[(i)] $U$ can not be expressed as a duet expectile.
	\item[(ii)] The preference relation represented by $U$ satisfies Axioms \ref{ax:M}, \ref{ax:C} and \ref{ax:DA}.
\end{itemize}
\end{proposition}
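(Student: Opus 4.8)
The plan is to reuse as much as possible of the analysis already carried out for $\Omega = [3]$ in Proposition \ref{prop:cen4}, exploiting the fact that $U$ on $\R^n$ is the cylindrical extension of the three-dimensional function $f$. For part (i), I would argue by contradiction: suppose $U = \ex^{P,Q}$ for some $P,Q \in \mathcal M$ on $\X = \R^n$. Restricting attention to acts $X$ that depend only on the first three coordinates — i.e., $X = (x,y,z,t,t,\dots,t)$ for a suitable fixed $t$ placed so that coordinates $4,\dots,n$ never become the relevant extremal entries — one can read off a duet-expectile representation of $f$ on a full-dimensional subset of $\R^3$. More directly, since $g$ (hence $U$) ignores coordinates $4,\dots,n$, strict positivity fails: $U(X + \varepsilon \id_{\{4\}}) = U(X)$, so in any candidate representation $\ex^{P,Q}$ one must have $P(\{4\}) = Q(\{4\}) = 0$. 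But then $\ex^{P,Q}$ restricted to acts supported on $\{1,2,3\}$ equals the three-dimensional duet expectile $\ex^{P',Q'}$ with $P',Q'$ the restrictions of $P,Q$ to $\{1,2,3\}$, which would give a duet-expectile representation of $f$ on $\R^3$, contradicting Proposition \ref{prop:cen4}(i). This is the cleanest route and avoids redoing the ratio computations.

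For part (ii), Axiom \ref{ax:C} is immediate since $g$ is continuous (being a composition of the continuous $f$ with a coordinate projection), and Axiom \ref{ax:M} (non-strict monotonicity) follows because $f$ is non-decreasing in each argument, hence so is $g$; note that strict monotonicity genuinely fails precisely because $g$ is constant in coordinates $4,\dots,n$, which is the whole point of the counter-example. The substantive part is Axiom \ref{ax:DA}. Here I would transfer the structure from the proof of Proposition \ref{prop:cen4}(ii): the sets $A,B,C \subseteq \R^3$ are convex cones covering $\R^3$, $f$ is linear on each, and $f_S \ge f$ pointwise for each $S \in \{A,B,C\}$. Pulling these back through the projection $\pi\colon \R^n \to \R^3$, the sets $\widehat A = \pi^{-1}(A)$, $\widehat B = \pi^{-1}(B)$, $\widehat C = \pi^{-1}(C)$ are convex cones covering $\R^n$, $g$ is linear on each, $g_S := f_S \circ \pi$ satisfies $g_S \ge g$, and $g$ is determined on the common refinement exactly as $f$ is. Consequently $U$ is superadditive on $\X$ (for any $X_1, X_2$, pick $S$ with $g_S(X_1+X_2) = g(X_1+X_2)$ and use linearity of $g_S$ plus $g_S \ge g$) and additive on any pair of disco acts (disco acts have the same certainty equivalent ordering of coordinates, hence land in the same cone $\widehat S$, where $g$ is linear).

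The remaining point is to check that disco acts under $U$ do in fact fall in a common cone $\widehat S$ — this requires verifying that the disappointment event $\{X < U(X)\}$, via the explicit form of $g$, determines (or is determined by) which of $\widehat A, \widehat B, \widehat C$ contains $X$, at least finely enough that two disco acts share a cone. Concretely, for $X$ with profile in $\widehat S$, one computes $U(X) = g_S(X)$, an explicit weighted average, and reads off $\{X < U(X)\}$; matching disappointment events then forces the same cone, as in the three-dimensional case. I expect \textbf{this verification — that the disco relation is compatible with the cone decomposition} — to be the main obstacle, exactly as it was (implicitly) in Proposition \ref{prop:cen4}(ii); but since it was already handled there for $f$, and $g$ differs only by an irrelevant projection, it should go through with only notational changes. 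Once additivity-on-disco-acts and superadditivity are in hand, Axiom \ref{ax:DA} follows precisely as in the proof of Theorem \ref{th-main}'s sufficiency direction: for disco $X$, $Y$ with $Y' \sim Y$, write $U(X+Y') \ge U(X) + U(Y') = U(X) + U(Y) = U(X+Y)$.
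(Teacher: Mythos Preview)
Your proposal is correct. Part (ii) is essentially identical to the paper's argument, which simply says the claim ``follows from Proposition \ref{prop:cen4}(ii) and the fact that $g$ is a cylindrical extension of $f$''; you have merely made explicit the observation that if $X_1,X_2$ are disco in $\R^n$ then $\pi(X_1),\pi(X_2)$ are disco in $\R^3$ (same disappointment event restricted to $\{1,2,3\}$), after which additivity-on-disco and superadditivity transfer verbatim from $f$ to $g=f\circ\pi$.

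For part (i) you take a different and arguably cleaner route than the paper. The paper constructs four families $D_1,\dots,D_4\subseteq\R^n$ in which coordinates $4,\dots,n$ are set equal to $t=f_S(x_1,x_2,x_3)=U(X)$, so that these coordinates contribute nothing to $\E^P[(X-t)_+]$ or $\E^Q[(t-X)_+]$; the four ratio conditions $(a)$--$(d)$ from Proposition \ref{prop:cen4}(i) then reappear unchanged and give the contradiction. Your approach instead first observes $U(\pm\varepsilon\,\id_{\{i\}})=0$ for $i\ge 4$, forcing $P(i)=Q(i)=0$ in any putative representation $\ex^{P,Q}$; once the extra coordinates carry zero mass, $\ex^{P,Q}$ depends only on the first three coordinates and coincides with $\ex^{P',Q'}$ on $\R^3$, contradicting Proposition \ref{prop:cen4}(i) directly. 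Your argument uses \ref{prop:cen4}(i) as a black box and avoids re-deriving the ratio equations; the paper's argument is more hands-on but self-contained. Both are valid.
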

\begin{proof}
 
(i) We assume by contradiction that $U=\ex^{P,Q}$ for some $P,Q\in\mathcal M$. For $x_1,\dots,x_n\in\R$, let $X\in\mathcal X$ be such that $X(i)=x_i$ for $i\in [n]$. Construct four subsets in $\R^n$ as follows:
$$
D_1=\{(x_1,x_2,x_3,t,\dots,t): (x_1,x_2,x_3)\in A_2,~t=f_A(x_1,x_2,x_3)\},
$$
$$
D_2=\{(x_1,x_2,x_3,t,\dots,t): (x_1,x_2,x_3)\in A_3,~t=f_A(x_1,x_2,x_3)\},
$$
$$
D_3=\{(x_1,x_2,x_3,t,\dots,t): (x_1,x_2,x_3)\in B_2,~t=f_B(x_1,x_2,x_3)\},
$$
$$
D_4=\{(x_1,x_2,x_3,t,\dots,t): (x_1,x_2,x_3)\in C_2,~t=f_C(x_1,x_2,x_3)\}.
$$
Similar to Proposition \ref{prop:cen4} (i),
considering $U$ on these four sets yields a contradiction. This completes the proof of (i).
The statement (ii) follows from Proposition \ref{prop:cen4} (ii) and the fact that $g:\R^n\to\R$ is a cylindrical extension of the function $f:\R^3\to\R$.
\end{proof}

\subsection{Duet expectiles and strong risk aversion}\label{app:jointSRA}

For $P,Q\in\mathcal M$, the next example illustrates that $\ex^{P,Q}$ does not satisfy joint strong risk aversion under $(\widetilde{P},\widetilde{Q})$.

\begin{example}
Let $([0,1],\mathcal B([0,1]),L)$ be the standard Borel space, where $L$ is the Lebesgue measure. 
Denote by $A_i=[(i-1)/4,i/4)$ for $i=1,2,3,4$, which are intervals in $[0,1]$. Let
$P$ be a measure on $([0,1],\mathcal B([0,1]))$ such that the Radon-Nikodym derivative $\d P/\d L$ satisfies $\d P/\d L (\omega)=i$ if $\omega\in A_i$. Let $Q=4L$. Obviously, $P\ac Q$ and
$P\le Q$ as $0<\d P/\d L\le \d Q/\d L=4$. Define two acts as follows:

\begin{center}
\begin{tabular}[c]{c|c|c|c|c}
 & $A_1$ & $A_2$ & $A_3$ & $A_4$ \\ \hline
$X$ & 1 & 2 & 4 & 3 \\ \hline
$Y$ & 2 & 1 & 3 & 4 
\end{tabular}
\end{center}
It is clear that $X\laweq_{\widetilde Q} Y$, and hence, $X\ge_{\rm ssd}^{\widetilde Q} Y$. 
One can also check that $\E^P[(x-X)_+]\le \E^P[(x-Y)_+]$ for all $x\in\R$, and this implies $X\ge_{\rm ssd}^{\widetilde P} Y$ (see e.g., Theorem 4.A.2 of \cite{SS07}). By   direct calculations, we obtain
\begin{align*}
\ex^{P,Q}(X)=\frac{12}{5}=2.4~~\mbox{and}~~\ex^{P,Q}(Y)=\frac{37}{15}>2.466,
\end{align*}
and it follows $\ex^{P,Q}(X)<\ex^{P,Q}(Y)$. Therefore, $\ex^{P,Q}$ is not strongly risk averse jointly under $(\widetilde P,\widetilde Q)$.
\end{example}

\subsection{Proposition \ref{prop-cv} fails for a space of two atoms}\label{sec:example}

In this section, 
we will show that the conclusion in Proposition \ref{prop-cv} is not valid in the case $\Omega=\{1,2\}$. Moreover, we obtain the corresponding statement of Proposition \ref{prop-cv}  in this case. 





\begin{proposition}
Let $\Omega=\{1,2\}$. 
For $P,Q\in\mathcal M$, suppose that $P(i),Q(i)>0$ for $i=1,2$.
The duet expectile $\ex^{P,Q}$ is convex if and only if $P(1)P(2)\le Q(1)Q(2)$.
\end{proposition}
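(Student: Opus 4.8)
The plan is to reduce everything to a single explicit formula on the two-atom space. Write $p_i = P(i)$, $q_i = Q(i)$ for $i=1,2$, all positive. By translation invariance and positive homogeneity of $\ex^{P,Q}$ (Proposition \ref{prop-gextrivial} (ii),(iii)), it suffices to understand $\ex^{P,Q}$ on a normalized family of acts; concretely, for an act $X$ with $X(1)=a$ and $X(2)=b$, the value $y=\ex^{P,Q}(X)$ is determined by $\E^P[(X-y)_+]=\E^Q[(y-X)_+]$, which on $\{1,2\}$ becomes a piecewise-linear equation. If $a\ge b$ then for $y\in[b,a]$ one has $p_1(a-y)=q_2(y-b)$, giving $y = (p_1 a + q_2 b)/(p_1+q_2)$; symmetrically if $b\ge a$, $y=(q_1 a + p_2 b)/(q_1+p_2)$. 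So $\ex^{P,Q}$ is a positively homogeneous, continuous function of $(a,b)$ that is linear on each of the two half-planes $\{a\ge b\}$ and $\{a\le b\}$, with the two linear pieces agreeing on the diagonal $a=b$ (both give $y=a$).

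The first step is to invoke Lemma \ref{lm-setcx}: $\ex^{P,Q}$ is monotone, translation-invariant and positively homogeneous (Proposition \ref{prop-gextrivial}), so it is convex if and only if its acceptance set $\mathcal A = \{X : \ex^{P,Q}(X)\le 0\}$ is a convex cone. Using the explicit formula above, $\mathcal A$ is the union of the two half-cones $\{a\ge b,\ p_1 a + q_2 b \le 0\}$ and $\{a\le b,\ q_1 a + p_2 b \le 0\}$ (intersected with the respective half-planes), and I want to determine exactly when this union is convex. Each piece is itself a convex cone (a wedge between two rays through the origin in $\R^2$), and their union is a cone; it is convex precisely when the union is again a wedge, i.e. when the ``outer'' boundary rays of the two pieces, together with the diagonal rays where they meet, line up to bound a single convex wedge. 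Equivalently, convexity of the union fails exactly when the diagonal direction $a=b<0$ is an interior point of $\mathcal A$ but lies on the boundary between the two pieces in a way that makes $\mathcal A$ non-convex there — this is the one place where the two linear pieces are glued, and it is where a kink can destroy convexity. Checking convexity thus amounts to comparing the slope of the line $p_1 a + q_2 b = 0$ with the slope of $q_1 a + p_2 b = 0$ along the diagonal; carrying out this comparison (a short computation with the boundary rays $b/a = -p_1/q_2$ on one side and $b/a = -q_1/p_2$ on the other, plus the common diagonal ray $b/a=1$) yields the condition $p_1 p_2 \le q_1 q_2$, i.e. $P(1)P(2)\le Q(1)Q(2)$.

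An equivalent and perhaps cleaner way to run the last step, which I would actually write out, is to test convexity directly on pairs of acts straddling the diagonal: take $X$ with $\ex^{P,Q}(X)=0$ lying in the interior of one half-plane and $Y$ with $\ex^{P,Q}(Y)=0$ in the other, form $X+Y$, and compute $\ex^{P,Q}(X+Y)$ (or its sign) — mimicking the construction in the proof of Proposition \ref{prop-cv}. This produces, after elementary algebra, a quantity whose sign is governed by $q_1 q_2 - p_1 p_2$, establishing that $\mathcal A$ is a convex cone iff $p_1 p_2 \le q_1 q_2$. I expect the main obstacle to be purely bookkeeping: correctly identifying which pairs of acts detect the failure of convexity (the ones whose $\ex^{P,Q}$-zero level sets lie in different linear pieces) and handling the sign/normalization carefully, since on a two-atom space the ``partition into three sets'' trick used for $n\ge 3$ in Proposition \ref{prop-cv} is unavailable and the diagonal gluing is the only source of non-linearity. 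Once the right test acts are chosen the computation is a one-liner, and the concavity statement — if one wanted it — would follow by the duality $\ex^{P,Q}(X) = -\ex^{Q,P}(-X)$ of Proposition \ref{prop-gextrivial} (iv), swapping the roles of $P$ and $Q$ and noting that $p_1p_2\le q_1q_2$ is symmetric only in a degenerate sense, so in fact concavity would require $q_1 q_2 \le p_1 p_2$; but the statement as given only asks for the convexity characterization, which is what the argument above delivers.
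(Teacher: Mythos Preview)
Your approach is essentially the paper's: reduce via Lemma \ref{lm-setcx} to showing the acceptance set $\mathcal A$ is a convex cone, describe $\mathcal A$ explicitly as a union of planar wedges, and test closure under addition by pairing points from the two ``outer'' boundary rays. The paper decomposes $\mathcal A$ into three pieces $A=\{x,y\le 0\}$, $B=\{x>0,\,y<0,\,p_1x+q_2y\le0\}$, $C=\{x<0,\,y>0,\,q_1x+p_2y\le0\}$ rather than your two, and for one direction uses exactly the boundary points $(q_2,-p_1)\in\mathcal A$ and $(-p_2,q_1)\in\mathcal A$, checking that their sum lies in $\mathcal A$; for the other direction it does a short case analysis on signs to verify $B+C\subseteq\mathcal A$. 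Your two suggested routes (slope comparison of the boundary lines; testing specific pairs straddling the diagonal) amount to the same thing.

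One substantive correction: the inequality goes the other way. You assert that the computation yields $p_1p_2\le q_1q_2$, matching the displayed statement, but the statement itself carries a typo --- the paper's own proof opens with ``it is equivalent to prove the claim that $p_1p_2\ge q_1q_2$ if and only if $\mathcal A$ is a convex cone'' and works with $\ge$ throughout. This is the direction consistent with Proposition \ref{prop-cv} (convex iff $P\ge Q$, which on two atoms implies $p_1p_2\ge q_1q_2$). If you actually carry out your test, summing $(q_2,-p_1)+(-p_2,q_1)=(q_2-p_2,\,q_1-p_1)$ and imposing membership in $\mathcal A$, the case analysis forces $p_1p_2\ge q_1q_2$, not $\le$; concavity, via the duality $\ex^{P,Q}(X)=-\ex^{Q,P}(-X)$, then corresponds to $p_1p_2\le q_1q_2$. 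So your method is right, but the stated conclusion (and the concavity remark at the end) has the sign flipped.
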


\begin{proof}
Denote by $p_i=P(i)$ and $q_i=Q(i)$ for $i=1,2$.
We use $(x,y)$ to represent the random variable that takes the value $x$ on $\{1\}$ and $y$ on $\{2\}$.  By Lemma \ref{lm-setcx}, it is equivalent to prove the claim that $p_1p_2\ge q_1q_2$ if and only if
\begin{align*}
\mathcal A:=\{(x,y)\in \R^2: p_1(x)_++p_2(y)_+-q_1(x)_--q_2(y)_-\le 0\}~{\rm is~a~convex~ cone}.
\end{align*}

{\bf Sufficiency}: Suppose that $\A$ is a convex cone. It is easy to see that $(q_2,-p_1)\in\A$ and $(-p_2,q_1)\in\A$. It holds that $(q_2-p_2,q_1-p_1)\in \A$. If $q_2\ge p_2$, then $q_1\le p_1$, and hence,
$p_1(q_2-p_2)+q_2(q_1-p_1)\le 0$. This implies $p_1p_2\ge q_1q_2$. If $q_2<p_2$ and $q_1\le p_1$, then $p_1p_2\ge q_1q_2$ obviously. If $q_2<p_2$ and $q_1> p_1$, then $p_2(q_1-p_1)+q_1(q_2-p_2)\le 0$ which implies $p_1p_2\ge q_1q_2$. Hence, we complete the proof of  ``$\Longleftarrow$".

{\bf Necessity}:
Suppose that $p_1p_2\ge q_1q_2$.
Note that $(x,y)\in\mathcal A$ means that it must be three cases: $x>0$ and $y<0$; $x<0$ and $y>0$; $x,y\le0$. 
Define $A=\{(x,y): x,y\le 0\}$,
\begin{align*}
B=\{(x,y): x>0,~y<0,~p_1 x+q_2y\le 0\}
\end{align*}
and 
\begin{align*}
C=\{(x,y): x<0,~y>0,~q_1x+p_2y\le 0\}.
\end{align*}
It holds that $\mathcal A=A\cup B\cup C$. Obviously, $A$, $B$ and $C$ are all convex cones. Moreover, for any $(x_1,y_1)\in A$ and $(x_2,y_2)\in B\cup C$, we have $(x_1+x_2,y_1+y_2)\in\mathcal A$. Therefore, it suffices to verify that for any $(x_1,y_1)\in B$ and $(x_2,y_2)\in C$, we have $(x_1+x_2,y_1+y_2)\in\A$. Let $x_1,x_2,y_1, y_2>0$ be such that $(x_1,-y_1)\in B$ and $(-x_2,y_2)\in C$. 
It holds that
\begin{align}\label{eq-cvtwo}
	p_1x_1\le q_2y_1~~{\rm and}~~p_2 y_2\le q_1 x_2.
\end{align}
Define $f(x,y)=p_1(x)_++p_2(y)_+-q_1(x)_--q_2(y)_-$ on $\R^2$. We aim to verify $f(x_1-x_2,y_2-y_1)\le 0$ which means that $(x_1-x_2,y_2-y_1)\in\A$.
If $x_1\ge x_2$, it follows from \eqref{eq-cvtwo} and $p_1p_2\ge q_1q_2$ that 
\begin{align*}
y_1\ge \frac{p_1}{q_2}x_1\ge \frac{p_1}{q_2}x_2\ge \frac{p_1}{q_2}\frac{p_2}{q_1}y_2\ge y_2.
\end{align*}
Therefore,
\begin{align*}
f(x_1-x_2,y_2-y_1)&=p_1(x_1-x_2)+q_2(y_2-y_1)
=p_1 x_1-q_2y_1+(q_2 y_2-p_1x_2)\\
&\le q_2y_2-p_1x_2
\le \frac{q_1q_2}{p_2}x_2-p_1x_2
=p_1x_2\left(\frac{q_1q_2}{p_1p_2}-1\right)\le 0,
\end{align*}
where we have used \eqref{eq-cvtwo} in the first and second inequalities. Hence, we have $(x_1-x_2,y_2-y_1)\in\A$. If $x_1<x_2$ and $y_1\ge y_2$, then $(x_1-x_2,y_2-y_1)\in\A$ obviously. If $x_1<x_2$ and $y_1<y_2$, then we have
\begin{align*}
f(x_1-x_2,y_2-y_1)
&=q_1(x_1-x_2)+p_2(y_2-y-1)
=q_1x_1-p_2y_1+(p_2y_2-q_1x_2)\\
&\le q_1x_1-p_2y_1
\le \frac{q_1q_2}{p_1}y_1-p_2y_1
=p_2 y_1\left(\frac{q_1q_2}{p_1p_2}-1\right)\le 0,
\end{align*}
where the first two inequalities follow from \eqref{eq-cvtwo}.
Hence, we have $(x_1-x_2,y_2-y_1)\in\A$. Therefore, we have verified the direction ``$\Longrightarrow$".
\end{proof}

\subsection{Theorem \ref{th:riskaverse1} and Proposition \ref{prop-cv} fail without \ref{ax:SM}}\label{app:SMprop}



Let $\Omega=\{1,\dots,n\}$ with $n\ge 4$ and $\mathcal F$ be the power set. 
Define 
\begin{align}\label{eq-DE}
\mathcal D=\{R\in \mathcal M: R(1)=R(\Omega)\}~~
\end{align}
Any element in $\mathcal D$ concentrates all mass on one point.
Considering $\ex^{P,Q}$ with $P\in \mathcal D$ or $Q\in \mathcal D$, it follows from Proposition \ref{prop-finSM} that $\ex^{P,Q}$ does not satisfy \ref{ax:SM}.
We have the next proposition for such $\ex^{P,Q}$.

\begin{proposition}\label{prop-ceSM}
Let $\mathcal D$ be defined in \eqref{eq-DE}. The following statements hold.
\begin{itemize}
\item[(i)] For $P\in\mathcal D$ and $Q\in\mathcal M$, $\ex^{P,Q}$ is concave and the preference relation represented by $\ex^{P,Q}$ is weakly risk averse under $\widetilde P$.


\item[(ii)] For $Q\in\mathcal D$ and $P\in\mathcal M$, $\ex^{P,Q}$ is convex and the preference relation represented by $\ex^{P,Q}$ is weakly risk seeking under $\widetilde Q$.

\end{itemize}
\end{proposition}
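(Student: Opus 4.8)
\textbf{Proof plan for Proposition \ref{prop-ceSM}.}

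The plan is to treat the two statements symmetrically using part (iv) of Proposition \ref{prop-gextrivial}, namely $\ex^{P,Q}(X)=-\ex^{Q,P}(-X)$, so that it suffices to prove (i) carefully and then deduce (ii) by this sign-flip duality (concavity of $\ex^{P,Q}$ becomes convexity of $\ex^{Q,P}$, and weak risk aversion under $\widetilde P$ becomes weak risk seeking under $\widetilde Q$). So the core work is part (i). Let $P\in\mathcal D$, say $P(1)=P(\Omega)=:p>0$ and $P(i)=0$ for $i\neq 1$, and let $Q\in\mathcal M$ be arbitrary.

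First I would compute $\ex^{P,Q}$ explicitly. By definition, $y=\ex^{P,Q}(X)$ is (an) $y$ with $\E^P[(X-y)_+]=\E^Q[(y-X)_+]$, i.e.\ $p\,(X(1)-y)_+=\E^Q[(y-X)_+]$. The left side is a nonincreasing function of $y$ that is strictly decreasing on $(-\infty,X(1))$ and zero for $y\ge X(1)$; the right side is nondecreasing in $y$, zero for $y\le \essinf X$. Hence there is a well-defined balancing value, and one checks it lies in $[\essinf_{\p}X, X(1)]$ — in particular $\ex^{P,Q}(X)\le X(1)=\E^{P}[X]/p\cdot p$; more precisely $\E^{\widetilde P}[X]=X(1)$ since $\widetilde P=\delta_{\{1\}}$ after normalization, so $\ex^{P,Q}(X)\le X(1)=\E^{\widetilde P}[X]$, which is exactly weak risk aversion under $\widetilde P$. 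This is the short part. For concavity, I would verify it directly from the acceptance-set criterion of Lemma \ref{lm-setcx}: $\ex^{P,Q}$ is monotone, translation invariant and positively homogeneous (Proposition \ref{prop-gextrivial}), so $\ex^{P,Q}$ is concave iff its acceptance set $\mathcal A=\{X:\ex^{P,Q}(X)\le 0\}$ is a convex cone, where, writing out the defining inequality, $\mathcal A=\{X: p\,(X(1))_+\le \E^Q[(-X)\id_{\{X<0\}}]\}=\{X: p\,(X(1))_+\le -\E^Q[X\wedge 0]\}$. Since $X\mapsto (X(1))_+$ is convex and positively homogeneous while $X\mapsto -\E^Q[X\wedge 0]=\E^Q[(-X)_+]$ is also convex and positively homogeneous, $\mathcal A$ is the sublevel set $\{g\le 0\}$ of the convex positively homogeneous function $g(X):=p(X(1))_+ - \E^Q[(-X)_+]$ — hence a convex cone. (One only needs $g$ convex, not finiteness issues, on the finite space $\X=\R^n$, so no subtlety arises.) This gives concavity.

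Alternatively, and perhaps more transparently, I note that concavity also follows from Proposition \ref{prop-cv}: $P\in\mathcal D$ means $P(i)=0<Q(i)$ for $i\ge 2$ while $P(1)$ may be compared with $Q(1)$, but a direct computation shows $P\le Q$ is \emph{not} in general needed — instead I would just confirm the computation $\ex^{P,Q}(X)=\ex^{R,H}(X)$ where only the single atom $\{1\}$ carries $P$-mass, reduces the problem to a one-dimensional balance that is manifestly concave in $X$. I expect the acceptance-set argument above to be the cleanest and I would present that one. Then (ii) follows immediately: apply (i) to the pair $(Q,P)$ with $Q\in\mathcal D$ to get $\ex^{Q,P}$ concave and weakly risk averse under $\widetilde Q$, and convert via $\ex^{P,Q}(X)=-\ex^{Q,P}(-X)$, using that the negative of a concave functional is convex and that $\E^{\widetilde Q}[-X]\succsim_{\text{(risk-averse for }\ex^{Q,P})} -X$ translates into $X\succsim \E^{\widetilde Q}[X]$ for $\ex^{P,Q}$, i.e.\ weak risk seeking under $\widetilde Q$.

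The main obstacle is essentially bookkeeping rather than depth: making sure the reduction "$P$ concentrated on one atom" is handled correctly when $\ex^{P,Q}$ is defined via the infimum in \eqref{eq-genex} (uniqueness of the balancing $y$ can fail in degenerate configurations), and checking that the convex-cone argument for $\mathcal A$ does not secretly use strict monotonicity. Since $\Omega$ is finite here, all functionals are real-valued and continuous, so these are routine verifications; the duality step then makes (ii) free.
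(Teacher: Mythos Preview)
Your duality idea (deduce (ii) from (i) via $\ex^{P,Q}(X)=-\ex^{Q,P}(-X)$) and your argument for weak risk aversion under $\widetilde P$ are both fine. The concavity argument, however, contains a genuine error that makes the proof fail.

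First, Lemma~\ref{lm-setcx} says that a monotone, translation-invariant, positively homogeneous $U$ is \emph{convex} (not concave) if and only if $\{X:U(X)\le 0\}$ is a convex cone. So even if your set $\mathcal A$ were a convex cone, that would establish convexity of $\ex^{P,Q}$, not concavity. To get concavity via Lemma~\ref{lm-setcx} you must either (a) prove $\ex^{Q,P}$ is convex and then invoke the duality you already mentioned, or (b) show the super-level set $\{X:\ex^{P,Q}(X)\ge 0\}$ is a convex cone.

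Second, your claim that $g(X)=p(X(1))_+-\E^Q[X_-]$ is convex is wrong: both $X\mapsto p(X(1))_+$ and $X\mapsto\E^Q[X_-]$ are convex, so $g$ is a \emph{difference} of convex functions, which need not be convex. Concretely, take $n=2$, $P=\delta_{\{1\}}$, $Q(1)=Q(2)=1$, $X=(2,-2)$, $Y=(-1,2)$. Then $g(X)=0$, $g(Y)=-1$, but $g\bigl(\tfrac12(X+Y)\bigr)=g((\tfrac12,0))=\tfrac12>-\tfrac12$. In fact $X,Y\in\mathcal A$ while $X+Y=(1,0)\notin\mathcal A$, so $\mathcal A$ is \emph{not} a convex cone --- consistent with $\ex^{P,Q}$ being concave rather than convex here.

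The paper proves (ii) directly by showing the acceptance set of $\ex^{P,Q}$ (with $Q\in\mathcal D$) is a convex cone, but this requires a case split: the set is decomposed as $A\cup B$ according to the sign of $x_1$, each piece is shown to be a convex cone, and then one checks by hand that sums of elements from $A$ and $B$ remain in $\mathcal A$ (using crucially that $Q$-mass sits only on state $1$). Your route would need the analogous case analysis for the acceptance set of $\ex^{Q,P}$; a one-line ``$g$ is convex'' does not work.
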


\begin{proof}

We only prove (ii) as (i) follows the same proof.

(ii) Suppose that $Q\in\mathcal D$ and $P\in\mathcal M$. By Lemma \ref{lm-setcx}, it suffices to verify that the following acceptance set is a convex cone:
\begin{align*}
\A=\left \{(x_1,\dots,x_n)\in\R^n: \sum_{i=1}^n p_i(x_i)_+-q_1 (x_1)_-\le 0\right\},
\end{align*}
where we have denoted by $p_i=P(i)$ and $q_i=Q(i)$ for $i=1,\dots,n$. Define $I=\{i\in\{2,\dots,n\}: p_i>0\}$. Then $\A$ can be divided into the following two sets:
\begin{align}\label{eq-oneA}
A=\left \{(x_1,\dots,x_n)\in\R^n: x_1\ge 0,~p_1 x_1+\sum_{i\in I}p_i(x_i)_+\le 0 \right\}
\end{align}
and
\begin{align}\label{eq-oneB}
B=\left \{(x_1,\dots,x_n)\in\R^n:
x_1<0,~q_1x_1+\sum_{i\in I}p_i(x_i)_+\le 0\right \}.
\end{align}
Obviously, $A$ and $B$ are both convex cones. It remains to verify that $\textbf x:=(x_1,\dots,x_n) \in A$ and $\textbf y=(y_1,\dots,y_n)\in B$ imply $\textbf x+\textbf y\in \mathcal A$. 
To see this,
define $f(x_1,\dots,x_n)=\sum_{i=1}^n p_i(x_i)_+-q_1 (x_1)_-$.
Note that $\textbf x\in A$ implies $x_i=0$ for $i\in I$. If $p_1>0$, then $x_1=0$ as $\textbf x\in A$, and hence, 
\begin{align*}
f(\textbf x+\textbf y)
=\sum_{i=1}^n p_i(x_i+y_i)_+-q_1(x_1+y_1)_-
=\sum_{i\in I}p_i (y_i)_+-q_1(y_1)_-\le 0,
\end{align*}
where the last step holds as $\textbf y\in B$.
If $p_1=0$, then we have
\begin{align*}
f(\textbf x+\textbf y)
=\sum_{i=1}^n p_i(x_i+y_i)_+-q_1(x_1+y_1)_-
=\sum_{i\in I}p_i (y_i)_+-q_1(y_1)_-\le 0.
\end{align*}
Therefore, we have concluded that $\mathcal A$ is a convex cone, and this implies the convexity of $\ex^{P,Q}$. Next we aim to show that $\ex^{P,Q}(X)\ge \E^{\widetilde Q}[X]$ for all $X\in\X$. Note that $\E^{\widetilde Q}[X]=X(1)$, and we have
\begin{align*}
&~~~~~\E^P[(X-\E^{\widetilde Q}[X])_+]-\E^Q[(\E^{\widetilde Q}[X]-X)_+]\\
&=\E^{P-Q}[(X-\E^{\widetilde Q}[X])_+]-\E^{\widetilde Q}[X]Q(\Omega)+\E^Q[X]\\
&=\E^{P-Q}[(X-\E^{\widetilde Q}[X])_+]
=\sum_{i=2}^{n}(X(i)-\E^{\widetilde Q}[X])_+P(i)\ge 0.
\end{align*}
By the definition of $\ex^{P,Q}(X)$, it holds that $\ex^{P,Q}(X)\ge\E^{\widetilde Q}[X]$. This proves part (iii).
\end{proof}

Based on Proposition \ref{prop-ceSM}, constructing counter-examples for Theorem \ref{th:riskaverse1} and Proposition \ref{prop-cv} without \ref{ax:SM} becomes straightforward. 

\begin{example}
Let $\Omega=\{1,2,3,4\}$. Let $P,Q\in\mathcal M$ be such that $P(1)=2$, $P(i)=0$ for $i=2,3,4$ and $Q(j)=1$ for $j\in[4]$. It is clear that $P\in\mathcal D$, $P\not\le Q$ and $\ex^{P,Q}$ does not satisfy \ref{ax:SM}. By Proposition \ref{prop-ceSM} (i), we know that $\ex^{P,Q}$ is concave and the preference relation represented by $\ex^{P,Q}$ is weakly risk averse under $\widetilde P$. This serves as a counter-example for Theorem \ref{th:riskaverse1} ((i) or (iii) $\Rightarrow$ (iv)) and the necessity of Proposition \ref{prop-cv} without \ref{ax:SM}. 

\end{example}

\end{appendix}

\end{document}